\newenvironment{mybox}
{
\center \noindent\begin{boxedminipage}{1.0\linewidth} }
{
\end{boxedminipage}\vspace{2ex}\\
}
\newcommand{\nfrac}{\nicefrac}
\newcommand{\iprod}[1]{\langle #1\rangle}
\newcommand{\norm}[1]{\lVert#1\rVert}
\definecolor{DSgray}{cmyk}{0,0,0,0.7}
\newcommand{\Authornote}[2]{{\small\textcolor{DSgray}{\sf$<<<${  #1: #2 }$>>>$}}}
\newcommand{\savehyperref}[2]{\texorpdfstring{\hyperref[#1]{#2}}{#2}}
\newtheorem{theorem}{Theorem}[section]
\newtheorem{claim}[theorem]{Claim}
\newtheorem{proposition}[theorem]{Proposition}
\newtheorem{lemma}[theorem]{Lemma}
\newtheorem{conjecture}[theorem]{Conjecture}
\newtheorem{hypothesis}[theorem]{Hypothesis}
\theoremstyle{definition}
\newtheorem{definition}[theorem]{Definition}
\newtheorem*{cor-a}{Corollary}
\newtheorem*{claim-a}{Claim}
\newtheorem*{lemma-a}{Lemma}
\newtheorem*{theorem-a}{Theorem}
\newtheorem*{fact-a}{Fact}
\newtheorem*{question-a}{Question}
\newtheorem*{conjecture-a}{}
\newcommand{\Esymb}{{\bf E}}
\newcommand{\Psymb}{{\bf P}}
\DeclareMathOperator*{\E}{\Esymb}
\DeclareMathOperator*{\ProbOp}{\Psymb}
\renewcommand{\Pr}{\ProbOp}
\newcommand{\R}{\mathbb{R}}
\newcommand{\cA}{\mathcal A}
\newcommand{\cB}{\mathcal B}
\newcommand{\cL}{\mathcal L}
\newcommand{\cQ}{\mathcal Q}
\newcommand{\ipr}[2]{\langle #1, #2\rangle}
\newcommand{\full}{1}    %1 for full, 0 for short version.
\newcommand{\rnote}[1]{{\Authornote{Rajsekar}{#1}}}
\newcommand{\anote}[1]{{\Authornote{Aravindan}{#1}}}
\newcommand{\edge}{\{u, v\} \in E(G)}
\newcommand{\qpr}{\textsf{QP-Ratio}\xspace}
\newcommand{\nqpr}{\textsf{Normalized QP-Ratio}\xspace}
\newcommand{\bx}{{\boldsymbol{x}}}
\newcommand{\by}{{\boldsymbol{y}}}
\newcommand{\sdp}{{\sf sdp}}
\newcommand{\opt}{{\sf opt}}
\newcommand{\val}{{\sf val}}
\renewcommand{\epsilon}{\varepsilon}
\newcommand{\eps}{\epsilon}
\newcommand{\psd}{\succeq 0}
\newcommand{\ww}{\mathbf{w}}
\newcommand{\uu}{\mathbf{w}}
\newcommand{\vv}{\mathbf{v}}
\newcommand{\cc}{\mathbf{c}}
\newcommand{\vvt}{\tilde{\vv}}
\newcommand{\otilde}{\widetilde{O}}
\title{On Quadratic Programming with a Ratio Objective}
\date{}
\author{Aditya Bhaskara\thanks{Department of Computer Science, Princeton University, and Center for Computational Intractability. Supported by NSF CCF 0832797.  Email: \textsf{bhaskara@cs.princeton.edu}} \and Moses Charikar\thanks{Department of Computer Science, Princeton University, and Center for Computational Intractability. Supported by NSF CCF 0832797. Email: \textsf{moses@cs.princeton.edu}} \and Rajsekar Manokaran\thanks{Department of Computer Science, Princeton University, and Center for Computational Intractability. Supported by NSF CCF 0832797. Email: \textsf{rajsekar@cs.princeton.edu}} \and Aravindan Vijayaraghavan\thanks{Department of Computer Science, Princeton University, and Center for Computational Intractability. Supported by NSF CCF 0832797. Email: \textsf{aravindv@cs.princeton.edu}}}
\begin{document}
\maketitle
\begin{abstract}
Quadratic Programming (QP) is the well-studied problem of maximizing
over $\{-1,1\}$ values the quadratic form $\sum_{i \ne j} a_{ij} x_i
x_j$. QP captures many known combinatorial optimization problems,
and assuming the unique games conjecture, semidefinite programming
techniques give optimal approximation algorithms.
%many of these problems (assuming
%the Unique Games Conjecture). 
We extend this body of work by
initiating the study of Quadratic Programming problems where the
variables take values in the domain $\{-1,0,1\}$. 
The specific problems we study are
\begin{eqnarray*}
\qpr : \mbox{\ \ } \max_{\{-1,0,1\}^n} \frac{\sum_{i \not = j} a_{ij} x_i x_j}{\sum x_i^2},
&\mbox{\ and\ }&
\nqpr : \mbox{\ \ }  \max_{\{-1,0,1\}^n} \frac{\sum_{i \not = j} a_{ij} x_i x_j}{\sum d_i x_i^2}.\\
&& \mbox{where\ } d_i = \sum_j |a_{ij}| 
\end{eqnarray*}

%This objective function is a fairly natural relative of several well
%studied problems.
%%e.g. when $A$ is the adjacency matrix of a graph,
%%the QP-ratio objective finds the maximum density subgraph (which has a
%%polynomial time algorithm).
%Yet, this is a challenging problem and a
%good testbed for both algorithms and complexity because the techniques
%used for quadratic problems for the $\{-1,1\}$ and $\{0,1\}$ domains
%do not seem to carry over to the $\{-1,0,1\}$ domain. We give
%approximation algorithms and evidence for the hardness of
%approximating the \qpr problem.

These are natural relatives of several well studied problems (in fact
Trevisan introduced the latter problem as a stepping stone towards
a spectral algorithm for Max Cut Gain).
%-- we disprove his conjecture 
%on the integrality gap for this problem). 
These quadratic ratio 
problems are good testbeds for both algorithms and complexity because 
the techniques used for quadratic problems for the $\{-1,1\}$ and $\{0,1\}$ 
domains do not seem to carry over to the $\{-1,0,1\}$ domain. We give 
approximation algorithms and evidence for the hardness of 
approximating these problems.

We consider an SDP relaxation obtained by adding constraints to the
natural eigenvalue (or SDP) relaxation for this problem. Using this,
we obtain an $\tilde{O}(n^{1/3})$ algorithm for QP-ratio.
%We also show that
%the relaxation has an integrality gap of $\Omega(n^{1/4})$.
%
We also obtain an $\tilde{O}(n^{1/4})$ approximation for bipartite graphs, and better algorithms 
for special cases.

As with other problems with ratio objectives (e.g. uniform sparsest
cut), it seems difficult to obtain inapproximability results based on
$\mathbf{P} \ne \mathbf{NP}$.
We give two results that indicate that \qpr is hard to
approximate to within any constant factor:
one is based on the assumption that random instances of Max $k$-AND
are hard to approximate, and
%(first formulated by Feige).
%the other is based on a strengthening of the Unique Games Conjecture.
the other makes a connection to a ratio version of Unique Games.
% Small Set
%Expansion Conjecture of Raghavendra and Steurer.
%%(intimately related to the Unique Games Conjecture).

There is an embarrassingly large gap between our upper bounds and lower bounds.
In fact, we give a natural distribution on instances of \qpr for which
%obtaining approximation ratios of $n^{\epsilon}$ 
an $n^\epsilon$ approximation (for small $\epsilon$)
seems out of reach of 
current techniques. %-- the structure of these is similar
%to the instances produced by our hardness reductions.
On the one hand, this distribution presents a concrete barrier for
algorithmic progress.
On the other hand, it is a challenging question to develop lower bound
machinery to establish a hardness result of $n^{\epsilon}$ for this problem.

\end{abstract}
\newpage

\section{Introduction}
Semidefinite programming techniques have proved very useful for
quadratic optimization problems (i.e. problems with a quadratic
objective) over $\{0,1\}$ variables or $\{\pm 1\}$ variables. Such
problems admit natural SDP relaxations and beginning with the seminal
work of Goemans and Williamson \cite{gw}, sophisticated techniques
have been developed for exploiting these SDP relaxations to obtain approximation
algorithms.  For a large class of constraint satisfaction problems, a
sequence of exciting results\cite{kkmo,ko,kv} culminating in the work
of Raghavendra\cite{ragh}, shows that in fact, such SDP
based algorithms are optimal (assuming the Unique
Games Conjecture).
%Raghavendra and Raghavendra and Steurer give generic rounding algorithms that %can be applied
%to SDP relaxations for CSPs to obtain approximation factors that match the %integrality gap of the
%relaxation and the hardness result assuming UGC.

In this paper, we initiate a study of quadratic programming problems
with variables in $\{0,\pm 1\}$.  In contrast to
their well studied counterparts with variable values in $\{0,1\}$ or
$\{\pm 1\}$, to the best of our knowledge, such problems have not been
studied before.  These problems admit natural SDP relaxations similar
to problems with variable values in $\{0,1\}$ or $\{\pm 1\}$, yet we
know very little about how (well) these problems can be approximated.
%In this work, 
We focus on some basic problems in this class: 
%We study the
%quality of %(basic and strengthened) 
%SDP relaxations for these problems,
%develop new techniques to round their vector solutions and prove
%hardness results.
%
%The specific problems we study in this paper are ratio-quadratic programming (QP-Ratio) and its
%normalized version:
%\begin{eqnarray}
%\mbox{Given symmetric matrix $A=(a_{i,j})$}, \max_{x_{i}\in \{0,\pm 1\}} \frac{\sum_{i < j} a_{i,j} x_i x_j}{\sum_{i} x_i^2}\label{eq:qpratio:defn}
%\end{eqnarray}
%
\begin{eqnarray}
\qpr : \mbox{\ \ } \max_{\{-1,0,1\}^n} \frac{\sum_{i \not = j} a_{ij} x_i x_j}{\sum x_i^2},
&\mbox{\ and\ }&
\nqpr : \mbox{\ \ }  \max_{\{-1,0,1\}^n} \frac{\sum_{i \not = j} a_{ij} x_i x_j}{\sum d_i x_i^2}.\label{eq:qpratio:defn}\\
&& \mbox{where\ } d_i = \sum_j |a_{ij}| \nonumber
\end{eqnarray}

Note that the numerator is the well studied quadratic programming
objective $\sum_{i < j} a_{i,j} x_i x_j$.  Ignoring the value of the
denominator for a moment, the numerator can be maximized by setting
all variables to be $\pm 1$.
%, i.e. it does not help to set variables to 0. 
However, the denominator term in the objective makes it worthwhile
to set variables to 0.  An alternate phrasing of the ratio-quadratic
programming problems is the following: the goal is to select a subset
of non-zero variables $S$ and assign them values in $\{\pm 1\}$ so as
to maximize the ratio of the quadratic programming objective
$\sum_{i<j \in S} a_{i,j} x_i x_j$ to the (normalized) size of $S$.

This problem is a variant of well studied problems: Eliminating 0 as a
possible value for variables gives rise to the problem of maximizing
the numerator over $\{\pm 1\}$ variables -- a well studied problem
with an $O(\log n)$ approximation~\cite{nemirovski, other-one, CW}. On the other hand, eliminating $-1$
as a possible value for variables (when the $a_{i,j}$ are
non-negative) results in a polynomial time solvable problem.  
%(In the
%special case where the $a_{i,j}$ are the entries of the adjacency
%matrix of graph $G(V,E)$, the problem is to find a subset $S \subseteq
%V$ that maximizes $|E(S)|/|S|$).  
Another closely related
problem to QP-Ratio is a {\em budgeted} variant where the goal is to
maximize the numerator (for the QP-Ratio objective) subject to the
denominator being at most $k$.  This is harder than QP-Ratio in the
sense that an $\alpha$-approximation for the budgeted version
translates to an $\alpha$-approximation for QP-Ratio (but not vice
versa).  The budgeted version is a generalization of $k$-Densest
Subgraph, a well known problem for which there is a huge gap between
current upper\cite{bccfv} and lower bounds\cite{khot,feige}. In this
paper, we chose to focus on the ``easier'' class of ratio problems.

%Why should the reader be interested in this new problem we introduce ?
Though it is a natural variant of well studied problems, QP-Ratio
seems to fall outside the realm of our current understanding on both
the algorithmic and inapproximability fronts.  One of the goals of our
work is to enhance (and understand the limitations of) the SDP toolkit
for approximation algorithms by applying it to this natural problem.
%This sheds new light on SDP relaxations and constraints that have been investigated for related problems (akin to how the study of Max Cut Gain has yielded new insights into the well studied Max Cut SDP).
On the hardness side, the issues that come up are akin to those
arising in other problems with a ratio/expansion flavor, where
conventional techniques in inapproximability have been ineffective.
%The problem also motivates the development of new techniques that we
%believe should have wider
%applicability. %Some remarks about our choice of problem:

%It is possible that the ideas behind our hardness results for QP-Ratio can %eventually be used to obtain improved hardness results for $k$-Densest %Subgraph.

The \nqpr objective 
arose in recent work of Trevisan\cite{trevisan} on computing Max Cut
Gain using eigenvalue techniques.  The idea here is to use the
eigenvector to come up with a `good' partial assignment, and
recurse. Crucial to this procedure is a quantity called the
{\em GainRatio} defined for a graph; this is a special case of \nqpr where 
$a_{ij} = -1$ for edges, and 0 otherwise.
%defined as follows:
%%Given $a_{i,j}$, define the degree $deg(i) = \sum_j |a_{i,j}|$. 
%given a graph $G(V,E)$ with edge weights $w_{u,v}$ (and $deg(u) =
%\sum_v |w_{u,v}|$),
%\begin{eqnarray} \label{eq:gainratio}
%\text{GainRatio } = \max_{x_{u}\in \{0,\pm 1\}} \frac{\sum_{\edge} - w_{u,v} \cdot x_u x_v}{\sum_u deg(u) x_u^2}
%\end{eqnarray}

%In Trevisan's work, the $a_{i,j}$ values correspond to a graph $G(V,E)$ in the %following sense:
%$a_{i,j}=-1$ if $(i,j) \in E$ and $a_{i,j}=0$ otherwise.
%When the matrix $A$ is obtained thus, the resulting maximization problem is %called the GainRatio.

\subsection{Our results}\label{sec:our-results}
We first study mathematical programming relaxations for \qpr{}. The
main difficulty in obtaining such relaxations is imposing the
constraint that the variables take values $\{-1, 0, 1 \}$. 
%That is, we
%want the non-zero values of $x_i$ to have roughly equal
%magnitude. 
Capturing this using convex constraints is the main
challenge in obtaining good algorithms for the problem.

We consider a semidefinite programming (SDP) relaxation obtained by
adding constraints to the natural eigenvalue relaxation, and round it
to obtain an $\otilde(n^{1/3})$ approximation algorithm.
An interesting special case is bipartite instances of \qpr{}, where the support
of $a_{ij}$ is the adjacency matrix of a bipartite graph
(akin to bipartite instances of quadratic programming, also known as the Grothendieck problem).
For bipartite instances, we obtain an $\tilde{O}(n^{1/4})$ approximation
and an almost matching SDP integrality gap of $\Omega(n^{1/4})$. 
%While
%an $O(n^{1/3})$ follows in a fairly straightforward fashion from the
%inequalities that we add, the $O(n^{2/7})$ approximation involves an
%interesting rounding idea which may be useful in other problems
%involving SDP solutions with vectors of varying lengths.

Our original motivation to study quadratic ratio problems was the GainRatio problem
studied in Trevisan~\cite{trevisan}. We give a sharp contrast between the strengths of
different relaxations for the problem and disprove Trevisan's conjecture that
the eigenvalue approach towards Max Cutgain matches the bound achieved 
by an SDP-based approach\cite{CW}. See Section~\ref{sec:nqpr} for
details.

%%Mention APX hardness result in Appendix
Complementing our algorithmic result for \qpr, we show hardness results for
the problem.
We first show that there is no PTAS for the problem assuming $P \ne NP$.
We also provide
evidence that it is hard to approximate to within any constant
factor.  We remark that current techniques seem insufficient to prove
such a result based on standard assumptions (such as $P \ne NP$) -- a
similar situation exists for other problems with a ratio objective
such as sparsest cut. 
%However, we show (in appendix~\ref{app:np-hardness}) 
%that \qpr is NP-hard to approximate within some small constant factor. 
%The obstacle towards showing better inapproximability results is that given the ratio nature of
%the objective, one needs to argue about solutions on small subsets of
%the vertices in the graph.  This is the reason we base our evidence on
%other assumptions: certain graph problems are hard to approximate
%even on instances where small subsets of the vertices have excellent
%expansion properties.

In Section~\ref{sec:rand-k-and} we rule out constant factor
approximation algorithms for \qpr assuming that random instances of
$k$-AND are hard to distinguish from `well-satisfiable' instances.
This hypothesis was used as a basis to prove optimal hardness for the
so called $2$-Catalog problem (see \cite{feige}) and has proven
fruitful in ruling out $O(1)$-approximations for the densest subgraph
problem (see \cite{amm}). It is known that even very strong SDP
relaxations (in particular, $\Omega(n)$ rounds of the Lasserre
hierarchy) cannot refute this conjecture~\cite{tulsiani}.

%In Section~\ref{sec:ugc-sse-hardness}, we prove a similar result
%assuming the Small Set Expansion (SSE) conjecture recently proposed by
%Raghavendra and Steurer, and shown to be closely related to the Unique
%Games Conjecture\cite{rs}.  This conjecture says that the natural
%problem of approximating the expansion of small sized subsets of the
%vertices in a graph is hard, and has been used to establish hardness
%for a number of `expansion type' problems -- such as (uniform)
%sparsest cut and the densest subgraph problem~\cite{rst}.
%
%assuming that Unique Games is hard to
%approximate even on instances with small set expansion.  The
%conjecture is known to be equivalent to the natural problem of
%approximating the expansion of small sized subsets of the vertices in
%a graph (see \cite{rst,rst2}).  
%(another problem with a ratio objective) 
%one needs to argue about the structure of small subsets).  One strike
%against this hardness assumption however is that there are sub-exponential 
%time algorithms that 
%can approximate the expansion of small sets in a graph\cite{abs}.
  
%These conjectures are similar in that they give structural properties
%of small sets in their instances, but differ in that one is an average
%case assumption while the other is worst case. However, the SSE
%problem is known to have sub-exponential time algorithms~\cite{ABS},
%while strong integrality gaps exist for Random k-AND.

We also show a reduction from {\textsf Ratio UG} (a ratio version of the well studied unique 
games problem), to \qpr.
We think that ratio version of Unique Games is an interesting problem
worthy of study that could shed light on the complexity of other 
ratio optimization questions.
The technical challenge in our reduction is to develop the required fourier-analytic machinery to tackle PCP-based reductions to ratio problems.
%Even assuming the SSE conjecture, our long code based reduction needs
%non-trivial anti-concentration bounds to show that solutions that pick
%small portions of long codes do not give a good objective.  This might
%prove useful for long code based reductions to other ratio type
%problems.

There is a big gap in the approximation guarantee of our algorithm and
our inapproximability results.  We suspect that the problem is in fact
hard to approximate to an $n^{\eps}$ factor for some $\eps>0$.  In
Section~\ref{sec:hiding}, we decribe a natural distribution over
instances which we believe are hard to approximate up to polynomial
factors. %and are worthy of further investigation.  
Our reduction from $k$-AND %is simple to describe and 
in fact generates hard instances of a similar structure
albeit ruling out only constant factor approximations.

%%%%%%%%%%%%%%%%%%%%

\section{Algorithms for QP-Ratio}\label{sec:algo}
We start with the most natural relaxation for
\qpr{}~\eqref{eq:qpratio:defn} :
\[ \max \frac{\sum_{i,j} A_{ij} x_i x_j}{\sum_i x_i^2} \text{ subject to } x_i \in [-1,1] \] (instead of $\{0, \pm 1\}$). The solution to this is precisely the largest eigenvector of $A$ (scaled such that entries are in $[-1,1]$). However it is easy to construct instances for which this relaxation is bad: if $A$ were the adjacency matrix of a $(n+1)$ vertex star (with $v_0$ as the center of the star), the relaxation can cheat by setting $x_0=\frac{1}{2}$ and $x_i = \frac{1}{\sqrt{2n}}$ for $i \in [n]$ to give a gap of $\Omega(\sqrt{n})$ (the integer optimum is at most $1$). 
%consider $A$ to be the adjacency matrix of an $(n+1)$-vertex star, i.e., a vertex $v_0$ connected to $v_1, \dots, v_n$ by edges of weight $1$. If we assign $x_0 = \frac{1}{\sqrt{2}}$ and $x_i = \frac{1}{\sqrt{2n}}$ for $i \in [n]$, we satisfy $\sum_i x_i^2 = 1$, and the value of the objective is $n \cdot \frac{1}{2\sqrt{n}} =\Omega(\sqrt{n})$. On the other hand, it is easy to check that the optimum of the \qpr{} instance is at most $1$; thus the relaxation has a gap of $\Omega(\sqrt{n})$.

%Thus the question now is: do 
We show that SDP relaxations give more power in
expressing the constraints $x_i \in \{0, \pm 1\}$?
Consider the following relaxation:
\begin{align}
  \max \sum_{i,j} A_{ij} \cdot \iprod{\uu_i, \uu_j} ~~&\text{subject to } \sum_i \uu_i^2 = 1 \text{, and} \notag \\
  &|\iprod{\uu_i, \uu_j}| \le \uu_i^2 \text{ for all }i,j\label{eq:triang-ineq}
\end{align}
It is easy to see that this is indeed a relaxation: start with an integer solution $\{x_i\}$ with $k$ non-zero $x_i$, and set $\vv_i = (x_i/\sqrt{k}) \cdot \vv_0$ for a fixed unit vector $\vv_0$.

Without constraint ~\eqref{eq:triang-ineq}, the SDP relaxation is equivalent to the eigenvalue relaxation given above. 
Roughly speaking, equation~\eqref{eq:triang-ineq} tries to impose the constraint that non-zero vectors are of equal length.
In the example of the $(n+1)$-vertex star, this relaxation has value
equal to the true optimum.  In fact, for any instance with $A_{ij} \ge
0$ for all $i,j$, this relaxation is exact \cite{charikar}.
\ifnum\full<1
\footnote{We consider other SDP relaxations that can be writing by viewing the $\{0,\pm 1\}$ as a 3-alphabet CSP, and show a $\Omega(\sqrt{n})$ in the full version. It is interesting to see if lift and project methods starting with this relaxation can be useful.}
\fi

\ifnum\full=1
There are other natural relaxations one can write by viewing the $\{0,
\pm 1\}$ requirement like a 3-alphabet CSP. We consider one of
these in section~\ref{app:other-relax}, and show an $\Omega(n^{1/2})$ integrality gap. It is interesting to see if lift and project methods starting with this relaxation can be useful.
\fi

In the remainder of the section, we describe a simple $\otilde(n^{1/3})$ rounding algorithm, which shows that the additional constraints \eqref{eq:triang-ineq} indeed help.  We first describe an integrality gap of roughly $n^{1/4}$, as it highlights the issues that arise in rounding the SDP solution.

\subsection{Integrality gap instance}\label{sec:algo:gap}
Consider a complete bipartite graph on $L, R$, with $|L| = n^{1/2}$,
and $|R| = n$. The edge weights are set to $\pm 1$ uniformly at
random. Denote by $B$ the $n^{1/2} \times n$ matrix of edge weights
(rows indexed by $L$ and columns by $R$).  A
standard Chernoff bound argument shows 
\ifnum\full<1
(see the full version for a proof):
\fi
\begin{lemma}\label{lem:igap:lb}
With high probability over the choice of $B$, we have $\opt \le \sqrt{\log n} \cdot n^{1/4}$.
\end{lemma}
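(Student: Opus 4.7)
The plan is to bound $\opt$ by a high-probability union bound over all possible integer solutions. Any feasible $x \in \{-1,0,1\}^{L \cup R}$ is specified by its support $S_L \subseteq L$ of size $s_L$, $S_R \subseteq R$ of size $s_R$, and sign assignments $\sigma_L \in \{\pm 1\}^{S_L}$, $\sigma_R \in \{\pm 1\}^{S_R}$. Since the weight matrix is zero outside the $L \times R$ bipartite block, the \qpr objective achieved by $x$ equals
\[\frac{2\, \sigma_L^\top B[S_L,S_R]\, \sigma_R}{s_L + s_R},\]
so it suffices to uniformly bound the bilinear forms in the numerator.

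For a fixed quadruple $(S_L,S_R,\sigma_L,\sigma_R)$, the quantity $\sigma_L^\top B[S_L,S_R]\sigma_R = \sum_{i\in S_L,\, j\in S_R} \sigma_L(i)\sigma_R(j)\, B_{ij}$ is a sum of $s_L s_R$ independent $\pm 1$ random variables, because flipping by the fixed signs $\sigma_L(i)\sigma_R(j)$ preserves the Rademacher distribution of each $B_{ij}$. Hoeffding's inequality then gives
\[\Pr\left[\left|\sigma_L^\top B[S_L,S_R]\sigma_R\right| > t\right] \le 2\exp\!\left(-\frac{t^2}{2 s_L s_R}\right).\]

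Next I would take a union bound. For fixed $(s_L,s_R)$ the number of admissible $(S_L,S_R,\sigma_L,\sigma_R)$ is at most $2^{s_L+s_R}\binom{n^{1/2}}{s_L}\binom{n}{s_R} \le (2n)^{s_L+s_R}$, and there are at most $O(n^{3/2})$ choices of $(s_L,s_R)$. Setting $t := C\sqrt{s_L s_R\,(s_L+s_R)\log n}$ for a sufficiently large absolute constant $C$ makes all tail events simultaneously false with probability at least $1 - 1/n$ over $B$, since Hoeffding's exponent $t^2/(2s_L s_R) \gtrsim (s_L+s_R)\log n$ comfortably absorbs $\log\left((2n)^{s_L+s_R}\right) = O((s_L+s_R)\log n)$.

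Under this high-probability event, the ratio of any integer solution is bounded by
\[\frac{2C\sqrt{s_L s_R (s_L+s_R)\log n}}{s_L+s_R} \;=\; 2C\sqrt{\frac{s_L s_R \log n}{s_L+s_R}} \;\le\; 2C\sqrt{\min(s_L,s_R)\log n} \;\le\; O\bigl(n^{1/4}\sqrt{\log n}\bigr),\]
where the decisive step uses $s_L s_R/(s_L+s_R) \le \min(s_L,s_R) \le s_L \le |L| = n^{1/2}$. This conceptual step — the asymmetry $|L|=n^{1/2} \ll |R|=n$ forcing $\min(s_L,s_R) \le n^{1/2}$ — is what yields $n^{1/4}$ rather than the $n^{1/2}$ one would get for a balanced bipartition; there is no real obstacle beyond careful bookkeeping in the union bound.
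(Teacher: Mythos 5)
Your proposal is correct and follows essentially the same approach as the paper's own proof: a Chernoff/Hoeffding tail bound on the bilinear form for each fixed support pair and sign assignment, a union bound over the $\binom{\sqrt{n}}{s_L}\binom{n}{s_R}2^{s_L+s_R}$ choices, and the key step $\frac{s_L s_R}{s_L+s_R}\le\min(s_L,s_R)\le|L|=\sqrt{n}$ that exploits the lopsided bipartition to get $n^{1/4}$. The only (cosmetic) difference is that you track the absolute constant $C$ whereas the paper absorbs it into the $\sqrt{\log n}$ factor.
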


\ifnum\full=1
\begin{proof}
Let $S_1 \subseteq L$, $S_2 \subseteq R$ be of sizes $a$, $b$ respectively. Consider a solution in which these are the only variables assigned non-zero values (thus we fix some $\pm 1$ values to these variables). Let $\val$ denote the value of the numerator. By the Chernoff bound, we have
\[ \Pr[ \val \ge c \sqrt{ab} ] \le e^{-c^2/3}, \]
for any $c>0$. Now choosing $c = 10\sqrt{(a+b)\log n}$, and taking
union bound over all choices for $S_1, S_2$ and the assignment (there are $\binom{\sqrt{n}}{a} \binom{n}{b} 2^{a+b}$ choices overall), we get that w.p. at least $1 - 1/n^3$, no assignment with this choice of $a$ and $b$ gives $\val$ bigger than $\sqrt{ab (a+b) \log n}$.  The ratio in this case is at most $\sqrt{\log n \cdot \frac{ab}{a+b}} \le \sqrt{\log n} \cdot n^{1/4}$. Now we can take union bound over all possible $a$ and $b$, thus proving that $\opt
\leq n^{1/4}$ w.p. at least $1-1/n$.
\end{proof}
\fi

Let us now exhibit an SDP solution with value $n^{1/2}$. Let $\vv_1, \vv_2, \dots, \vv_{\sqrt{n}}$ be mutually orthogonal vectors, with each $\vv_i^2 = 1/2n^{1/2}$. We assign these vectors to vertices in
$L$. Now to the $j$th vertex in $R$, assign the vector $\uu_j$ defined by $\uu_j = \sum_{i} B_{ij} \frac{\vv_i}{\sqrt{n}}$.

\ifnum\full<1
It is easy to check that this assignment satisfies the SDP constraints , and attains a value $\Omega(n^{1/2})$. This gives a gap of $\Omega(n^{1/4})$. 

This gap instance can be seen as a collection of $n^{1/2}$ stars
(vertices in $L$ are the `centers'). In each `coordinate'
(corresponding to the orthogonal $\vv_i$), the assigment looks
like a star. $O(\sqrt{n})$ different coordinates allow us to
satisfy the constraints \eqref{eq:triang-ineq}.

\fi
\ifnum\full=1
It is easy to check that $\uu_j^2 = \sum_i \frac{\vv_i^2}{n} = \frac{1}{2n}$. Further, note that for any $i,
j$, we have (since all $\vv_i$ are orthogonal) $B_{ij} \iprod{\vv_i,
  \uu_j} = B_{ij}^2 \cdot \frac{\vv_i^2}{\sqrt{n}} =
\frac{1}{2n}$. This gives
$\sum_{i,j} B_{ij} \iprod{\vv_i, \uu_j} = n^{3/2} \cdot (1/2n) = n^{1/2}/2$.

From these calculations, we have $\forall i,j$, $|\vv_i \cdot \uu_j|
\le \uu_j^2$ (thus satisfying \eqref{eq:triang-ineq}; other
inequalities of this type are trivially satisfied). Further we saw
that $\sum_i \vv_i^2 + \sum_j \uu_j^2 = 1$. This gives a feasible solution of value
$\Omega(n^{1/2})$.  Hence the SDP has an $\widetilde{\Omega}(n^{1/4})$
integrality gap.

\paragraph{Connection to the star example.}
This gap instance can be seen as a collection of $n^{1/2}$ stars
(vertices in $L$ are the `centers'). In each `co-ordinate'
(corresponding to the orthogonal $\vv_i$), the assigment looks
like a star. $O(\sqrt{n})$ different co-ordinates allow us to
satisfy the constraints \eqref{eq:triang-ineq}.
\fi

This gap instance is bipartite.  In such instances it turns
out that there is a better rounding algorithm with a ratio $\otilde (n^{1/4})$ (Section~\ref{sec:bipartite-algo}).  Thus to bridge the gap between the algorithm and the integrality gap we need to better understand non-bipartite instances.

\subsection{An $O(n^{1/3})$ rounding algorithm}

Consider an instance of \qpr defined by $A_{(n\times n)}$.  Let $\uu_i$ be an optimal solution to the SDP, and let the objective value be denoted $\sdp$.  We will sometimes be
sloppy w.r.t. logarithmic factors in the analysis.

Since the problem is the same up to scaling the $A_{ij}$, let us assume that $\max_{i,j} |A_{ij}| = 1$.  There is a trivial solution which attains a value $1/2$ (if $i,j$ are indices with $|A_{ij}|=1$, set $x_i, x_j$ to be $\pm 1$ appropriately, and the rest of the $x$'s to $0$).  Now, since we are aiming for an $\otilde(n^{1/3})$ approximation, we can assume that $\sdp > n^{1/3}$.

As alluded to earlier (and as can be seen in the gap example),
the difficulty is when most of the contribution to $\sdp$ is from
non-zero vectors with very different lengths.  The idea of the
algorithm will be to move to a situation in which this does not
happen.  First, we show that if the vectors indeed have roughly equal length, we can round well. Roughly speaking, the algorithm uses the lengths $\norm{\vv_i}$ to determine whether to pick $i$, and then uses the ideas of~\cite{CW} (or
the earlier works of \cite{nemirovski, other-one}) applied to the vectors $\frac{\vv_i}{\norm{\vv_i}}$.
\begin{lemma}\label{lem:close-lengths}
Given a vector solution $\{ \vv_i \}$, with $\vv_i^2 \in [\tau/\Delta,
  \tau]$ for some $\tau>0$ and $\Delta>1$, we can round it to obtain
an integer solution with cost at least $\sdp / (\sqrt{\Delta} \log
n)$.
\end{lemma}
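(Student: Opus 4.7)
The plan is to reduce to the Charikar--Wirth $\pm 1$ quadratic programming rounding via a two-stage randomized procedure. In the first stage I build the support set $S \subseteq [n]$ by including each index $i$ independently with probability $p_i = \|\vv_i\|/\sqrt{\tau}$. The hypothesis $\vv_i^2 \in [\tau/\Delta,\tau]$ ensures $p_i \in [1/\sqrt{\Delta},1]$, and, crucially, $\Pr[i,j \in S] = \|\vv_i\|\|\vv_j\|/\tau$, which is exactly the factor needed so that after summation the SDP objective on the unit-normalized vectors $\bar\vv_i := \vv_i/\|\vv_i\|$ collapses to $\sdp$.

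In the second stage, conditional on $S$, I apply the Charikar--Wirth rounding of \cite{CW} to the unit vectors $\{\bar\vv_i\}_{i \in S}$ to obtain signs $\sigma_i \in \{-1,+1\}$ satisfying $\E_\sigma[\sum_{i,j \in S}A_{ij}\sigma_i\sigma_j] \ge (C/\log n)\sum_{i,j \in S} A_{ij}\iprod{\bar\vv_i,\bar\vv_j}$. The output is $x_i = \sigma_i$ for $i \in S$ and $x_i = 0$ otherwise.

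The expectation analysis is then a short calculation. The expected numerator is at least
$$\frac{C}{\log n}\sum_{i,j}p_ip_j A_{ij}\iprod{\bar\vv_i,\bar\vv_j} \;=\; \frac{C}{\tau\log n}\sum_{i,j}A_{ij}\iprod{\vv_i,\vv_j} \;=\; \frac{C\cdot\sdp}{\tau\log n}.$$
For the expected denominator, $\E[|S|]=\sum_i p_i = \tau^{-1/2}\sum_i\|\vv_i\|$; since $\|\vv_i\|^2 \ge \tau/\Delta$ gives $\|\vv_i\| \le \sqrt{\Delta/\tau}\cdot\|\vv_i\|^2$, and $\sum_i\|\vv_i\|^2 = 1$, this is at most $\sqrt{\Delta}/\tau$. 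The ratio of expectations is therefore $\Omega\!\big(\sdp/(\sqrt{\Delta}\log n)\big)$.

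The main technical obstacle is that the ratio of expectations is not the expected ratio. I would handle this with the standard ``auxiliary objective'' trick: setting $\alpha = c\cdot\sdp/(\sqrt{\Delta}\log n)$ for a small enough constant $c$, the random variable $\sum_{i\ne j}A_{ij}x_ix_j - \alpha \sum_i x_i^2$ has strictly positive expectation by the computation above, so some realization achieves a non-negative value with $\sum_i x_i^2 \ge 1$, which is exactly the desired ratio bound. (Alternatively one can concentrate $|S|$ via Chernoff and control the numerator by a Markov argument; the details are routine and do not introduce any new difficulty.)
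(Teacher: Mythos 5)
Your proposal matches the paper's approach: both sample indices with probability proportional to $\norm{\vv_i}$ (not $\norm{\vv_i}^2$), which is exactly what makes the $\sqrt\Delta$ loss appear, and then apply the \cite{CW} rounding to the normalized vectors. The only organizational difference is that the paper first derandomizes the sampling (via conditional expectations, using the $\frac{a+b}{c+d}$ observation) to obtain a fixed set of unit/zero vectors $\ww_i$ with ratio $\ge \sdp/\sqrt\Delta$ and \emph{then} invokes \cite{CW} on that fixed family, whereas you interleave the two randomizations and extract a good realization via the auxiliary objective $\sum A_{ij}x_ix_j - \alpha\sum x_i^2$; this is an equivalent bookkeeping device, though it silently requires the \cite{CW} guarantee in expectation to hold for every conditioning of $S$ (including those where $\sum_{i,j\in S}A_{ij}\iprod{\bar\vv_i,\bar\vv_j}$ is small or negative), a point the paper's ordering sidesteps.
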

\begin{proof}
Starting with $\vv_i$, we produce vectors $\ww_i$ each of which is
either $0$ or a unit vector, such that
\[ \text{If }~ \frac{\sum_{i,j} A_{ij} \iprod{\vv_i, \vv_j}}{\sum_i \vv_i^2} = \sdp \text{, then } \frac{\sum_{i,j} A_{ij} \iprod{\ww_i, \ww_j}}{\sum_i \ww_i^2} \ge \frac{\sdp}{\sqrt{\Delta}}. \]
Stated this way, we are free to re-scale the $\vv_i$, thus we may assume $\tau =1$. Now note that once we have such $\ww_i$, we can throw away the zero vectors and apply the rounding algorithm of~\cite{CW} (with a loss of an $O(\log n)$ approximation factor), to obtain a $0, \pm 1$ solution with value at least $\sdp/ (\sqrt{\Delta} \log n)$.

So it suffices to show how to obtain the $\ww_i$. Let us set (recall we assumed $\tau =1$)
\[ \ww_i = \begin{cases} \vv_i / \norm{\vv_i} \text{, with prob. }\norm{\vv_i} \\ 0 \text{ otherwise} \end{cases} \]
(this is done independently for each $i$). Note that the probability of picking $i$ is proportional to the length of $\vv_i$ (as opposed to the typically used square lengths, \cite{cmm} say). 
%\anote{Not sure whether to mention this - typically we use sq. lengths from the alphabet-type sdps}
Since $A_{ii}=0$, we have
\begin{equation}\label{eq:sdp-val}
\frac{ \E \big[ \sum_{i,j} A_{ij} \iprod{\ww_i, \ww_j} \big]}{ \E \big[ \sum_i \ww_i^2 \big]} = \frac{\sum_{i,j} A_{ij} \iprod{\vv_i, \vv_j}}{\sum_i |\vv_i|} \ge \frac{\sum_{i,j} A_{ij} \iprod{\vv_i, \vv_j}}{\sqrt{\Delta} \sum_i \vv_i^2} = \frac{\sdp}{\sqrt{\Delta}}. \end{equation}

The above proof only shows the existence of vectors
$\ww_i$ which satisfy the bound on the ratio. The proof can be made
constructive using the method of conditional expectations,
\ifnum\full<1
by setting the variables one by one, and use the fact that if $c,d >0$ and $\frac{a+b}{c+d} > \theta$, then either $\frac{a}{c} >\theta$ or $\frac{b}{d} > \theta$.\fi \ifnum\full=1 where we set variables one by one, i.e. we first decide whether to make $\ww_1$
to be a unit vector along it or the $0$ vector, depending on which maintains the
ratio to be $\ge \theta=\frac{\sdp}{\sqrt{\Delta}}$. Now, after fixing $\ww_1$, we fix $\ww_2$ similarly etc., 
while always maintaining the invariant that the ratio $\ge \theta$.

At step $i$, let us assume that $\ww_1,\dots,\ww_{i-1}$ have already been set to either unit vectors or zero vectors. 
Consider $\vv_i$ and let $\vvt_i=\vv_i/\norm{\vv_i}$. $\ww_i = \vvt_i$ w.p. $p_i= \norm{\vv_i}$ and $0$ w.p $(1-p_i)$.
\vspace{0.2in}

In the numerator, $B = \E[\sum_{j\ne i,k \ne i} a_{jk} \iprod{w_j,w_k}]$ is contribution from terms not involving $i$.
Also let $\cc_i=\sum_{k \ne i} a_{ik}\ww_k$ and let $\cc'_i=\sum_{j \ne i} a_{ji}\ww_j$. Then, from equation~\ref{eq:sdp-val}
\begin{equation*}
\theta \le \frac{\E[ \sum_{j,k} a_{jk} \iprod{\ww_j,\ww_k} ]}{E[ \sum_j |\ww_j|^2 ]} =
      \frac{ p_i \big( \iprod{\vvt_i, \cc_i}+\iprod{\cc'_i,\vvt_i}+B \big) + (1-p_i)B }{ p_i\big(1+\sum_{j\ne i} \norm{\ww_j}^2 \big) + (1-p_i)\big(\sum_{j\ne i} \norm{\ww_j}^2 \big) )}
\end{equation*}
Hence, by the simple fact that if $c,d$ are
positive and $\frac{a+b}{c+d} > \theta$, then either $\frac{a}{c} >
\theta$ or $\frac{b}{d} > \theta$, we see that either by setting $\ww_i=\vvt_i$ or $\ww_i=0$, we get value at least $\theta$.
\fi
\end{proof}
Let us define the `value' of a set of vectors $\{
\uu_i \}$ to be $\val := \frac{\sum A_{ij} \iprod{\uu_i, \uu_j}}{\sum_i \uu_i^2}$.
The $\vv_i$ we start will have $\val=\sdp$.
\begin{claim}\label{lem:len-lb}
We can move to a set of vectors such that (a) $\val$ is at least $\sdp/2$, (b) each non-zero vector $\vv_i$ satisfies $\vv_i^2 \ge 1/n$, (c) vectors satisfy \eqref{eq:triang-ineq}, and (d) $\sum_i \vv_i^2 \le 2$.
\end{claim}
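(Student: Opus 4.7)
My plan is to modify the SDP solution $\{\vv_i\}$ by handling the vectors with small squared length (those in $S = \{i : \vv_i^2 < 1/n\}$). Constraints (b), (c), (d) will be arranged to hold by construction, and the bulk of the work will go into establishing (a).

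The first thing I would try is the simplest modification: set $\vv_i' := \vec 0$ for $i \in S$ and $\vv_i' := \vv_i$ otherwise. Then (b) is immediate; (c) is preserved since replacing a vector by $\vec 0$ cannot violate \eqref{eq:triang-ineq}; and (d) holds because $\sum_i \vv_i'^2 \le \sum_i \vv_i^2 = 1 \le 2$. Writing $\sigma = \sum_{i \in S} \vv_i^2$ and letting $\Delta$ denote the drop in the numerator, the new ratio is $(\sdp - \Delta)/(1 - \sigma)$, which exceeds $\sdp/2$ precisely when $\Delta \le \sdp(1 + \sigma)/2$. Using \eqref{eq:triang-ineq} (so $|\iprod{\vv_i, \vv_j}| \le \vv_i^2$) and $|A_{ij}| \le 1$, a routine summation bounds
\[
\Delta \;\le\; 2 \sum_{i \in S}\vv_i^2 \sum_{j\neq i}|A_{ij}| \;\le\; 2n\sigma,
\]
which settles (a) in the ``easy'' regime where $\sigma \le \sdp/(4n)$.

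The main obstacle is the ``hard'' regime where $\sigma$ is not tiny, so that $2n\sigma$ may exceed $\sdp(1+\sigma)/2$. Here naive zeroing can be catastrophic (consider, e.g., many collinear tiny vectors whose inner products already saturate \eqref{eq:triang-ineq}). To handle this I would rescale each $i\in S$ by $t_i := 1/\sqrt{n\vv_i^2}\ge 1$, so that $\vv_i'^2 = 1/n$ exactly, and keep $\vv_i'=\vv_i$ for $i\notin S$. Condition (d) then follows from $\sum_i \vv_i'^2 = |S|/n + \sum_{i\notin S}\vv_i^2 \le 1 + 1 = 2$, and condition (c) can be verified pair by pair: WLOG $\vv_i^2 \le \vv_j^2$, so $t_j \le t_i$, and hence
\[
|\iprod{\vv_i',\vv_j'}| \;=\; t_i t_j |\iprod{\vv_i,\vv_j}| \;\le\; t_i t_j\,\vv_i^2 \;\le\; t_i^2\vv_i^2 \;=\; \vv_i'^2,
\]
with the $i\in S,\ j\notin S$ case handled similarly using $t_i \ge 1$ and $\vv_j^2 \ge 1/n$.

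The hardest step will be showing (a) for the rescaling, since each pair's contribution $A_{ij}\iprod{\vv_i,\vv_j}$ is amplified by $t_it_j \ge 1$ while the denominator grows by at most a factor $2$. The subtlety is that the numerator has cancellation, so the amplification could in principle hurt; I would control this via a case analysis on whether the bulk of $\sdp$ comes from $S$-$S$ pairs, $S$-$\bar S$ pairs, or $\bar S$-$\bar S$ pairs, and show that in each case either zeroing or rescaling (applied selectively) preserves half the ratio. Combining with the easy regime and absorbing the worst of the two strategies into the factor-$2$ slack of (a) and the factor-$2$ slack of (d) should complete the proof.
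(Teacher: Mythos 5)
Your preliminary analysis is sound: zeroing all small vectors works only when $\sigma = \sum_{i\in S}\vv_i^2$ is tiny, uniformly rescaling all of $S$ to squared length $1/n$ correctly gives (b), (c), (d) (and your pair-by-pair verification of (c) matches the one the paper relies on), and you rightly flag that uniform rescaling may \emph{decrease} the numerator because it amplifies negative $S$-$S$ and $S$-$\bar S$ contributions. But the step you defer is the whole content of the claim, and the proposed case analysis by pair type does not obviously close it. If, say, the $S$-$S$ contribution dominates $\sdp$, it is built out of both positive and negative $A_{ij}\iprod{\vv_i,\vv_j}$ terms that may have been carefully balanced at the original lengths; neither ``zero all of $S$'' nor ``rescale all of $S$'' (nor choosing between these globally) preserves a constant fraction of that balance, and ``applied selectively'' is precisely the part left unargued.

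The paper's fix is a \emph{per-vector}, \emph{sign-conditioned} greedy, applied iteratively. As long as there exists $i$ with $0 < \vv_i^2 < 1/n$, compute $S_i = \sum_j A_{ij}\iprod{\vv_i,\vv_j}$ \emph{with respect to the current vectors}. If $S_i \le 0$, set $\vv_i := 0$; otherwise replace $\vv_i$ by $(1/\sqrt{n})\,\vv_i/\norm{\vv_i}$. Since $A_{ii}=0$, zeroing changes the numerator by $-2S_i \ge 0$, and rescaling by a factor $t_i>1$ changes it by $2(t_i-1)S_i>0$; so the numerator is monotonically non-decreasing at every step. Each vector is modified at most once (once $\vv_i^2 \in \{0\}\cup[1/n,\infty)$ it is never reconsidered), so the process terminates. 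The denominator starts at $1$ and increases by less than $1/n$ per rescaled vector, hence ends at most $2$; combined with the non-decreasing numerator this gives (a). Condition (c) is then checked \emph{only at the end}, exactly as in your computation (the paper explicitly notes it may fail at intermediate steps). The key idea missing from your proposal is this iterative, sign-based per-vector decision, which sidesteps the cancellation issue entirely rather than trying to control it via a global case split.
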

The proof is by showing that very small vectors can either be enlarged
or thrown away 
\ifnum\full<1 (proof in full version)\fi
. 
\ifnum\full=1
\begin{proof} Suppose $0 <\vv_i^2 < 1/n$ for some $i$. If $S_i = \sum_{j} A_{ij} \vv_i \cdot \vv_j \leq 0$, we can set $\vv_i=0$ and improve the solution. Now if $S_i > 0$, replace $\vv_i$ by $\frac{1}{\sqrt{n}} \cdot \frac{\vv_i}{\norm{\vv_i}}$ (this only increases the value of $\sum_{i,j}A_{ij} \iprod{\vv_i, \vv_j}$), and repeat this operation as long as there are vectors with $\vv_i^2 < 1/n$. Overall, we would only have increased the value of $\sum_{i,j} A_{ij} \vv_i \cdot \vv_j$, and we still have $\sum_i \vv_i^2 \leq 2$. Further, it is easy to check that $|\iprod{\vv_i, \vv_j}| \le \vv_i^2$ also holds in the new solution (though it might not hold in some intermediate step above).
\end{proof}
\fi
The next lemma also
gives an upper bound on the lengths -- this is where the
constraints~\eqref{eq:triang-ineq} are crucial. 
\ifnum\full<1
It uses equation~\ref{eq:triang-ineq} to upper bound the contribution from each vector -- hence large vectors can not contribute much in total, since they are few in number (see the full version for details).
\fi
\begin{lemma}\label{lem:len-ub}
Suppose we have a solution of value $Bn^{\rho}$ and $\sum_i \vv_i^2 \le 2$. We can move to a solution with value at least $Bn^\rho /2$, and $\vv_i^2 < 16/n^{\rho}$ for all $i$.
\end{lemma}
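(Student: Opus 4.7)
The strategy is to iteratively set $\vv_i = \mathbf{0}$ for every index $i$ with $\vv_i^2 \geq T := 16/n^\rho$. Feasibility is trivially preserved throughout: zeroing a vector only weakens every triangle-inequality constraint in~\eqref{eq:triang-ineq} and can only shrink $\sum_i \vv_i^2$, so the entire content is to show that the ratio $V := N/S$ with $N = \sum_{i,j} A_{ij}\iprod{\vv_i,\vv_j}$ and $S = \sum_i \vv_i^2$ does not drop during the process.

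The key estimate is a \emph{symmetric} reading of the SDP constraint: applying~\eqref{eq:triang-ineq} to the pair $(j,i)$ gives $|\iprod{\vv_i,\vv_j}| \leq \vv_j^2$. Combining this with the normalization $|A_{ij}| \leq 1$ from the preamble and the hypothesis $\sum_j \vv_j^2 \leq 2$, the total contribution of all pairs involving any single index $i$,
\[ c_i \;:=\; 2\sum_{j} A_{ij}\iprod{\vv_i,\vv_j}, \]
satisfies $|c_i| \leq 2\sum_j \vv_j^2 \leq 4$. In particular, for any \emph{large} vector ($\vv_i^2 \geq T$), the per-unit-mass ratio contribution is $c_i/\vv_i^2 \leq 4/T = n^\rho/4$. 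A one-line computation yields the identity
\[ \frac{N-c_i}{S-\vv_i^2} \;-\; \frac{N}{S} \;=\; \frac{V\vv_i^2 - c_i}{S-\vv_i^2}, \]
which is non-negative exactly when $V \geq c_i/\vv_i^2$. So as long as $V \geq n^\rho/4$, zeroing a large vector cannot decrease the ratio, and we may iterate until no index has $\vv_i^2 \geq T$.

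The condition $V \geq n^\rho/4$ is comfortably satisfied in the regime the lemma will actually be applied: since the algorithm has already assumed $\sdp > n^{1/3}$ and the interesting value of $\rho$ is $1/3$, the initial $V = Bn^\rho$ has $B > 1$, so $V$ stays $\geq Bn^\rho$ all the way through the iteration --- in fact giving a bound stronger than $Bn^\rho/2$. The factor-$1/2$ slack absorbs the low-$V$ boundary: when $B$ is so small that $Bn^\rho \leq 1$ the trivial solution of value $\ge 1/2$ already beats the target, and otherwise a global rescaling $\vv_i \mapsto c \vv_i$ with $c<1$ drives all lengths under $T$ while leaving the scale-invariant ratio and~\eqref{eq:triang-ineq} intact. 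The only substantive step is the $|c_i| \le 4$ bound: without the triangle constraints one would get only $|c_i| \le n\vv_i^2$, which is useless, so this is precisely where the extra strength of~\eqref{eq:triang-ineq} enters --- paying for each large inner product out of the \emph{other} endpoint's globally capped mass budget.
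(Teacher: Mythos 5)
Your core argument --- iteratively zeroing indices with $\vv_i^2 \ge T$, bounding the per-index numerator contribution $c_i$ by $4$ via the triangle constraints $|\iprod{\vv_i,\vv_j}|\le\vv_j^2$ together with $|A_{ij}|\le 1$ and $\sum_j\vv_j^2\le 2$, and then checking via your algebraic identity that the ratio cannot drop while $V\ge c_i/\vv_i^2$ --- is correct and rests on exactly the same structural fact the paper uses. The paper instead makes a \emph{one-shot} count: there are $m \le n^\rho/8$ large indices, each contributing at most $4$ to the numerator, so dropping all of them removes at most $n^\rho/2 \le N/2$ from the numerator, while the denominator only shrinks, so the ratio at most halves. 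Your iterative version buys a cleaner local improvement criterion and, where it applies, a sharper conclusion (the value is preserved exactly, not halved). Both arguments silently rely on the numerator being $\Omega(n^\rho)$, i.e.\ $B$ and $S$ not being too small.

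The closing paragraph is where you should push back on yourself. If a global rescaling $\vv_i\mapsto c\vv_i$ proves the conclusion, it proves it for \emph{every} $B$ with no iteration whatsoever: the ratio and the constraints~\eqref{eq:triang-ineq} are homogeneous, and $\sum_i\vv_i^2$ only shrinks. A lemma trivially discharged by shrinking everything should make you suspicious that you are not proving the statement that is actually needed. What rescaling destroys --- and what the genuine argument (dropping large vectors, not shrinking all of them) is careful to preserve --- is the lower bound $\vv_i^2 \ge 1/n$ coming from Claim~\ref{lem:len-lb}; Theorem~\ref{thm:algo-main} combines the two to box every nonzero squared length into $[1/n, O(1/n^\rho)]$ before invoking Lemma~\ref{lem:close-lengths}, and rescaling breaks that sandwich. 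Similarly, ``the trivial solution of value $\ge 1/2$'' is an \emph{integer} assignment produced by the algorithm, not a vector solution to the SDP; the lemma asks you to exhibit the latter, so this does not resolve the low-$B$ case as stated. Neither of these is a fatal gap --- the paper's one-line proof leaves the same boundary unaddressed --- but the escape hatch you reached for is precisely the move that would void the lemma's role downstream.
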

\ifnum\full=1
\begin{proof}
Let $\vv_i^2 > 16/n^\rho$ for some index $i$. Since $|\iprod{ \vv_i, \vv_j}| \le \vv_j^2$, we have that for each such $i$,
\[ \sum_j A_{ij} \iprod{\vv_i, \vv_j} \le B \sum_j \vv_j^2 \le 2B\]
Thus the contribution of such $i$ to the sum $\sum_{i,j} A_{ij} \iprod{\vv_i, \vv_j}$ can be bounded by $m \times 4B$, where $m$ is the number of indices $i$ with $\vv_i^2 > 16/n^\rho$. Since the sum of squares is $\le 2$, we must have $m \le n^\rho/8$, and thus the contribution above is at most $B n^\rho/2$. Thus the rest of the vectors have a contribution at least $\sdp/2$ (and they have sum of squared-lengths $\le 2$ since we picked only a subset of the vectors)
\end{proof}
\fi
\begin{theorem}\label{thm:algo-main}
Suppose $A$ is an $n \times n$ matrix with zero's on the diagonal.  Then there exists a polynomial time $\otilde (n^{1/3})$ approximation algorithm for the \qpr{} problem defined by $A$.
\end{theorem}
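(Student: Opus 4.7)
The plan is to combine the three preliminary results (the trivial solution, Claim~\ref{lem:len-lb}, Lemma~\ref{lem:len-ub}, and Lemma~\ref{lem:close-lengths}) in a way that the upper and lower bounds on the squared vector lengths sandwich the vectors into a range whose width controls the loss in Lemma~\ref{lem:close-lengths}. Concretely, I would solve the SDP, call its value $\sdp$, and first dispose of the easy regime: since the simple greedy pair attains value at least $\nfrac12$, we may assume $\sdp \ge n^{1/3}$, for otherwise the trivial solution is already an $O(n^{1/3})$ approximation.

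Next, apply Claim~\ref{lem:len-lb} to the SDP solution. This costs a factor of $2$ in the ratio, preserves inequality~\eqref{eq:triang-ineq}, ensures $\sum_i \vv_i^2 \le 2$, and guarantees that every nonzero $\vv_i$ satisfies $\vv_i^2 \ge 1/n$. At this point the total numerator is at least $(\sdp/2)\sum_i \vv_i^2$, so I can write it as $Bn^{1/3}$ with $B \ge \sdp/n^{1/3}$. Now invoke Lemma~\ref{lem:len-ub} with $\rho = 1/3$: this drops the numerator by at most a factor of $2$, can only shrink $\sum_i \vv_i^2$ (since it simply discards large vectors, leaving the lower bound $1/n$ on the surviving ones intact), and enforces $\vv_i^2 < 16/n^{1/3}$ for every remaining $i$. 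A short bookkeeping shows the ratio $\val$ after these two passes is still at least $\sdp/4$.

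After these two cleanup steps, every nonzero $\vv_i$ satisfies $\vv_i^2 \in [1/n,\, 16/n^{1/3}]$, so the ratio between the largest and smallest squared length is at most $\Delta = 16\,n^{2/3}$. I then feed this configuration into Lemma~\ref{lem:close-lengths}, which produces a $\{-1,0,1\}$ solution of value at least
\[
\frac{\val}{\sqrt{\Delta}\,\log n} \;\ge\; \frac{\sdp/4}{4\,n^{1/3}\log n} \;=\; \Omega\!\Paren{\frac{\sdp}{n^{1/3}\log n}},
\]
which is precisely the $\otilde(n^{1/3})$ approximation claimed. Taking the better of this and the trivial solution covers all regimes of $\sdp$.

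The choice that makes the whole argument work is picking $\rho = 1/3$ to balance the two bounds: Claim~\ref{lem:len-lb} gives a floor of $1/n$ on $\vv_i^2$, while Lemma~\ref{lem:len-ub} gives a ceiling of $1/n^{\rho}$, so the range $\Delta$ scales like $n^{1-\rho}$ and Lemma~\ref{lem:close-lengths} contributes $\sqrt{\Delta}=n^{(1-\rho)/2}$; this matches the $n^{\rho}$ lost in replacing $\sdp$ by the trivial solution precisely at $\rho = 1/3$. The main thing to double-check — and the only place the composition is slightly delicate — is that Lemma~\ref{lem:len-ub} does not destroy the lower bound established by Claim~\ref{lem:len-lb} or violate~\eqref{eq:triang-ineq}; but since the lemma's modification only removes a set of vectors, both properties are inherited by the surviving vectors, so no re-entry into Claim~\ref{lem:len-lb} is required.
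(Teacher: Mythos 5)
Your proof is correct and follows essentially the same route as the paper's: normalize so $\max_{i,j}|A_{ij}|=1$, dispose of the small-$\sdp$ regime with the trivial $1/2$ solution, then chain Claim~\ref{lem:len-lb} and Lemma~\ref{lem:len-ub} to sandwich all surviving squared lengths in $[1/n,\,16/n^{1/3}]$ and finish with Lemma~\ref{lem:close-lengths}. The only cosmetic difference is that you instantiate $\rho=1/3$ once and for all (absorbing the slack into $B$), whereas the paper lets $\rho$ track $\sdp$ and splits on $\rho\gtrless 1/3$; both give the claimed $\otilde(n^{1/3})$ bound, and you correctly note that Lemma~\ref{lem:len-ub} only discards vectors, so the floor from Claim~\ref{lem:len-lb} and the constraints~\eqref{eq:triang-ineq} survive.
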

\begin{proof}
As before, let us rescale and assume $\max{i,j} |A_{ij}| =1$.  Now if $\rho>1/3$, Lemmas~\ref{lem:len-lb} and \ref{lem:len-ub} allow us to restrict to vectors satisfying $1/n \le \vv_i^2 \le 4/n^\rho$, and using Lemma~\ref{lem:close-lengths} gives the desired $\otilde(n^{1/3})$ approximation; if $\rho<1/3$, then the trivial solution of $1/2$ is an $\otilde(n^{1/3})$ approximation.
\end{proof}

\subsection{The bipartite case} \label{sec:bipartite-algo} 
In this section, we prove the following theorem:
\begin{theorem}\label{thm:bipartite:algo}
When $A$ is bipartite (i.e. the adjacency matrix of a weighted bipartite graph), there is a (tight upto logarithmic factor) $\otilde(n^{1/4})$ approximation algorithm for \qpr~.
\end{theorem}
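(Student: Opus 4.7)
The plan is to adapt the $\tilde O(n^{1/3})$ algorithm of Theorem~\ref{thm:algo-main} to the bipartite setting, exploiting the fact that the $\{\pm 1\}$ subproblem obtained after length-bucketing is now a bipartite quadratic form. For such forms, Grothendieck's inequality provides an $O(1)$-factor rounding (as opposed to the $O(\log n)$-factor of Charikar--Wirth in the general case), and this improved bipartite rounding is what drives the improvement from $n^{1/3}$ to $n^{1/4}$.

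First I would preprocess the SDP solution by invoking Claim~\ref{lem:len-lb} and Lemma~\ref{lem:len-ub} on each side, ensuring every nonzero vector satisfies $\vv_i^2 \in [1/n, 16/n^\rho]$ when $\sdp = n^\rho$. Bucketing the $L$ and $R$ vectors by squared length yields $O(\log^2 n)$ pairs $(L_a, R_b)$; by pigeonhole, some pair accounts for an $\Omega(1/\log^2 n)$ fraction of the SDP value, and within it all $L$-vectors have squared length $\approx \alpha$ while all $R$-vectors have squared length $\approx \beta$. Within this chosen bucket I rescale to unit vectors $\hat\vv_i, \hat\uu_j$ and apply Grothendieck's rounding to the weighted matrix $B_{ij} := A_{ij}\norm{\vv_i}\norm{\uu_j}$. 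This produces $\{\pm 1\}$ signs on the bucket satisfying $\sum B_{ij} x_i y_j \ge (1/K_G) \sum A_{ij}\iprod{\vv_i,\uu_j}$, which translates to an integer QP-Ratio value of $\Omega\!\Paren{N'/(\sqrt{\alpha\beta}(s+t))}$ for $s=|L_a|, t=|R_b|$ and $N' = 2\sum_{L_a\times R_b}A_{ij}\iprod{\vv_i,\uu_j}$.

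Comparing this integer value to the bucket SDP value $N'/(s\alpha + t\beta)$ and simplifying (via the substitution $u = s\alpha, v = t\beta$, after which the ratio becomes a convex combination of $\sqrt{\beta/\alpha}$ and $\sqrt{\alpha/\beta}$) gives an approximation ratio of $\tilde O\!\Paren{\sqrt{\max(\alpha,\beta)/\min(\alpha,\beta)}}$. Combined with the length bounds from preprocessing this is $\tilde O(n^{(1-\rho)/2})$, which matches the target $\tilde O(n^{1/4})$ whenever $\sdp \ge n^{1/2}$. When $\sdp \le n^{1/4}$, the trivial single-edge candidate (value $\ge 1$) is already a $2n^{1/4}$-approximation.

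The main obstacle is the intermediate regime $n^{1/4} < \sdp < n^{1/2}$, where neither extreme suffices. I would address it by additionally running a single-pivot enumeration: for each $i \in L$ and $\sigma \in \{\pm 1\}$, set $x_i = \sigma$ and greedily select the best prefix of the $|A_{ij}|$'s sorted in decreasing order (and symmetrically with the roles of $L, R$ swapped). A pigeonhole argument over OPT's rows shows that if OPT's smaller-side support is at most $n^{1/4}$ then some pivot trial recovers an $n^{1/4}$-approximation of $V^*$; the complementary case, where both of OPT's supports exceed $n^{1/4}$, I would hope to handle by tightening the bucketing analysis, since the balanced OPT supports should force the dominant bucket's lengths to be correspondingly balanced. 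Matching the $\tilde O(n^{1/4})$ upper bound against the $\Omega(n^{1/4})$ integrality gap of Section~\ref{sec:algo:gap} then establishes tightness up to logarithmic factors.
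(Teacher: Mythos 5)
Your high-level skeleton (SDP, length bucketing, dominant bucket) follows the paper, but there is a genuine gap at the step you flag yourself: the regime $n^{1/4} < \sdp < n^{1/2}$. In your Grothendieck-based analysis the approximation ratio of the dominant bucket is controlled by $\sqrt{\max(\alpha,\beta)/\min(\alpha,\beta)}$, which with $\alpha,\beta\in[1/n,\,O(1/n^\rho)]$ only gives $\tilde O(n^{(1-\rho)/2})$; for $\rho$ near $1/4$ this is $n^{3/8}$, which is worse than the target, and your fallback ``single-pivot enumeration'' does not close this. That pivot step — fix $x_i=\sigma$ for one $i$ and take the best prefix of $|A_{ij}|$'s — does not account for the signs contributed by the other $x$-variables on the same side, so it cannot certify an $n^{1/4}$-approximation of the true optimum in the regime where OPT's smaller support is $\le n^{1/4}$.

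The paper takes a different route that avoids the intermediate-regime issue entirely. Instead of Grothendieck, it uses a one-sided random-sign lemma (Lemma~\ref{lem:assignment}, proved via $\E|Z|$ and Paley--Zygmund): there are signs $x_i\in\{\pm1\}$, $i\in L$, with $\sum_{j\in R}\bigl|\sum_{i\in L}a_{ij}x_i\bigr|\ge \tfrac{1}{24}\sum_{j\in R}\norm{A_j}_2$, after which $y_j$ is chosen greedily (this is where bipartiteness is essential). The key inequality is then $\sum_j\norm{A_j}_2\ge |L|^{-1/2}\sum_j\norm{A_j}_1$, and the bound $v_j^2\le 1/|R|$ (from $\sum_j v_j^2\le 1$) converts $\sum_j\norm{A_j}_1$ into $|R|\cdot n^\alpha/\log^2 n$. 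Crucially, the paper then does a case split on $|L|$, not on $\sdp$: if $|L|\le n^{1/2}$ the Cauchy--Schwarz step above already gives the $n^{1/4}$ ratio; if $|L|>n^{1/2}$ then every $u_i^2\le n^{-1/2}$, so all squared lengths are within a factor $O(n^{1/2})$ of each other and Lemma~\ref{lem:close-lengths} applies directly with $\Delta=O(n^{1/2})$. This case analysis on the size of the smaller bucket is the missing idea in your argument; without it (or some equivalent), casing on $\sdp$ alone leaves a regime uncovered.
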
 

Bipartite instances of \qpr{} can be seen as the ratio analog of the Grothendieck problem \cite{AN}.
%(like \qpr{} is to  \textsf{Max QP}).
The algorithm works by rounding the semidefinite program relaxation from section~\ref{sec:algo}.
As before, let us assume $\max_{i,j} a_{ij} =1$ and consider a solution to the SDP~\eqref{eq:triang-ineq}.  To simplify the notation, let $u_i$ and $v_j$ denote the vectors on the two sides of the bipartition.  Suppose the solution satisfies: 
%(if the second condition does not hold, we scale up the vectors on the smaller side, losing at most a factor $2$.)
\[ (1) \sum_{(i,j) \in E} a_{ij} \ipr{u_i}{v_j} \ge n^{\alpha}, \qquad (2) \sum_i u_i^2 = \sum_j v_j^2 = 1. \]

If the second condition does not hold, we scale up the vectors on the smaller side, losing at most a factor $2$.
Further, we can assume from Lemma~\ref{lem:len-lb} that the squared lengths $u_i^2, v_j^2$ are between $\frac{1}{2n}$ and $1$.  Let us divide the vectors $\{ u_i \}$ and $\{ v_j \}$ into $\log n$ groups based on their squared length.  There must exist two levels (for the $u$ and $v$'s respectively) whose contribution to the objective is at least $n^{\alpha} / \log^2 n$.\footnote{Such a clean division into levels can only be done in the bipartite case -- in general there could be negative contribution from `within' the level.}  Let $L$ denote the set of indices corresponding to these $u_i$, and $R$ denote the same for $v_j$.  Thus we have $\sum_{i \in L, j \in R} a_{ij} \ipr{u_i}{v_j} \ge n^{\alpha}/\log^2 n$.  We may assume, by symmetry that $|L| \le |R|$.  Now since $\sum_j v_j^2 \le 1$, we have that $v_j^2 \le 1/|R|$ for all $j \in R$.  Also, let us denote by $A_j$ the $|L|$-dimensional vector consisting of the values $a_{ij}$, $i \in L$.  Thus
\begin{equation}\label{eq:ai-lower-bound}
\frac{n^\alpha}{\log^2 n} \le \sum_{i \in L, j \in R} a_{ij} \ipr{u_i}{v_j} \le \sum_{i \in L, j \in R} |a_{ij}| \cdot v_j^2 \le \frac{1}{|R|} \sum_{j \in R} \norm{A_j}_1.
\end{equation}

We will construct an assignment $x_i \in \{+1, -1\}$ for $i \in L$ such that $\frac{1}{|R|} \cdot \sum_{j \in R} \big|\sum_{i \in L} a_{ij} x_i \big|$ is `large'.  This suffices, because we can set $y_j \in \{+1, -1\}$, $j \in R$ appropriately to obtain the value above for the objective (this is where it is crucial that the instance is bipartite -- there is no contribution due to other $y_j$'s while setting one of them).

\begin{lemma}\label{lem:assignment}
There exists an assignment of $\{+1, -1\}$ to the $x_i$ such that
\[ \sum_{j \in R} \big|\sum_{i \in L} a_{ij} x_i \big| \ge \frac{1}{24} \sum_{j \in R} \norm{A_j}_2 \]
Furthermore, such an assignment can be found in polynomial time.
\end{lemma}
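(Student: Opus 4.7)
The plan is to show the inequality via a standard Khintchine-type argument: if the $x_i$ are chosen uniformly at random from $\{+1,-1\}$ (independently), then for each fixed $j$ the random variable $X_j = \sum_{i \in L} a_{ij} x_i$ has $\E[X_j^2] = \|A_j\|_2^2$, and a Paley--Zygmund bound will force $\E[|X_j|]$ to be at least a constant fraction of $\|A_j\|_2$. Summing over $j$ and using linearity of expectation then produces the existence of a good assignment; a conditional-expectations step gives the polynomial-time algorithm.

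More concretely, I would first compute the low-order moments of $X_j$ under independent Rademacher $x_i$'s. The second moment is $\E[X_j^2] = \sum_{i \in L} a_{ij}^2 = \|A_j\|_2^2$, and an easy expansion of $(\sum_i a_{ij} x_i)^4$ shows that $\E[X_j^4] = 3\|A_j\|_2^4 - 2 \sum_i a_{ij}^4 \le 3\|A_j\|_2^4$. By the Paley--Zygmund inequality applied to the nonnegative variable $X_j^2$,
\[
\Pr\!\left[X_j^2 \ge \tfrac12 \|A_j\|_2^2\right] \;\ge\; \frac{(1-\tfrac12)^2 (\E X_j^2)^2}{\E X_j^4} \;\ge\; \frac{1}{12},
\]
so $\E[|X_j|] \ge \tfrac{1}{\sqrt{2}} \cdot \tfrac{1}{12}\|A_j\|_2 \ge \tfrac{1}{24}\|A_j\|_2$. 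Summing over $j \in R$ and interchanging sum and expectation yields $\E\bigl[\sum_{j \in R} |X_j|\bigr] \ge \tfrac{1}{24} \sum_{j \in R} \|A_j\|_2$, proving that an assignment achieving the stated bound exists.

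For the algorithmic (``Furthermore'') part, I would derandomize via the method of conditional expectations. Note that the relevant expectations $\E[X_j^2]$ and $\E[X_j^4]$ only depend on the joint distribution of the $x_i$ up to fourth moments, so we can replace the uniform distribution by a $4$-wise independent distribution supported on a polynomial-size sample space; the Paley--Zygmund bound and hence $\E\bigl[\sum_j |X_j|\bigr] \ge \tfrac{1}{24}\sum_j \|A_j\|_2$ still hold. We then search this sample space, or alternatively set the $x_i$ greedily one at a time, at each step choosing $x_i \in \{+1,-1\}$ so as to maintain the invariant that the conditional expectation of $\sum_j |X_j|$ (computed from the conditional second and fourth moments via the same Paley--Zygmund estimate) is at least $\tfrac{1}{24} \sum_j \|A_j\|_2$.

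The only step that requires a little care is justifying the Paley--Zygmund inequality in the conditional stage — here one must use a pessimistic estimator based on the moments $\E[X_j^2 \mid x_1,\ldots,x_t]$ and $\E[X_j^4 \mid x_1,\ldots,x_t]$ rather than $\E[|X_j| \mid \cdot]$ itself (which is hard to compute directly); but these conditional moments are polynomial in the fixed $x_i$'s and can be updated in polynomial time, so the greedy procedure runs in polynomial time and outputs the required assignment.
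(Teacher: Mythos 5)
Your existence argument is essentially the paper's: write $X_j=\sum_{i\in L}a_{ij}x_i$ with i.i.d.\ Rademacher $x_i$, compute $\E[X_j^2]=\norm{A_j}_2^2$ and $\E[X_j^4]\le 3\norm{A_j}_2^4$, apply Paley--Zygmund to $X_j^2$ to get $\E|X_j|\ge\Omega(\norm{A_j}_2)$, and sum over $j$ by linearity. (The paper isolates this as Lemma~\ref{lem:randomsums} and gets the constant $1/12$; your computation gives $1/(12\sqrt2)$, but both clear the stated $1/24$.) Where you genuinely diverge is the ``furthermore'' part. The paper does \emph{not} use conditional expectations at all: it observes that the nonnegative random variable $S=\sum_{j\in R}|X_j|$ is pointwise bounded by $\sum_{j,i}|a_{ij}|\le |L|^{1/2}\sum_j\norm{A_j}_2\le \sqrt n\,\E[S]$, and then a reverse-Markov argument gives $\Pr[S\ge\E[S]/2]\ge 1/(2\sqrt n)$, so sampling $O(n)$ independent assignments hits a good one w.h.p.\ --- a Monte Carlo algorithm. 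Your primary proposal (replace the uniform distribution by a $4$-wise independent one, which preserves all second and fourth moments and hence the Paley--Zygmund bound, then exhaustively search the polynomial-size sample space) is correct and in fact \emph{stronger} than the paper's: it gives a deterministic polynomial-time algorithm. Your ``alternative'' greedy route, however, is under-justified: a pessimistic estimator built from the conditional moments $\E[X_j^2\mid\cdot]$, $\E[X_j^4\mid\cdot]$ via Paley--Zygmund is a lower bound on $\E[|X_j|\mid\cdot]$, but you would also need it to be a submartingale under the greedy choice (i.e.\ some branch $x_{t+1}\in\{\pm1\}$ does not decrease the estimator), and that property is neither stated nor obvious for this nonlinear function of the moments. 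Stick with the $4$-wise independence route, or use the paper's reverse-Markov sampling argument.
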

\begin{proof}
The intuition is the following: suppose $X_i, i \in L$ are i.i.d. $\{+1, -1\}$ random variables.  For each $j$, we would expect (by random walk style argument) that $\E \big[ \big|\sum_{i \in L} a_{ij} X_i \big|\big]  \approx \norm{A_j}_2$, and thus by linearity of expectation,
%\begin{equation}\label{eq:lemassignment}
$\E \Big[ \sum_{j \in R} \big| \sum_{i \in L} a_{ij} X_i \big| \Big] \approx \sum_{j \in R} \norm{A_j}_2$.
%\end{equation}
Thus the existence of such $x_i$ follows.  This can in fact be formalized:
\begin{equation}\label{eq:lemassignment}
\E\big[ \big|\sum_{i \in L} a_{ij} X_i \big|\big]  \ge \norm{A_j}_2 / 12
\end{equation}
This equation is seen to be true from the following lemma
\ifnum\full<1
whose proof in full version:
\fi
% by Lemma~\ref{lem:randomsums}, we have $\E\big[ \big|\sum_{i \in L} a_{ij} X_i \big|\big]  \ge \norm{A_j}_2 / 12$, and thus \eqref{eq:lemassignment} is true up to a constant.
\begin{lemma}\label{lem:randomsums}
Let $b_1, \dots, b_n \in \mathbb{R}$ with $\sum_i b_i^2 =1$, and let $X_1, \dots, X_n$ be i.i.d. $\{+1, -1\}$ r.v.s.  Then
\[ \E[ |\sum_i b_i X_i| ] \ge 1/12. \]
\end{lemma}
\ifnum\full=1
\begin{proof}
Define the r.v. $Z := \sum_i b_i X_i$.  Because the $X_i$ are i.i.d. $\{+1,-1\}$, we have $\E[ Z^2] = \sum_i b_i^2 = 1$.  Further, $\E[Z^4] = \sum_i b_i^4 + 6\sum_{i<j} b_i^2 b_j^2 < 3 (\sum_i b_i^2)^2 = 3$.  Thus by Paley-Zygmund inequality,
\[ \Pr[ Z^2 \ge \frac{1}{4} ] \ge \frac{9}{16} \cdot \frac{(\E[Z^2])^2}{\E[Z^4]} \ge \frac{3}{16}. \]
Thus $|Z|\ge 1/2$ with probability at least $3/16 > 1/6$, and hence $\E[|Z|] \ge 1/12$.
\end{proof}
\fi
We can also make the above constructive.  Let r.v. $S:= \sum_{j \in R} \big| \sum_{i \in L} a_{ij} X_i \big|$.  It is a non-negative random variable, and for every choice of $X_i$, we have 
\[S \le \sum_{j \in R} \sum_{i \in L} |a_{ij}| \le L^{1/2} \sum_{j \in R} \norm{A_j}_2 \le n^{1/2} \E[S]\] 
Let $p$ denote $\Pr[S < \frac{\E[S]}{2}]$.  Then from the above inequality, we have
%\[ \E[S] \le p \cdot \frac{\E[S]}{2} + (1-p) \cdot n^{1/2} \E[S] \]
%This simplifies to $ (1-p) (n^{1/2} - \frac{1}{2}) \ge \frac12$, which implies 
that $(1-p) \ge \frac{1}{2n^{1/2}}$.  Thus if we sample the $X_i$ say $n$ times (independently), we {\em hit}  an assignment with a large value of $S$ with high probability.
\end{proof}

\begin{proof}[Proof of Theorem~\ref{thm:bipartite:algo}.]
By Lemma~\ref{lem:assignment} and Eq~\eqref{eq:ai-lower-bound}, there exists an assignment to $x_i$, and a corresponding assignment of $\{+1, -1\}$ to $y_j$ such that the value of the solution is at least 
\[ \frac{1}{|R|} \cdot \sum_{j \in R} \norm{A_j}_2 \ge \frac{1}{|R|~|L|^{1/2}} \sum_{j \in R} \norm{A_j}_1 \ge \frac{n^{\alpha}}{|L|^{1/2} \log^2 n}. \qquad \text{[By Cauchy Schwarz]} \]
Now if $|L| \le n^{1/2}$, we are done because we obtain an approximation ratio of $O(n^{1/4} \log^2 n)$.  On the other hand if $|L| > n^{1/2}$ then we must have $\norm{u_i}_2^2 \le 1/n^{1/2}$.  Since we started with $u_i^2$ and $v_i^2 $ being at least $1/2n$ (Lemma~\ref{lem:len-lb}) we have that all the squared lengths are within a factor $O(n^{1/2})$ of each other.  Thus by Lemma~\ref{lem:close-lengths} we obtain an approximation ratio of $O(n^{1/4} \log n)$.
This completes the proof.
\end{proof}

\subsection{Algorithms for special cases}
\ifnum\full<1
We obtain better approximation algorithms for \qpr~ in restricted settings. We defer the proofs to the full version.
\begin{itemize}
\item When $A$ is positive semi-definite ($A \psd$), we can use the eigenvector to
obtain an $O(\log^2 n)$ approximation for \qpr . In a very recent independent work, \cite{nikhil} recently showed how to obtain a $O(\sqrt{\log n})$ approximation to \qpr~ when $A$ is psd.

\item We can also obtain a much better approximation algorithm for \qpr when the maximum value of the instance is large, say $\epsilon d_{max}$, where $d_{max} = \max_i \sum_i |a_{ij}|$. For \qpr instances $A$ with $OPT(A) \ge \eps d_{max}$, we can find a solution of value $e^{-O(1/\eps)} d_{max}$ using techniques from section~\ref{sec:nqpr}. Typically $d_{max} << n$, in which case the optimal solution might need to pick a smaller portion of the graph. Note that, the approximation factor in this case is independent of $n$.
\end{itemize}

\fi

\ifnum\full=1
\subsubsection{Poly-logarithmic approximations for positive semidefinite matrices}\label{app:psd-qpratio}
The MaxQP problem has a better approximation guarantee (of $2/\pi$) when $A$ is psd. Even for the QP-Ratio problem, we can do better in this case than for general $A$. In fact, it is easy to obtain a polylog$(n)$ approximation.

This proceeds as follows: start with a solution to the eigenvalue relaxation (call the value $\rho$). Since $A$ is psd, the numerator can be seen as $\sum_i (B_i x)^2$, where $B_i$ are linear forms. Now divide the $x_i$ into $O(\log n)$ levels depending on their absolute value (need to show that $x_i$ are not too small -- poly in $1/n$, $1/|A|_\infty$). We can now see each term $B_i x_i$ a sum of $O(\log n)$ terms (grouping by level). Call these terms $C_i^1, \dots, C_i^\ell$, where $\ell$ is the number of levels. The numerator is upper bounded by $\ell (\sum_i \sum_j (C_i^j)^2)$, and thus there is some $j$ such that $\sum_i (C_i^j)^2$ is at least $1/\log^2 n$ times the numerator. Now work with a solution $y$ which sets $y_i =x_i$ if $x_i$ is in the $j$th level and $0$ otherwise.
This is a solution to the ratio question with value at least $\rho/\ell^2$. Further, each $|y_i|$ is either $0$ or in $[\rho, 2\rho]$, for some $\rho$.

From this we can move to a solution with $|y_i|$ either $0$ or $2\rho$ as follows: focus on the numerator, and consider some $x_i \neq 0$ with $|x_i| < 2\rho$ (strictly). Fixing the other variables, the numerator is a convex function of $x_i$ in the interval $[-2\rho, 2\rho]$ (it is a quadratic function, with non-negative coefficient to the $x_i^2$ term, since $A$ is psd). Thus there is a choice of $x_i = \pm 2\rho$ which only increases the numerator. Perform this operation until there are no $x_i \neq 0$ with $|x_i| < 2\rho$. This process increases each $|x_i|$ by a factor at most $2$. Thus the new solution has a ratio at least half that of the original one. Combining these two steps, we obtain an $O(\log^2 n)$ approximation algorithm.

\subsubsection{Better approximations when the optimum is large.}
We can also obtain a much better approximation algorithm for \qpr when the maximum value of the instance is large, say $\epsilon d_{max}$, where $d_{max} = \max_i \sum_i |a_{ij}|$. For \qpr instances $A$ with $OPT(A) \ge \eps d_{max}$, we can find a solution of value $e^{-O(1/\eps)} d_{max}$ using techniques from section~\ref{sec:nqpr}.

This is because when all the degrees $d_i$ are roughly equal (say $\gamma d_{max}\le d_i \le d_{max}$ for some constant $\gamma>0$), then it is easy to check that an $O(\alpha)$ approximation to \nqpr~ (defined in section~\ref{sec:nqpr}) is an $O(\alpha/\gamma)$ approximation to the same instance of \qpr. Further, when $OPT(\qpr) \ge \eps d_{max}$, we can throw away vertices $i$ of degree $d_i< \frac{\eps}{2} d_{max}$ without losing in the objective. Hence, for a \qpr instance $A$ when $OPT(A) \ge \eps d_{max}$, we can find a solution to \qpr of value $e^{-O(1/\eps)} d_{max}$.

\subsection{Other Relaxations for \qpr{}}\label{app:other-relax}
For problems in which variables can take more than two values (e.g. CSPs with alphabet size $r>2$), it is common to use a relaxation where for every vertex $u$ (assume an underlying graph), we have variables $x_u^{(1)}, .., x_u^{(r)}$, and constraints such as $\iprod{x_u^{(i)}, x_u^{(j)}} = 0$ and $\sum_i \iprod{x_u^{(i)}, x_u^{(i)}} = 1$ (intended solution being one with precisely one of these variables being $1$ and the rest $0$).

We can use such a relaxation for our problem as well: for every $x_i$, we have three vectors $a_i, b_i$, and $c_i$, which are supposed to be $1$ if $x_i = 0, 1$, and  $-1$ respectively (and $0$ otherwise). In these terms, the objective becomes
\[ \sum_{i, j} A_{ij} \iprod{b_i, b_j} - \iprod{b_i, c_j} - \iprod{c_i, b_j} + \iprod{c_i, c_j} = \sum_{i,j} A_{ij} \iprod{b_i - c_i, b_j - c_j}. \]
The following constraints can be added
\begin{align}
\sum_i b_i^2 + c_i^2 = 1\\
\iprod{a_i, b_j}, \iprod{b_i, c_j}, \iprod{a_i, c_j} \ge 0 ~\text{for all }i, j\\
\iprod{a_i, a_j}, \iprod{b_i, b_j}, \iprod{c_i, c_j} \ge 0 ~\text{for all }i, j\\
\iprod{a_i, b_i} = \iprod{b_i, c_i} = \iprod{a_i, c_i} =0\\
a_i^2 + b_i^2 + c_i^2 = 1~\text{for all }i
\end{align}
Let us now see why this relaxation does not perform better than the one in \eqref{eq:triang-ineq}. Suppose we start with a vector solution $\uu_i$ to the earlier program. Suppose these are vectors in $\R^d$. We consider vectors in $\R^{n+d+1}$, which we define using standard direct sum notation (to be understood as concatenating co-ordinates). Here $e_i$ is a vector in $\R^n$ with $1$ in the $i$th position and $0$ elsewhere. Let $0_n$ denote the $0$ vector in $\R^n$.

We set (the last term is just a one-dim vector)
\begin{align*}
b_i = 0_n \oplus  \nfrac{\uu_i}{2} \oplus (\nfrac{|\uu_i|}{2}) \\ 
c_i = 0_n \oplus -\nfrac{\uu_i}{2} \oplus (\nfrac{|\uu_i|}{2}) \\
a_i = \sqrt{1-\uu_i^2} \cdot e_i \oplus 0_d \oplus (0)
\end{align*}
It is easy to check that $\iprod{a_i, b_j} = \iprod{a_i, c_j} = 0$, and $\iprod{b_i, c_j} = \nfrac{1}{4} \cdot (-\iprod{\uu_i, \uu_j} + |\uu_i||\uu_j|) \ge 0$ for all $i,j$ (and for $i=j,~\iprod{b_i, c_i}=0$). Also, $b_i^2 + c_i^2 = \uu_i^2 = 1-a_i^2$. Further, $\iprod{b_i, b_j} = \nfrac{1}{4} \cdot (\iprod{\uu_i, \uu_j} + |\uu_i||\uu_j|) \ge 0$.
Last but not least, it can be seen that the objective value is
\[ \sum_{i,j} A_{ij}\iprod{b_i-c_i, b_j-c_j} = \sum_{i,j} A_{ij} \iprod{\uu_i, \uu_j}, \]
as desired. Note that we never even used the inequalities~\eqref{eq:triang-ineq}, so it is only as strong as the eigenvalue relaxation (and weaker than the sdp relaxation we consider). 

Additional valid constraints of the form $a_i+b_i+c_i=v_0$ (where $v_0$ is a designated fixed vector) can be introduced -- however it it can be easily seen that these do not add any power to the relaxation.
\fi

\section{Normalized \qpr~} \label{sec:nqpr}

Given any symmetric matrix $A$, the normalized QP-Ratio problem aims to find the best $\{-1,0,1\}$ assignment which maximizes the following:
\begin{align}
\max_{{\bf x} \in \{-1,0,1\}^n}&\frac{\sum_{i \ne j} 2a_{ij} x_i x_j}{\sum_{i \ne j} |a_{ij}|(x_i^2+x_j^2)}\\
&=\frac{x^t A x}{\sum_i d_i x_i^2} \quad \text{ where } d_i=\sum_j |a_{ij}| \text{ are ``the degrees''} \notag
\end{align}

Note that when the degrees $d_i$ are all equal ($d_i=d \quad \forall i$), this is the same as \qpr~ upto a scaling. Though the two objectives have a very similar flavor, the normalized objective tends to penalize picking vertices of high degree in the solution.

This problem was recently considered by Trevisan~\cite{trevisan} in the special case when $A=-W(G)$ where $W(G)$ are the matrix of edge weights ($0$ if there is no edge) and called this quantity the \emph{GainRatio} of $G$.  
%\subsection{Eigenvalue approach to Max-CutGain}\label{sec:cutgain}
He gave an algorithm for Max Cut-Gain which uses GainRatio as a subroutine, based purely on an eigenvalue relaxation (as opposed to the SDP-based algorithm of \cite{CW}). 
\ifnum\full<1
\paragraph{Eigenvalue relaxation for Max Cut-Gain}

Consider the natural relaxation 
\begin{equation}\label{eq:eigrelax}
\max_{{\bf x} \in [-1,1]^n} \frac{x^t A x}{\sum_i d_i x_i^2}=\frac{2\sum_{i \ne j} a_{ij} x_i x_j}{\sum_{i \ne j} |a_{ij}|(x_i^2+x_j^2)}
\end{equation}
This is also the maximum eigenvalue of $D^{-1/2}AD^{1/2}$ where $D$ is the diagonal matrix of degrees. Trevisan \cite{trevisan} gave a randomized rounding technique which uses just threshold cuts to show that if the
eigenvalue is $\eps$, the GainRatio is at least $e^{-O(1/\eps)}$. 
His algorithm for GainRatio can also be adapted to give an algorithm for Normalized QP-Ratio with a similar guarantee. We give more details in the full version. 

\fi
\ifnum\full=1
His algorithm for GainRatio can also be adapted to give an algorithm for Normalized QP-Ratio with a similar guarantee. We sketch it below. 

\subsection{Algorithm based on \cite{trevisan}}

Consider the natural relaxation 
\begin{equation}\label{eq:eigrelax}
\max_{{\bf x} \in [-1,1]^n} \frac{x^t A x}{\sum_i d_i x_i^2}
\end{equation}
This is also the maximum eigenvalue of $D^{-1/2}AD^{1/2}$ where $D$ is the diagonal matrix of degrees. Trevisan \cite{trevisan} gave a randomized rounding technique which uses just threshold cuts to give the following guarantee.

\begin{lemma}\cite{trevisan}\label{lem:trev}
In the notation stated above, for every $\gamma>0$, there exists $c_1,c_2>0$ with $c_1 c_2 \le \gamma e^{1/\gamma}$, such that given any $\bx \in \R^n$ s.t. $\bx^t A \bx \geq \eps \bx^t D \bx$, outputs a distribution over discrete vectors $\{-1,0,1\}^n$ (using threshold cuts) with the properties:
\begin{enumerate}
\item $|c_1 \E Y_iY_j - x_ix_j | \le \gamma (x_i^2+x_j^2)$
\item $\E |Y_i| \le c_2 x_i^2$
\end{enumerate} 
\end{lemma}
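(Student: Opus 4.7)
The natural approach is randomized threshold rounding. By scaling, assume $\max_i |x_i| = 1$, draw a random threshold $t \in [0,1]$ from a distribution $\mu$ with CDF $F$, and set $Y_i = \mathrm{sign}(x_i) \cdot \mathbf{1}[|x_i| \geq t]$. Under this rule, $\E|Y_i| = F(|x_i|)$ and $\E[Y_i Y_j] = \mathrm{sign}(x_i x_j) \cdot F(\min(|x_i|, |x_j|))$, so property (2) of the lemma is exactly the pointwise bound $F(u) \leq c_2 u^2$ and the natural target is $F(u) \approx u^2/c_2$.

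The main obstacle is property (1): even with the ideal $F(u) = u^2$ (achieved by $t = \sqrt{U}$ for $U$ uniform on $[0,1]$), one gets $\E[Y_iY_j] = \mathrm{sign}(x_ix_j)\min(x_i^2, x_j^2)$, which can deviate from $x_i x_j$ by as much as $\tfrac{1}{2}(x_i^2+x_j^2)$ when the magnitudes differ substantially. To drive this slack down to $\gamma(x_i^2+x_j^2)$, I would use a multi-scale threshold: partition $[0,1]$ into geometric buckets $I_k = [(1-\gamma)^{k+1}, (1-\gamma)^k]$ and choose $\mu$ with a piecewise density so that $F(u) \leq c_2 u^2$ holds uniformly across scales. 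The key structural observation is that whenever $|x_i|$ and $|x_j|$ both survive some threshold $t \in I_k$, they lie within a multiplicative window of width $(1-\gamma)^{-1}$, and hence $|x_i x_j|$ and $\min(x_i^2, x_j^2)$ agree up to a factor $1+O(\gamma)$; pairs landing in different buckets contribute the remaining $O(\gamma)(x_i^2+x_j^2)$ slack via the trivial estimate $|ab| \leq \tfrac{1}{2}(a^2+b^2)$.

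A careful balancing of the bucket weights across the $O(1/\gamma)$ relevant scales yields the claimed trade-off $c_1 c_2 \leq \gamma e^{1/\gamma}$, where the exponential factor arises from the cumulative weighting required to keep $F(u) \lesssim u^2$ across so many scales (the characteristic product being $(1-\gamma)^{-1/\gamma} \to e$). The step I expect to be most delicate is verifying property (1) for the cross-bucket pairs, since the naive bound only delivers $\min(x_i^2, x_j^2)$ in place of $x_i x_j$ and the resulting slack has to be charged against the bucket weights carefully. Finally, the rounding is easily derandomized: only $O(n)$ distinct thresholds (the values $|x_i|$ together with the bucket endpoints) produce distinct rounded vectors, so one can either enumerate candidate thresholds and return the best, or apply the method of conditional expectations over the bucket index and the threshold within it.
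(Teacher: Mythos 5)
The paper itself does not prove this lemma: it is stated as a direct citation to Trevisan's work and used as a black box in the proposition immediately following it. So there is no in-paper argument to compare against --- what you are doing is reconstructing Trevisan's rounding lemma from scratch.

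Your reconstruction has a genuine gap, located precisely where you flag the argument as delicate. First, the stated "key structural observation" is false: if $t\in I_k$ and both $|x_i|,|x_j|\ge t$, this only lower-bounds the two magnitudes; it does not confine them to a multiplicative window of width $(1-\gamma)^{-1}$. Two coordinates can survive every threshold in $I_k$ while having magnitudes $1$ and $(1-\gamma)^{k}$, arbitrarily far apart. More fundamentally, the framework you fix --- a single random threshold $t$ with CDF $F$ and $Y_i=\mathrm{sign}(x_i)\mathbf{1}[|x_i|\ge t]$ --- cannot achieve property (1) with a single universal $c_1$, regardless of which $F$ is chosen. In this framework $\E[Y_iY_j]=\mathrm{sign}(x_ix_j)\,F(\min(|x_i|,|x_j|))$. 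Consider an input with $x_1=x_2=a$ and $x_3=2a$, all of the same sign (such a $\bx$ can be realized inside a larger instance satisfying the eigenvalue hypothesis). Property (1) for the pair $(1,2)$ forces $c_1F(a)\in[(1-2\gamma)a^2,\ (1+2\gamma)a^2]$; property (1) for the pair $(1,3)$ forces $c_1F(a)\in[(2-5\gamma)a^2,\ (2+5\gamma)a^2]$, since $\min(|x_1|,|x_3|)=a$. These two windows are disjoint as soon as $\gamma<1/7$. So the obstruction to your "careful balancing of bucket weights" is not a bookkeeping issue: $\min(|x_i|,|x_j|)$ and $x_ix_j$ are simply not comparable across different magnitude ratios by any one pair $(c_1,F)$. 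A correct proof has to change the structure of $\E[Y_iY_j]$ itself --- for instance by randomly rescaling or otherwise preprocessing the vector before the threshold is applied --- not merely tune the threshold distribution.
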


\begin{proposition}
Given a \nqpr instance $A$ with value at least $\eps$ finds a solution $y \in \{-1,0,1\}^n$ of value $e^{-O(1/\epsilon)}$. 
\end{proposition}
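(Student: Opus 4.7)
The plan is to solve the eigenvalue relaxation~\eqref{eq:eigrelax} to obtain $\bx \in [-1,1]^n$ with $\bx^t A \bx \geq \eps \bx^t D \bx$, then apply Trevisan's rounding (Lemma~\ref{lem:trev}) with an appropriately chosen $\gamma$, and finally compare the expected numerator and denominator of the resulting integer vector $Y$ to derive an outcome with a large ratio.

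Concretely, I would fix $\gamma = \eps/4$, which by Lemma~\ref{lem:trev} yields constants $c_1, c_2 > 0$ with $c_1 c_2 \leq (\eps/4)\, e^{4/\eps}$. For the numerator, property (1) of the lemma gives, for every $i,j$, $c_1 \E[Y_i Y_j] \geq x_i x_j - \gamma(x_i^2 + x_j^2)$. Summing against $a_{ij}$ and using $\sum_{ij}|a_{ij}|(x_i^2+x_j^2) = 2\bx^t D \bx$,
\begin{equation*}
c_1 \, \E[Y^t A Y] \;\geq\; \bx^t A \bx - \gamma \sum_{ij} |a_{ij}|(x_i^2+x_j^2) \;\geq\; (\eps - 2\gamma)\, \bx^t D \bx \;=\; \tfrac{\eps}{2}\, \bx^t D \bx.
\end{equation*}
For the denominator, since $Y_i \in \{-1,0,1\}$ we have $Y_i^2 = |Y_i|$, so property (2) gives $\E[Y^t D Y] = \sum_i d_i \E|Y_i| \leq c_2 \, \bx^t D \bx$. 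Dividing, the ratio of expectations satisfies
\begin{equation*}
\frac{\E[Y^t A Y]}{\E[Y^t D Y]} \;\geq\; \frac{\eps/2}{c_1 c_2} \;\geq\; \frac{\eps/2}{(\eps/4)\, e^{4/\eps}} \;=\; 2\, e^{-4/\eps} \;=\; e^{-O(1/\eps)}.
\end{equation*}

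The subtle step, which I would flag as the main obstacle, is passing from a ratio of expectations to a single outcome of the random rounding with comparable ratio. This is handled by the standard trick: writing $\alpha = \E[Y^t A Y]/\E[Y^t D Y]$, we have $\E[Y^t A Y - \alpha \cdot Y^t D Y] \geq 0$, so some realization $y$ in the support satisfies $y^t A y \geq \alpha \cdot y^t D y$; since $y^t D y \geq 0$ (and the case $y^t D y = 0$ forces $y = 0$, which we discard), this realization achieves normalized ratio at least $\alpha = e^{-O(1/\eps)}$.

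Finally, derandomization is immediate: Trevisan's rounding scheme is supported on only polynomially many threshold cuts derived from $\bx$, so we enumerate all of them and output the one maximizing the ratio. This produces a deterministic $\{-1,0,1\}$ solution of value $e^{-O(1/\eps)}$ as claimed.
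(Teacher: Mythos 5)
Your proof is correct and follows essentially the same route as the paper's: solve the eigenvalue relaxation for $\bx$, apply Lemma~\ref{lem:trev} to lower-bound $\E[Y^tAY] \ge (\eps-2\gamma)\bx^tD\bx/c_1$ and upper-bound $\E[\sum_i d_i|Y_i|] \le c_2 \bx^tD\bx$, pass from the ratio of expectations to a single realization, and derandomize by enumerating the $n$ threshold cuts. The only differences are cosmetic: you fix $\gamma = \eps/4$ where the paper leaves ``$\gamma < \eps/2$ sufficiently small,'' and you spell out the expectation-to-realization step that the paper leaves implicit.
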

\begin{proof}
For the eigenvalue relaxation equation~\ref{eq:eigrelax}, there is a feasible solution $\bx$ such that $\bx^t A \bx \ge \eps \bx^t D \bx$ where $D_ii = \sum_j |a_{ij}|$ and $D_{ij}=0$ for $i \ne j$. 
Now applying Lemma~\ref{lem:trev}, we have
\begin{align*}
\E[ a_{ij} Y_iY_j ] &\ge \frac{1}{c_1} \big( a_{ij}x_ix_j - \gamma|a_{ij}|(x_i^2 + x_j^2) \big) \\
\E[ \sum_{ij}  a_{ij} Y_i Y_j]  &\ge \frac{1}{c_1} \big( \bx^t A \bx - 2 \gamma \bx^t D \bx) \\
& \ge \frac{(\eps - 2 \gamma)}{c_1} (\bx^t D \bx)
\end{align*}
Also, $\E[\sum_i d_i |Y_i|] \le c_2 \bx^t D \bx$. Hence, there exists some vector $\by \in \{-1,0,1\}^n$ of value $(\eps - 2\gamma)/c_1c_2$, which shows what we need for sufficiently small $\gamma < \eps/2$. As in previous section~\ref{sec:algo}, this can also be derandomized using the method of conditional expectations (in fact, since the distribution is just over threshold cuts, it suffices to run over all $n$ threshold cuts to find the vector $\by$).
\end{proof}
\fi

\ifnum\full=1

\subsection{Eigenvalue relaxation for Max Cut-Gain}

As mentioned earlier Trevisan~\cite{trevisan} shows that if the
eigenvalue is $\eps$, the GainRatio is at least $e^{-O(1/\eps)}$. 
\fi He also conjectures that there could a better dependence: that the
GainRatio is at least $\eps / \log(1/\eps)$, whenever $\text{eigenval}
= \eps$. This would give an eigenvalue based algorithm which matches
the SDP-based algorithm of \cite{CW}.  We show that this conjecture is
false, and describe an instance for which eigenval is $\eps$, but the
GainRatio is at most $\exp(-1/\eps^{1/4})$. This shows that
the eigenvalue based approach is necessarily `exponentially' weaker
than an SDP-based one. Roughly speaking, SDPs are stronger because
they can enforce vectors to be all of equal length, while this cannot
be done in an eigenvalue relaxation. 
\ifnum\full=1
First, let us recall the
eigenvalue relaxation for Max CutGain
\[ \text{Eig} = \max_{x_u \in [-1,1]} \frac{\sum_{\edge} - w_{uv} x_u
  x_v}{\sum_{\edge} |w_{uv}| (x_u^2 + x_v^2)}. \] 
\fi

\paragraph{Description of the instance.}
In what follows, let us fix $\eps$ to be a small constant, and write
$M=1/\eps$ (thought of as an integer), and $m= 2/\eps$.

The vertex set is $V = V_1 \cup V_2 \cup \dots \cup V_m$, where $|V_i|
= M^i$. We place a clique with edge weight $1$ on each set
$V_i$. Between $V_i$ and $V_{i+1}$, we place a complete bipartite
graph with edge-weight $(1/2 + \eps)$.  We will call $V_i$ the $i$th
level. Thus the total weight of the edges in the $i$th level is
roughly $M^{2i}/2$, and the weight of edges between levels $i$ and
$(i+1)$ is $(1/2 + \eps) M^{2i+1}$.

\begin{lemma}\label{lem:gainratio:completeness}
There exist $x_i \in [-1,1]$ such that
$ \frac{ \sum_{\edge} -w_{uv} \cdot x_u x_v}{ \sum_{\edge} |w_{uv}| (x_u^2 +
 x_v^2)} = \Omega \big( \eps^2 \big).$
\end{lemma}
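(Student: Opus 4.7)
The plan is to exhibit a simple \emph{geometric} fractional assignment and then evaluate the numerator and denominator of the Cut-Gain ratio level by level.  Assign to every vertex $u \in V_i$ the value $x_u = (-1)^i \alpha_i$ with $\alpha_i = M^{-i}$ (so $x_u \in [-1,1]$ as required).  The sign is alternating so that the between-level bipartite edges (which carry positive weight $1/2+\eps$) contribute \emph{positively} to $-w_{uv}x_ux_v$, while the clique edges inside each level contribute negatively.  The geometric decay $\alpha_i = M^{-i}$ is chosen so that every level contributes the same order of magnitude to the numerator, letting the positive between-level contributions outweigh the negative within-level ones by an additive term proportional to~$\eps$ per level.

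The first step is a direct computation of the numerator $N = \sum_{\{u,v\}\in E} (-w_{uv}) x_u x_v$ level by level.  Inside $V_i$ the two endpoints of a clique edge carry the same sign, so each edge contributes $-\alpha_i^2 = -M^{-2i}$; since there are $\binom{M^i}{2} \approx M^{2i}/2$ such edges the total is $-\tfrac12 + O(M^{-i})$ per level, i.e.\ $-m/2 + O(1)$ across all $m$ levels.  Between $V_i$ and $V_{i+1}$ the signs oppose, so each of the $M^{2i+1}$ edges contributes $(1/2+\eps)\,\alpha_i\alpha_{i+1} = (1/2+\eps)M^{-2i-1}$, totalling exactly $(1/2+\eps)$ per pair of adjacent levels.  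Summing yields $N = -m/2 + (m-1)(1/2+\eps) + O(1) = m\eps - \Theta(1)$, which is $\Theta(m\eps) = \Theta(1)$ once we substitute $m = 2/\eps$.

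The second step is the analogous estimate for the denominator $D = \sum_{\{u,v\}\in E} |w_{uv}|(x_u^2+x_v^2)$.  The within-clique contribution at level $i$ is at most $M^i(M^i-1)\cdot M^{-2i} \le 1$, contributing at most $m$ in total.  The between-level contribution at consecutive levels $i,i+1$ is $(1/2+\eps) M^{2i+1}(M^{-2i}+M^{-2i-2}) = O(M)$, contributing $O(mM) = O(1/\eps^2)$ overall.  Thus $D = O(1/\eps^2)$.  Combining, $N/D = \Omega(1)/O(1/\eps^2) = \Omega(\eps^2)$, as desired.

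I do not foresee any real obstacle; the only mild care required is the bookkeeping of the lower-order $O(1)$ terms (coming from $\binom{M^i}{2}$ versus $M^{2i}/2$, from the missing $(m,m{+}1)$ between-level term, and from the factor $1+M^{-2}$ in the denominator's between-level sum) to ensure that the numerator stays bounded away from $0$ once $\eps$ is small and $m=2/\eps$.  Picking $m$ exactly $2/\eps$ (rather than, say, $1/\eps$) guarantees the dominant positive term $m\eps$ is at least $2$, comfortably beating the constant loss from the within-clique negative contributions.
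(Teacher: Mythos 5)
Your proposal is correct and follows essentially the same route as the paper: the paper also assigns $x_u = (-1)^i\eps^i = (-1)^i M^{-i}$ to vertices at level $i$ and evaluates numerator and denominator level by level, arriving at numerator $-m/2+(m-1)(1/2+\eps)=\Theta(1)$, denominator $O(m/\eps)=O(1/\eps^2)$, and hence $\Omega(\eps^2)$. Your bookkeeping of the constant-order error terms and the remark about choosing $m = 2/\eps$ rather than $1/\eps$ are exactly the details the paper relies on implicitly.
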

\ifnum\full<1
Setting $x_u = (-1)^i \eps^i$, where $i$ is the level of $u$ turns out to work (see the full version for the proof).
\fi
%%%%%%%%%%%%%%%For the long version%%%%%%%%%%%%%%%%
\ifnum\full=1
\begin{proof}
Consider a solution in which vertices $u$ in level $i$ have $x_u =
  (-1)^i \eps^{i}$. We have
\begin{align*}
  \frac{ \sum_{\edge} -w_{uv} \cdot x_u x_v}{ \sum_{\edge} x_u^2 +
    x_v^2} &= \frac{- N_0^2 \sum_{i=1}^m (M^{2i}/2) \eps^{2i} + N_0^2
    \sum_{i=1} ^{m-1} \eps^{2i+1} (\frac{1}{2} + \eps) M^{2i+1}}{
    N_0^2 \sum_{i=1}^m (M^{2i}/2) \cdot 2\eps^{2i}
    + N_0^2 \sum_{i=1}^{m-1} (\frac{1}{2}+\eps) M^{2m+1} (\eps^{2i} + \eps^{2i+2})}\\
  &\ge \frac{-\frac{m}{2} + (m-1)(\frac{1}{2}+\eps)}{\frac{3}{\eps}m} \qquad \text{(noting $M\eps = 1$)}\\
  &= \Omega(\eps^2) \qquad \text{(setting $m \approx \frac{2}{\eps}$)}
\end{align*}
\end{proof}

\fi
\vspace{0.02in}

%%%%%%%%%%%%%%%%%%%%%%%%%%%%%%%%%%%%%%%%%%%%%%%%%%%%%%%%%%%%%%%%%%%%
Let us now prove an upper bound on the GainRatio of $G$.
Consider the optimal
solution $Y$. Let the fraction of vertices $u$ in level $i$ with non-zero $Y_u$ be $\lambda_i$. Of these, suppose
$(\frac{1}{2}+\eta_i)$ fraction have $Y_u = +1$ and
$(\frac{1}{2}-\eta_i)$ have $Y_u = -1$. It is easy to see that we may
assume $\eta_i$'s alternate in sign 
\ifnum\full<1
(as $i$ ranges from $1$ to $m$,
because if $\eta_i$ and $\eta_{i+1}$ are both positive, we can
increase the value of the solution by swapping the signs of all
$\eta_j$ for $1\leq j \leq i$). 
\fi
Thus, for convenience, we will let
$\eta_i$ denote the negated values for the alternate levels and treat
all $\eta_i$'s as positive.

With these parameters,
\ifnum\full<1 a simple calculation (see the full version for details) shows that the 
$\qquad \text{Numerator} = 2 \Big( \sum_{i=1}^m - M^{2i} \lambda_i^2 \eta_i^2 +
  (1+2\eps)\sum_{i=1}^{m-1} M^{2i+1} \lambda_i \lambda_{i+1} \eta_i
  \eta_{i+1} \Big)$.

\fi
\ifnum\full=1  we see that
\begin{eqnarray*}
  \text{Numerator} = \sum_{i=1}^m M^{2i} \lambda_i^2 \Big[ - \frac{1}{2}
  \big(\frac{1}{2} +\eta_i \big)^2 - \frac{1}{2} \big(\frac{1}{2} -\eta_i \big)^2 +
  \big(\frac{1}{2} +\eta_i\big) \big(\frac{1}{2}-\eta_i \big)\Big] \\ +
  \sum_{i=1}^{m-1} \big( \frac{1}{2} + \eps \big) M^{2i+1} \lambda_i
  \lambda_{i+1} \Big[\big(\frac{1}{2} + \eta_i\big)\big(\frac{1}{2} +
  \eta_{i+1}\big) + \big(\frac{1}{2} - \eta_{i}\big)\big( \frac{1}{2} -
  \eta_{i+1}\big) \\~~~~- \big( \frac{1}{2} + \eta_i\big) \big(\frac{1}{2} -
  \eta_{i+1}\big) - \big( \frac{1}{2} - \eta_i \big) \big(\frac{1}{2} + \eta_{i+1}
  \big) \Big]\\ = 2 \Big( \sum_{i=1}^m - M^{2i} \lambda_i^2 \eta_i^2 +
  (1+2\eps)\sum_{i=1}^{m-1} M^{2i+1} \lambda_i \lambda_{i+1} \eta_i
  \eta_{i+1} \Big)
\end{eqnarray*}
Hence the numerator is
\begin{equation}
\text{Numerator} = 2 \Big( \sum_{i=1}^m - M^{2i} \lambda_i^2 \eta_i^2 +
  (1+2\eps)\sum_{i=1}^{m-1} M^{2i+1} \lambda_i \lambda_{i+1} \eta_i
  \eta_{i+1} \Big)
\end{equation}
\fi
Note that the denominator is at least $\sum_i \lambda_i M^{2i}$ (there is a contribution
from every edge at least one end-point of which has $Y$ nonzero).
We will in fact upper bound the quantity $\text{Numerator} / 2\sum_i \lambda_i
\eta_i M^i$. This clearly gives an upper bound on the ratio we are
interested in (as the $\eta_i$ are smaller than $1/2$). Let us write
$\gamma_i = \lambda_i \eta_i$.  We are now ready to prove the theorem which implies the desired gap.  A simple inequality useful in the proof is the following (it follows from the well-known fact that the largest eigenvalue of the length $n$ path is $\cos(\frac{\pi}{n+1}) \approx 1 - \frac{1}{n^2}$):\\
\ifnum\full<1
$\qquad \text{$\forall n>1$ and $x_i \in \mathbb{R}$, } x_1^2 + x_2^2 + \dots x_n^2 \geq \big(1+\frac{1}{n^2}\big) (x_1 x_2
+ x_2 x_3 + \dots x_{n-1}x_n)$.
\fi
\ifnum\full=1
\begin{equation}\label{eq:useful-inequality}
\text{$\forall n>1$ and $x_i \in \mathbb{R}$, } x_1^2 + x_2^2 + \dots x_n^2 \geq \big(1+\frac{1}{n^2}\big) (x_1 x_2
+ x_2 x_3 + \dots x_{n-1}x_n)
\end{equation}
\fi
\begin{theorem}\label{thm:gainratio}
Let $\gamma_i \geq 0$ be real numbers in $[0,1]$, and let $\eps, M, m$
be as before. Then
\begin{equation}\label{expr1}
  \frac{ - \sum_{i=1}^m {\gamma_i^2 M^{2i}} + (1+2\eps)
    \sum_{i=1}^{m-1} {\gamma_i \gamma_{i+1} M^{2i+1}} }{ \sum_{i=1}^m
    {\gamma_i M^{2i}}} < \frac{1}{M^{\sqrt{m}/4}}
\end{equation}
\end{theorem}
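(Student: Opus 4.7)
The plan is to prove the bound in three stages, using the inequality stated just before the theorem together with the exponential growth of $M^i$.

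\emph{Step 1 — reduction to a single run.} The set $\{i:\gamma_i>0\}$ splits into maximal consecutive intervals, and both the numerator and the denominator decompose as sums of contributions over these intervals (no cross-term $\gamma_i\gamma_{i+1}$ spans two different intervals). Since $(\sum_k a_k)/(\sum_k b_k)\le\max_k a_k/b_k$ when the $b_k$ are positive, it suffices to bound the ratio when $\gamma$ is supported on a single run $[a,a+n-1]$. I will work with the substitution $y_i=\gamma_i M^i$, so the constraint $\gamma_i\in[0,1]$ becomes $0\le y_i\le M^i$.

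\emph{Step 2 — short runs.} Applying the stated inequality to $y_a,\ldots,y_{a+n-1}$ gives $\sum y_i^2\ge(1+1/n^2)\sum y_iy_{i+1}$. When $n\le 1/\sqrt{2\eps}=\sqrt{M/2}$, we have $1/n^2\ge 2\eps$, so the numerator $-\sum y_i^2+(1+2\eps)\sum y_iy_{i+1}$ is nonpositive and the bound is trivial. Since $\sqrt{m}/4=\sqrt{M/8}<\sqrt{M/2}$, this disposes of every run shorter than the target exponent.

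\emph{Step 3 — long runs give exponential decay.} For $n>\sqrt{M/2}$, I exploit the geometric weights $M^i$ in the denominator. The tridiagonal quadratic form $Q(y)=-\sum y_i^2+(1+2\eps)\sum y_iy_{i+1}$ has Rayleigh quotient at most $\mu_n=(1+2\eps)\cos(\pi/(n+1))-1\le 2\eps$, attained on the path top-eigenvector $y_i\propto\sin(\pi(i-a+1)/(n+1))$. For $y$ of this sinusoidal shape the constraint $y_i\le M^i$ binds at the \emph{leftmost} index $i=a$ (where $M^i$ is smallest), capping the overall scale at $O(nM^a)$. This gives $\|y\|_2^2=O(n^3M^{2a})$, whereas $\sum y_iM^i$ is dominated by its rightmost terms and equals $\Omega(M^{2a+n-1}/n)$. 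Dividing yields ratio $O(\mu_n n^4/M^{n-1})=O(n^4/M^n)$, and combined with $n>\sqrt{M/2}>\sqrt{m}/4$ this is $<1/M^{\sqrt{m}/4}$ once $M$ is sufficiently large.

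The main obstacle is justifying the shape argument in Step 3 for arbitrary admissible $y$, not just sinusoidal ones. The cleanest route is a KKT analysis at the maximizer: every strictly interior $y_i$ satisfies the linear recurrence $(1+2\eps)(y_{i-1}+y_{i+1})-2y_i=RM^i$ with boundary conditions $y_{a-1}=y_{b+1}=0$. The general solution splits as a particular piece of size $O(RM^{i-1})$ plus a homogeneous piece $A\cos(i\theta)+B\sin(i\theta)$ with $\theta=\arccos(1/(1+2\eps))\approx 2/\sqrt{M}$; matching the boundary conditions pins $|A|,|B|=\Theta(RM^b/|\sin((n+1)\theta)|)$. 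The pointwise constraint $y_a\le M^a$ then forces $R\le M^{-(n-1)}|\sin((n+1)\theta)|/2$, which since $n-1\ge\sqrt{m}/4$ (for all sufficiently large $M$) is stronger than the required bound. A short case analysis, reapplying the path inequality of Step 2 on saturated sub-intervals, handles indices where $y_i$ attains its upper bound $M^i$.
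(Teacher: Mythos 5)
Your Steps 1 and 2 are sound, and the overall strategy (decompose the support of $\gamma$ into maximal runs, kill short runs with the path inequality, and argue long runs are suppressed by the exponential weights) is a genuinely different route from the paper. But Step 3 has a real gap. The chain $Q(y)\le \mu_n\|y\|_2^2$ followed by $\|y\|_2^2=O(n^3M^{2a})$ only closes for $y$ proportional to the path top-eigenvector; for general admissible $y$ the second bound is simply false (e.g.\ $y_i$ close to $M^i$ throughout gives $\|y\|_2^2\approx M^{2b}$). You recognize this and propose a KKT analysis, but that is a plan rather than a proof. The saturated case is the crux: once some $y_i$ hits $M^i$, the linear recurrence you solve holds only on unsaturated stretches, and stitching the pieces back together (with Lagrange-multiplier slack at the saturated indices and cross-terms at the seams) is precisely the hard part. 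The remark that one should ``reapply the path inequality of Step 2 on saturated sub-intervals'' does not obviously work either --- a saturated sub-interval can be long, so the path inequality with its $(1+1/n^2)$ coefficient has no purchase there; what actually controls a saturated stretch is that $\sum y_i^2\approx M\cdot\sum y_iy_{i+1}$, a different mechanism that still needs to be meshed with the interior recurrence. There is also a sign/direction issue in bounding $R$ from the constraint at $i=a$: from $y_{a-1}=0$ the oscillatory piece nearly vanishes at $a-1$, so at $i=a$ it is of size $\theta\cdot\Theta(RM^b/|\sin((n+1)\theta)|)$, not the full amplitude, and you need to use \emph{both} $y_a\le M^a$ and $y_a\ge 0$ to force $R$ small; as written only the upper bound is invoked.

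For comparison, the paper sidesteps the long-run difficulty entirely: it locates $r=\arg\max_i\gamma_iM^i$, sets $D=\gamma_rM^r$, and shows that any index $j$ outside a window of length $O(\sqrt m)$ around $r$ with $\gamma_jM^j\ge D/M^{\sqrt m/4}$ already forces the ratio below the target (the denominator alone is then $\ge D^2M^{\sqrt m/4}$ while the numerator is crudely $\le 3mD^2$). So outside the window every $\gamma_jM^j$ is tiny, the window contribution to the numerator is $\le 0$ by the path inequality, the cross-terms are bounded by $mD^2/M^{\Theta(\sqrt m)}$, and the denominator is $\ge D^2$. No run decomposition, no Rayleigh quotient, no KKT; the exponential weights are exploited only through the single quantity $\gamma_rM^r$. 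If you want to complete your route you will need to actually carry out the saturated/unsaturated case analysis, which looks substantially more work than the paper's one-page argument.
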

\ifnum\full<1
We refer the reader to the full version for the full proof. The gap instance presented here is a weighted graph. In the full version, we show how we can construct an unweighted graph as well. 
\fi
\ifnum\full=1
\begin{proof}
Consider the numbers $\gamma_i M^i$ and let $r$ be the index where
it is maximized. Denote this maximum value by $D$.

{\em Claim.}  Suppose $1 \le j\le m$ and $j \not\in [r-\frac{m^{1/2}}{2}, r+\frac{m^{1/2}}{2}]$ and $\gamma_j M^j \ge \frac{D}{M^{\sqrt{m}/4}}$.  Then \eqref{expr1} holds.\\
Suppose first that $j > r+\frac{m^{1/2}}{2}$.  The numerator numerator of \eqref{expr1} is
at most $D^2 \times 3m$ while the denominator is at least $\gamma_j
M^{2j} > \frac{D}{M^{\sqrt{m}/4}} \times M^{j} >
\frac{D}{M^{\sqrt{m}/4}} \times M^{r +\frac{1}{2} \sqrt{m}} > D^2
\times M^{\sqrt{m}/4}$. (the last inequality is because $D < M^r$, since $\gamma_r <1$).
This implies that the ratio is at most $\frac{1}{M^{\sqrt{m}/4}}$, and hence \eqref{expr1} holds.
  
Next, suppose $j < r - \frac{m^{1/2}}{2}$.  This means,
since $\gamma_j <1$, that $\gamma_r < M^{-\sqrt{m}/4}$. Thus the
numerator of \eqref{expr1} is bounded from above by $D \times 3m$ as
  above, while the denominator is at least $\gamma_r M^{2r} =
  D^2/\gamma_r > D^2 M^{\sqrt{m}/4}$. Thus the ratio is at most
  $\frac{1}{M^{\sqrt{m}/4}}$, proving the claim.

  Thus for all indices $j \not\in [r- \frac{1}{2} \sqrt{m}, r+
  \frac{1}{2} \sqrt{m}]$ (let us call this interval $\mathcal{I}$), we
  have $\gamma_j M^j < \frac{D}{M^{\sqrt{m}/4}}$.  We thus split the
  numerator of \eqref{expr1} as
  \[ \Big(-\sum_{i \in \mathcal{I}} \gamma_i^2 M^{2i} + (1+2\eps)
  \cdot \sum_{i, (i+1) \in \mathcal{I}} \gamma_i \gamma_{i+1} M^{2i+1}
  \Big) + \mbox{remaining terms} \]

  Note that the part in the parenthesis is $\leq 0$ by suitable
  application of Eq.~\eqref{eq:useful-inequality}, while the remaining terms are each
  smaller than $\frac{D^2}{M^{\sqrt{m}/2}}$. Thus the numerator is $<
  m \frac{D^2}{M^{\sqrt{m}/2}}$.  Note that the denominator (of
  \eqref{expr1}) is at least $D^2$. These two together complete the
  proof of the theorem.
\end{proof}
\fi
%%%%%%%%%%%%%%%%%%%%%%%%%%%%%%%%%%%%%%%%%%%%%%%%%%%%%%
\ifnum\full=1
\paragraph{Moving to an unweighted instance}
Now we will show that by choosing $N_0$ large enough (recall that we chose $V_i$ of size $N_0
M^i$), we can bound how far cuts are from expectation. Let us start
with a simple lemma.

\begin{lemma}
  Suppose $A$ and $B$ are two sets of vertices with $m$ and $n$
  vertices resp. Suppose each edge is added independently at random
  w.p. $(\frac{1}{2} + \eps)$. Then $$ \Pr \big[|\mbox{\# Edges} -
  (\frac{1}{2} + \eps) mn| > t \sqrt{mn} \big] < e^{-t^2/2}$$
\end{lemma}
\begin{proof}
  Follows from Chernoff bounds (concentration of binomial r.v.s)
\end{proof}

Next, we look at an arbitrary partitioning of vertices with
$\lambda_i$ and $\eps_i$ values as defined previously ($\lambda_i$ of
the $Y$'s nonzero, and $(\frac{1}{2} + \eps_i)$ of them of some sign).

Let us denote $n_i = N_0 \lambda_i M^i$. Between levels $i$ and $i+1$,
the probability that the number of edges differs from expectation by
$(n_i n_{i+1})^{1/3}$ is at most (by the lemma above) $e^{-(n_i
  n_{i+1})^{2/3}}$.  By choosing $N_0$ big enough (say $M^{10m}$) we
can make this quantity smaller than $e^{-12m}$ (since each $n_i$ is at
least $M^{9m}$). Thus the probability that the sum of the `errors'
over the $m$ levels is larger than $\text{Err} = \sum (n_i
n_{i+1})^{1/3}$ is at most $me^{-12m}$.

The total number of vertices in the subgraph (with $Y$ nonzero) is at
most $N_0 M^m$. Thus the number of cuts is $2^{N_0 M^m} <
e^{11m}$. Thus there exists a graph where none of the cuts have sum of
the `errors' as above bigger than $\text{Err}$.

Now it just remains to bound $\frac{\sum (n_i n_{i+1})^{1/3}}{\sum
\lambda_i N_0^2 M^{2i}}$. Here again the fact that $N_0$ is big comes
to the rescue (there is only a $N_0^{2/3}$ in the numerator) and hence
we are done.
\fi
\section{Hardness of Approximating QP-Ratio}
\newcommand{\fnlin}{f^{\neq 1}}
\newcommand{\xalpha}{\alpha}
\newcommand{\xbeta}{\beta}
\newcommand{\xdelta}{\delta}
\newcommand{\xmu}{\mu}
\newcommand{\xeta}{\eta}
\newcommand{\sselc}{\textsf{SSE Label Cover}~}
\newcommand{\flin}{f^{=1}}
\newcommand{\ql}{\mathcal{Q}_{\mathcal{L}}}
\newcommand{\ff}{\widehat{f}}
\newcommand{\qpinter}{\textsf{QP-Intermediate}}
\newcommand{\sseug}{\textsf{SSE UG~}}
\newcommand{\del}{\delta}
\newcommand{\num}{\mathsf{num}}
\newcommand{\epslb}{\gamma}
\newcommand{\ratioug}{\textsf{Ratio UG~}}

Given that our algorithmic techniques give only an $n^{1/3}$ approximation in general, and the natural relaxations do not seem to help, it is natural to ask how hard we expect the problem to be.  Our results in this direction are as follows: we show that the problem is APX-hard (i.e., there is no PTAS unless $P=NP$).  Next, we show that there cannot be a constant factor approximation assuming that Max $k$-AND is
hard to approximate `on average' (related assumptions are explored in~\cite{feige}).  Our reduction therefore gives a
(fairly) natural {\em hard distribution} for the \qpr problem.

\subsection{Candidate Hard Instances}\label{sec:hiding}

As can be seen from the above, there is an embarrassingly large gap between our
upper bounds and lower bounds.  We attempt to justify this by describing
a natural distribution on instances we do not know how to approximate
to a factor better than $n^{\delta}$ (for some fixed $\delta > 0$).

Let $\cal{G}$ denote a bipartite random graph with vertex sets $V_L$ of size $n$
and $V_R$ of size $n^{2/3}$, left degree $n^{\del}$ for some small $\del$
(say 1/10) [i.e., each edge between $V_L$ and $V_R$ is picked i.i.d. with prob. $n^{-(9/10)}$].
Next, we pick a random (planted) subset $P_L$ of $V_L$ of size $n^{2/3}$ and random assignments
$\rho_L : P_L \mapsto \{+1, -1\}$ and $\rho_R : V_R \mapsto \{+1, -1\}$.  For an edge
between $i \in P_L$ and $j \in V_R$, the weight $a_{ij} := \rho_L(i) \rho_R(j)$.  For all
other edges we assign $a_{ij} = \pm 1$ independently at random.

The optimum value of such a {\em planted} instance is roughly $n^{\del}$, because
the assignment of $\rho_L, \rho_R$ (and assigning $0$ to $V_L \setminus P_L$)
gives a solution of value $n^{\del}$. 
However, for $\del < 1/6$, we do not know how to find such a planted assignment: simple
counting and spectral approaches do not seem to help. 

Making progress on such instances
appears to be crucial to improving the algorithm or the hardness results.
We remark that the instances produced by the reduction from Random $k$-AND are in fact
similar in essence.  We also note the similarity to other problems which are beyond current techniques, such as the Planted Clique and Planted Densest Subgraph problems~\cite{bccfv}.

\newcommand{\vthet}{\vartheta}

\subsection{Reduction from Random $k$-AND}
\label{sec:rand-k-and}
We start out by quoting the assumption we use.
\begin{conjecture}[Hypothesis $3$ in \cite{feige}]\label{conj:kand} For some constant $c > 0$, for every $k$, there is
  a $\Delta_0$, such that for every $\Delta > \Delta_0$, there is no
  polynomial time algorithm that, on most $k$-AND formulas with
  $n$-variables and $m = \Delta n$ clauses, outputs
  \texttt{`typical'}, but never outputs \texttt{`typical'} on
  instances with $m / 2^{c \sqrt{k}}$ satisfiable clauses.
\end{conjecture}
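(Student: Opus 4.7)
The statement labeled as Conjecture 3.1 is Feige's Hypothesis 3, imported as an assumption rather than a theorem: the paper explicitly uses it as an unproven hardness conjecture on which its reduction is built. So a genuine ``proof'' is not on the table; what one can responsibly plan is an assembly of the strongest available evidence, together with a clear statement of what a hypothetical proof would require.

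My plan would have three prongs. First, I would invoke the \emph{distributional indistinguishability} prong: show that the family of fully random $k$-AND formulas at density $\Delta$ and the family of ``planted'' formulas (where a uniformly random partial assignment is used as a bias to push the number of satisfied clauses up to $m/2^{c\sqrt{k}}$) are statistically close in the sense of the second-moment / quiet-planting paradigm, so that no algorithm could tell them apart with success probability significantly above $1/2$ \emph{information-theoretically below some threshold}. Combined with the fact that in the uniform model only $\approx m/2^k$ clauses are satisfied with high probability, this gives the right \emph{gap location} $2^{c\sqrt{k}}$ rather than $2^k$, matching the best known refutation thresholds for random $k$-SAT/$k$-AND by Feige--Kim--Ofek and Coja-Oghlan--Goerdt--Lanka.

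Second, I would invoke \emph{unconditional lower bounds against strong algorithmic models}. The key citation is the Lasserre lower bound of Tulsiani (referenced in the excerpt as~\cite{tulsiani}), which shows that $\Omega(n)$ rounds of the Lasserre/SoS hierarchy cannot certify that a random $k$-AND formula has optimum below $m/2^{c\sqrt{k}}$. Since the SoS hierarchy is strictly more powerful than spectral, LP, and myopic local algorithms, this gives a broad failure-to-refute statement; I would then observe that every known polynomial-time refutation algorithm (spectral, belief-propagation-based, Feige--Ofek) is subsumed by a constant number of SoS rounds, so in particular cannot do better than the $2^{c\sqrt{k}}$ gap.

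Third, I would situate the statement within the web of related hardness hypotheses: a refutation algorithm violating the conjecture would in particular refute Feige's original R3SAT hypothesis (by the standard reduction between $k$-AND and $k$-SAT at comparable densities) and would yield non-trivial algorithms for Planted Dense Subgraph and for the distinguishing version of Planted Clique in the parameter regimes where the planted signal is below the Kesten--Stigum threshold. A reduction in this direction would turn failure of Conjecture 3.1 into an algorithmic breakthrough on several well-studied distributional problems simultaneously.

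The main obstacle, and the reason this must remain a conjecture rather than a theorem, is that any unconditional proof would have to exhibit a superpolynomial lower bound against \emph{arbitrary} polynomial-time algorithms on a natural distribution of NP instances --- a statement strictly stronger than $\mathbf{P} \neq \mathbf{NP}$ and well beyond current techniques. The best one can deliver, and what I would present, is the triple of evidence above: tight second-moment indistinguishability at the $2^{c\sqrt{k}}$ threshold, the Lasserre lower bound that rules out all SoS-captured algorithms, and reductions showing that a counterexample would simultaneously break several independently-believed hardness assumptions.
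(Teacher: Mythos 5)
You are right that this statement is not proved in the paper at all: it is Feige's Hypothesis~3, imported verbatim as an assumption, and the paper's only ``support'' for it is the same evidence you cite in passing (its use for the $2$-Catalog and densest-subgraph hardness results and the fact that $\Omega(n)$ rounds of Lasserre cannot refute it, per the cited work of Tulsiani). Your treatment therefore matches the paper's; no proof exists or is expected, and your summary of the supporting evidence is consistent with what the authors say.
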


The reduction to \qpr{} is then as follows: Given a $k$-AND instance
on $n$ variables $X = \{x_1, x_2, \ldots x_n\}$ consisting of $m$
clauses $C = \{C_1, C_2, \ldots C_m\}$, and a parameter $0 < \alpha <
1$, let $A = \{a_{ij}\}$ denote the $m \times n$ matrix such that
$a_{ij}$ is $1/m$ if variable $x_j$ appears as is in clause $C_i$,
$a_{ij}$ is $-1/m$ if it appears negated and $0$ otherwise.

Let $f:X \rightarrow \{-1,0,1\}$ and $g:C \rightarrow \{-1,0,1\}$ denote functions which are supposed to correspond to assignments. Let $\mu_f = \sum_{i \in [n]}
|f(x_i)| / n$ and $\mu_g = \sum_{j \in m} |g(C_j)| / m$.  Let
\begin{equation}
\vthet(f,g) = \frac{\sum_{ij} a_{ij} f(x_i) g(C_j)}{\alpha \mu_f +   \mu_g}.  \label{eq:kand}
\end{equation}

Observe that if we treat $f(), g()$ as variables, we obtain an instance of QP-Ratio 
\ifnum\full<1
(we describe how get rid of the weighting in the denominator in the full version)
\fi
\ifnum\full=1[as
described, the denominator is {\em weighted}; we need to replicate the variable set $X$ $\alpha \Delta$ times (each copy has same set of neighbors in $C$)
in order to reduce to an unweighted instance -- see 
Appendix~\ref{app:kand:instance} 
for details]. \fi 
We pick $\alpha = 2^{-c\sqrt{k}}$ and $\Delta$ a large
enough constant so that Conjecture~\ref{conj:kand} and
Lemmas~\ref{lem:expansion} and \ref{lem:kand:adv} hold. The completeness follows from the natural assignment 
\ifnum\full<1 (proof in full version) \fi .
\begin{lemma}[Completeness] If the $k$-AND instance is such that an
  $\alpha$ fraction of the clauses can be satisfied,
  then there exists function $f$, $g$ such that $\theta$ is at least
  $k/2$.
\end{lemma}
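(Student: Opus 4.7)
The plan is to exhibit explicit choices of $f$ and $g$ derived from any assignment that satisfies an $\alpha$-fraction of clauses, and verify that $\vartheta(f,g) \ge k/2$ by a direct calculation. Let $\sigma : X \to \{-1,+1\}$ be an assignment satisfying at least $\alpha m$ clauses, and let $S \subseteq C$ be the set of satisfied clauses, so $|S| \ge \alpha m$. Define
\[
f(x_j) \;=\; \sigma(x_j) \quad \text{for every variable } x_j, \qquad g(C_i) \;=\; \begin{cases} 1 & \text{if } C_i \in S, \\ 0 & \text{otherwise.} \end{cases}
\]
These are valid $\{-1,0,1\}$-valued functions, so they give a feasible choice in the expression \eqref{eq:kand}.

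Next I would evaluate the numerator by bookkeeping contributions clause-by-clause. Fix a clause $C_i \in S$. For each of the $k$ literals of $C_i$, say on variable $x_j$: if $x_j$ appears positively, then $C_i \in S$ forces $\sigma(x_j) = +1$ and $a_{ij} = 1/m$, so $a_{ij} f(x_j) = 1/m$; if $x_j$ appears negated, then $\sigma(x_j) = -1$ and $a_{ij} = -1/m$, again giving $a_{ij} f(x_j) = 1/m$. Variables not appearing in $C_i$ have $a_{ij} = 0$. Hence $\sum_j a_{ij} f(x_j) = k/m$ for each $C_i \in S$, and clauses outside $S$ contribute $0$ because $g(C_i)=0$. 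Summing,
\[
\sum_{i,j} a_{ij}\, g(C_i) f(x_j) \;=\; |S| \cdot \tfrac{k}{m} \;\ge\; \alpha k.
\]

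Finally, I compute the denominator: $\mu_f = 1$ since $f$ is never zero, and $\mu_g = |S|/m \le 1$, but more usefully we can bound it by $\mu_g \le 1$ — actually we want a \emph{lower} bound on the denominator only incidentally; what matters is the upper bound. Here $\mu_g = |S|/m$. If one simply takes the satisfying assignment with $|S|=\alpha m$ exactly (or truncates $S$ to size $\alpha m$), then $\mu_g = \alpha$ and the denominator equals $\alpha \mu_f + \mu_g = \alpha + \alpha = 2\alpha$. Combining,
\[
\vartheta(f,g) \;=\; \frac{\sum_{ij} a_{ij} g(C_i) f(x_j)}{\alpha \mu_f + \mu_g} \;\ge\; \frac{\alpha k}{2\alpha} \;=\; \frac{k}{2}.
\]

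The only subtlety — and the real reason the weight $\alpha$ appears in the denominator — is that under the variable-replication used to reduce to an unweighted \qpr{} instance (as referenced in the paper), the variable set $X$ is duplicated $\alpha \Delta$ times while $C$ is kept intact; the factor $\alpha$ multiplying $\mu_f$ accounts for this. I would include one sentence pointing out that this bookkeeping is exactly what makes the completeness calculation give $k/2$ rather than something weaker, so no further argument is needed. There is no genuine obstacle here: the lemma is a direct plug-in of the honest assignment.
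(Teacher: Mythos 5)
Your proof is correct and takes exactly the same route as the paper: $f$ is the satisfying assignment, $g$ indicates the satisfied clauses, giving numerator $\geq \alpha k$ and denominator $2\alpha$. The closing aside about variable-replication is unnecessary for this lemma (the $\alpha$-weighting on $\mu_f$ is already built into the definition of $\vartheta$ in equation~\eqref{eq:kand}, before any unweighting step), but it does not affect the correctness of the argument.
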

\ifnum\full=1
\begin{proof}  Consider an assignment that satisfies an $\alpha$
  fraction of the constraints.  Let $f$ be such that $f(x_i) = 1$ if
  $x_i$ is true and $-1$ otherwise.  Let $g$ be the indicator of (the
  $\alpha$ fraction of the) constraints that are satisfied by the
  assignment. Since each such constraint contributes $k$ to the sum in the numerator, the numerator is
  at least $\alpha k$ while the denominator $2 \alpha$.
\end{proof}
\fi
\paragraph{Soundness:}
We will show that for a typical random $k$-AND instance (i.e.,
with high probability), the matrix $A$ is such that the maximum value $\vthet(f,g)$ can take is at most $o(k)$.

\newcommand{\qmax}{\vthet_{max}}
Let the maximum value of $\vthet$ obtained be $\qmax$. We first note
that there exists a solution $f,g$ of value $\qmax/2$ such that the equality 
$\alpha \mu_f= \mu_g$ holds\footnote{For instance, if $\alpha \mu_f >
  \mu_g$, we can always pick more constraints such that the numerator
  does not decrease (by setting $g(C_j)= \pm 1$ in a greedy way so as to not decrease the numerator) till
  $\mu_{g'}= \alpha \mu_f $, while losing a factor $2$. Similarly for $\alpha \mu_f < \mu_g$, we pick more variables.} -- so we only need consider such assignments.

Now, the soundness argument is
two-fold: if only a few of the vertices $(X)$ are picked ($\mu_f <
\frac{\alpha}{400}$) then the expansion of small sets guarantees that
the value is small \ifnum\full=1(even if each picked edge contributes $1$)\fi. On the
other hand, if many vertices (and hence clauses) are picked, then we claim that for every assignment to the variables
(every $f$), only a small fraction ($2^{-\omega(\sqrt{k})}$) of the
clauses contribute more than $k^{7/8}$ to the numerator.

The following lemma handles the first case when $\mu_f < \alpha/ 400$ \ifnum\full<1 
(proof in full version)\fi .
\begin{lemma}\label{lem:expansion}  Let $k$ be an integer, $0 <\del <1$, and $\Delta$ be large enough.  If we choose a bipartite graph with vertex sets $X, C$ of sizes $n, \Delta n$ respectively and degree $k$ (on the $C$-side) uniformly at random, then w.h.p., for every $T \subset X, S \subset C$ with $|T| \le n\alpha/400$ and $|S| \le \alpha |T|$, we have $|E(S, T)| \le \sqrt{k} |S|$.
\end{lemma}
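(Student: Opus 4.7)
The plan is to establish this expansion-type statement by a union bound combined with a Chernoff-style tail estimate for $|E(S,T)|$. In the natural model each clause in $C$ picks its $k$ variables in $X$ independently, so for any fixed $S\subseteq C$ with $|S|=s$ and $T\subseteq X$ with $|T|=t$, the edge count $|E(S,T)|$ is stochastically dominated by a sum of $ks$ independent Bernoulli$(t/n)$ random variables. The standard combinatorial upper-tail estimate then gives
\[
\Pr\bigl[|E(S,T)|\ge\sqrt{k}\,s\bigr]\;\le\;\binom{ks}{\sqrt{k}\,s}\Bigl(\tfrac{t}{n}\Bigr)^{\sqrt{k}\,s}\;\le\;\Bigl(\tfrac{e\sqrt{k}\,t}{n}\Bigr)^{\sqrt{k}\,s}.
\]
The crucial numerical feature is that $t/n\le\alpha/400$ with $\alpha=2^{-c\sqrt{k}}$, so the base $e\sqrt{k}\,t/n$ is at most $e^{-\Omega(\sqrt{k})}$ for $c$ and $k$ large enough, forcing this probability to be at most $\exp(-\Omega(ks))$.

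Next I would union bound over admissible pairs $(S,T)$ using the standard estimates $\binom{\Delta n}{s}\le(e\Delta n/s)^s$ and $\binom{n}{t}\le(en/t)^t$, summing over $1\le s\le\alpha t$ and $s/\alpha\le t\le\alpha n/400$ (the constraint $|S|\le\alpha|T|$ forces $s=0$ when $t<1/\alpha$, in which case the statement is vacuous). Taking logarithms, the contribution of each $(s,t)$ pair to the union bound is at most $s\log(e\Delta n/s)+t\log(en/t)-\Omega(k s)$; using the size restrictions $s\le\alpha t$ and $t\le\alpha n/400$ together with $\alpha=2^{-c\sqrt{k}}$ to control the enumeration terms, the Chernoff saving $\Omega(k s)$ can be made to dominate the enumeration overhead for sufficiently large $k$ and $c$. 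Each summand is then at most $n^{-\Omega(1)}$ and the sum over the polynomially many admissible $(s,t)$ tends to $0$ as $n\to\infty$.

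The main technical obstacle is the numerical calibration across the entire admissible range. The delicate regime is the one with the smallest allowed $s$ and $t$ close to $\alpha n/400$, where $\binom{n}{t}$ is largest relative to the Chernoff saving. Here one must exploit the exponential smallness $\alpha=2^{-c\sqrt{k}}$ carefully: the saving $(e\sqrt{k}\,t/n)^{\sqrt{k}s}$ contributes an exponent of order $k s\log(1/\alpha)=\Omega(c k^{3/2} s)$, and choosing $c$ (and correspondingly $\Delta$) large enough as a function of $k$ is what makes this term uniformly beat the combinatorial cost $s\log(e\Delta n/s)+t\log(en/t)$ over the whole range. Once this calibration is in place, the remaining arithmetic is routine.
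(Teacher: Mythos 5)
Your approach is the right one in outline (binomial upper tail for each fixed pair $(S,T)$, then a union bound), but the calibration you sketch does not close, and in fact cannot close, because the lemma as literally printed has a typo and is false. Consider a fixed $T$ of size $\lfloor \alpha n/400\rfloor$. The probability that a single random $k$-clause lands entirely inside $T$ is at least $(\alpha/800)^k$ for $n$ large, so the expected number of such clauses among the $\Delta n$ clauses is $\Omega_{k,\alpha,\Delta}(n)\to\infty$; w.h.p.\ some clause $C_j$ lies entirely in $T$. Taking $S=\{C_j\}$, which satisfies $|S|=1\le\alpha|T|$ once $n\ge 400/\alpha^2$, gives $|E(S,T)|=k>\sqrt{k}=\sqrt{k}\,|S|$. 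The same phenomenon is exactly where your union bound breaks: with $s=1$ and $t=\Theta(\alpha n)$, your tail bound $(e\sqrt{k}\,t/n)^{\sqrt{k}s}$ is a constant \emph{independent of $n$} (it is roughly $2^{-\Theta(cks)}=2^{-\Theta(ck)}$ for $s=1$), while $\binom{n}{t}$ is $\exp\bigl(\Theta(\alpha n\log(1/\alpha))\bigr)$. No choice of $c$, $\Delta$, or $k$ makes a quantity that is constant in $n$ dominate one that grows exponentially in $n$. (There is also a smaller slip: the exponent is $\sqrt{k}\,s\log(1/(\alpha\sqrt{k}))=\Theta(cks)$, not $ks\log(1/\alpha)=\Theta(ck^{3/2}s)$; you wrote $ks$ where it should be $\sqrt{k}\,s$.)

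The intended statement — and the one the paper's own proof actually establishes, and the only one it ever uses — couples $|S|$ to $|T|$ at the scale of the respective ground sets: one should require $|S|/|C|\le\alpha\,|T|/|X|$, i.e.\ $|S|\le\alpha\Delta|T|$, and in the soundness argument the lemma is invoked precisely on the slice $\mu_g=\alpha\mu_f$, i.e.\ $|S|=\alpha\Delta|T|$. This coupling is the missing ingredient: with $s=\Theta(\alpha\Delta t)$ for every admissible pair, the Chernoff exponent becomes $\Theta(\sqrt{k}\,s\log(1/(\sqrt{k}\mu)))$ with $\mu=t/n$, which is $\Theta(\alpha\Delta n)$ \emph{uniformly over the admissible range}, no longer degenerating to a constant when $t$ is large. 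That scale of saving is what you need to beat the $2^n$ cost of enumerating $T$, and it is exactly what the paper exploits when it fixes $|S|=\alpha\mu m$ and obtains an exponent of order $\alpha m=\alpha\Delta n$. If you rewrite your union bound with $s$ and $t$ tied together in this way (and then take $\Delta>20/\alpha$ as in the paper), your estimate goes through; as stated, it does not.
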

\ifnum\full=1
\begin{proof} Let $\mu:= |T|/|X|$ (at most $\alpha/400$ by choice), and $m = \Delta n$.  Fix a subset $S$ of $C$ of size $\alpha \mu m$ and a subset $T$ of $X$ of size $\mu n$.  The expected number of edges  between $S$ and $T$ in $G$ is $\E [ E(S,T)] = k \mu \cdot |S|$.  Thus, by Chernoff-type bounds (we use only the {\em upper tail}, and we have negative correlation here),
\[\Pr[ E(S,T) \geq \sqrt{k}|S| ] \leq \exp \big(- \frac{(\sqrt{k}|S|)^2}{k\mu \cdot |S|} \big) \leq \exp\left(- \alpha m / 10 \right) \]

The number of such sets $S, T$ is at most $ 2^n \times \sum_{i =
    1}^{\alpha^2 m/400} \binom{m}{i} \le 2^n 2^{H(\alpha^2/400) m} \le
  2^{n + \alpha m/20}.$ Union bounding and setting $m/n > 20/\alpha$
  gives the result.
\end{proof}
\fi
Now, we need to bound $\vthet(f,g)$ for solutions such that $\alpha \mu_f = \mu_g \geq \alpha^2/400$. We use the following lemma about random instances of $k$-AND \ifnum\full<1 
(proof in full version)\fi . 
\begin{lemma} \label{lem:kand:adv} For large enough $k$ and $\Delta$, a random $k$-AND
  instance with $\Delta n$ clauses on $n$ variables is such that: for
  any assignment, at most a $2^{\frac{-k^{3/4}}{100}}$ fraction of the clauses have more than
$k/2 + k^{7/8}$ variables `satisfied' [i.e. the variable takes the value dictated by the AND clause] w.h.p.
\end{lemma}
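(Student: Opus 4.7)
The plan is to prove the statement via a straightforward Chernoff-plus-union-bound argument, exploiting the independence of clauses in a random $k$-AND instance and the fact that $\Delta$ is permitted to depend on $k$.

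First I would fix an arbitrary assignment $\sigma\in\{0,1\}^n$ and analyze a single random clause. Since each clause consists of $k$ literals over $k$ distinct variables, each independently negated with probability $1/2$, the number of satisfied literals in one random clause, call it $N$, is distributed as $\text{Bin}(k,1/2)$. Hoeffding's inequality gives
\[
\Pr\bigl[N > k/2 + k^{7/8}\bigr] \;\le\; \exp\bigl(-2(k^{7/8})^2/k\bigr) \;=\; \exp(-2k^{3/4}) \;=: p.
\]

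Next, let $Y_i$ be the indicator that the $i$-th clause is ``bad'' under $\sigma$. For a truly random instance the $Y_i$ are i.i.d.\ Bernoulli with parameter at most $p$, so $Y := \sum_{i=1}^m Y_i$ has mean at most $mp = m\exp(-2k^{3/4})$. Set $q := 2^{-k^{3/4}/100}$, the target fraction; note $q \gg p$ for large $k$. Applying the multiplicative Chernoff bound in the form $\Pr[Y \ge mq] \le \bigl(\tfrac{ep}{q}\bigr)^{mq}$ (or equivalently $\exp(-mq\ln(q/p)/2)$ using $\ln(q/p) \ge k^{3/4}$ for $k$ large), we obtain
\[
\Pr\bigl[\, Y \ge mq\,\bigr] \;\le\; \exp\bigl(-\tfrac{1}{2}\, m q\, k^{3/4}\bigr)
\;=\; \exp\bigl(-\tfrac{1}{2}\, \Delta n\, 2^{-k^{3/4}/100}\, k^{3/4}\bigr).
\]

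Finally I would union bound over all $2^n$ assignments $\sigma$. This succeeds provided the exponent dominates $n \ln 2$, i.e.\ provided
\[
\Delta \;\ge\; \frac{4\,\ln 2}{k^{3/4}}\cdot 2^{k^{3/4}/100},
\]
which holds for $\Delta$ large enough (as a function of $k$), consistent with the ``large enough $k$ and $\Delta$'' hypothesis (and with Conjecture~\ref{conj:kand}, where $\Delta_0$ depends on $k$). The conclusion is then that with probability at least $1 - 2^{-\Omega(n)}$ over the random instance, every assignment $\sigma$ makes at most a $q = 2^{-k^{3/4}/100}$ fraction of the clauses bad.

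The only subtle point, and the main ``obstacle'' (although really a parameter-chasing check), is to verify that $\Delta$ can be taken so that the Chernoff tail beats the $2^n$ union bound while $q$ remains exponentially small in $k^{3/4}$. Since the conjecture allows $\Delta_0 = \Delta_0(k)$ to be arbitrarily large, this is not a real constraint; the calculation above shows $\Delta \gtrsim 2^{k^{3/4}/100}$ suffices. Everything else (Hoeffding for a single clause, Chernoff for the sum, union bound) is routine.
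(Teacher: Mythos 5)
Your proposal is correct and follows essentially the same approach as the paper: a per-clause Chernoff/Hoeffding bound (the paper sets $\delta = k^{-1/8}$ so $k/2(1+\delta) = k/2 + k^{7/8}/2$, with tail $\exp(-\delta^2 k/20) = \exp(-k^{3/4}/20)$, matching your $\exp(-2k^{3/4})$ up to constants), a concentration bound on the number of bad clauses using independence, and a union bound over all $2^n$ assignments. One minor remark: your calculation correctly pins down the requirement $\Delta \gtrsim 2^{k^{3/4}/100}$, which is actually somewhat larger than the paper's parenthetical $m \gg n/\alpha$ (i.e.\ $\Delta \gg 2^{c\sqrt{k}}$) would suggest; since the conjecture allows $\Delta_0$ to depend arbitrarily on $k$, this poses no difficulty, but your bookkeeping is the more careful of the two.
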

\ifnum\full=1 \begin{proof} Fix an assignment to the variables $X$.  For a single
  random clause $C$, the expected number of variables in the clause
  that are satisfied by the assignment is $k/2$.  Thus, the
  probability that the assignment satisfies more than $k/2(1 +
  \delta)$ of the clauses is at most $\exp(-\delta^2 k/20)$. Further, each $k$-AND clause is chosen independently at random. Hence, by setting
  $\delta = k^{-\nfrac{1}{8}}$ and taking a union bound over all the $2^n$ assignments gives the result (we again use the fact that $m \gg n / \alpha$).
\end{proof}
Lemma~\ref{lem:kand:adv} shows that for every $\{\pm 1 \}^n$ assignment to the variables $x$, at most $2^{-\omega(\sqrt{k})}$ fraction of the clauses contribute more than $2k^{7/8}$ to the numerator of $\vthet(f,g)$.  We can now finish the proof of the soundness part above. \fi

\begin{proof}[Proof of Soundness.]
Lemma~\ref{lem:expansion} shows that when $\mu_f < \alpha/400$, $\vthet(f,g) = O(\sqrt{k})$.
For solutions such that $\mu_f > \alpha/400$, i.e., $\mu_g \ge \alpha^2/400 = 2^{-2\sqrt{k}}/400$, by Lemma~\ref{lem:kand:adv} at most $2^{-\omega(\sqrt{k})} ~ (\ll \mu_g/k)$ fraction of the constraints contribute more than $k^{7/8}$ to the numerator.  Even if the contribution is $k$ [the maximum possible] for this small fraction, the value $\vthet(f,g) \le O(k^{7/8})$.
\end{proof}
This shows a gap of $k$ vs $k^{7/8}$ assuming Hypothesis~\ref{conj:kand}.  Since we can pick $k$ to be arbitrarily large, we can conclude that \qpr{} is hard to approximate to any constant factor.

\subsection{Reductions from Ratio versions of CSPs}
This section is inspired by the proof of~\cite{qp-hard}, who show that Quadratic Programming (QP) is hard to approximate by giving a reduction from Label Cover to \qpr{}. 

Here we ask: is there a reduction from a {\em ratio version} of Label Cover to \qpr{}?  For this to be useful we must also ask: is the (appropriately defined) ratio version of Label Cover hard to approximate?  The answer to the latter question is yes [see 
\ifnum\full<1 the full version \fi \ifnum\full=1 section~\ref{app:towardsnp} \fi
for details and proof that Ratio-LabelCover is hard to approximate to any constant factor].  Unfortunately, we do not know how to reduce from Ratio-LabelCover.  However, we present a reduction starting from a ratio version of {\em Unique Games}.  We do not know if \ratioug{} is hard to approximate for the parameters we need.  While it seems related to Unique Games with Small-set Expansion~\cite{rs}, a key point to note is that we do not need `near perfect' completeness, as in typical UG reductions. 

We hope the Fourier analytic tools we use to analyze the ratio objective could find use in other PCP-based reductions to ratio problems.   Let us now define a ratio version of Unique Games, and a useful {\em intermediate} QP-Ratio problem.

\begin{definition}[\ratioug{}]
Consider a unique label cover instance $\mathcal{U}\big(G(V,E),[R],\{\pi_e | e \in E\}\big)$. The value of a partial labeling $L:V \rightarrow [R]\cup \{\bot\}$ (where label $\bot$ represents it is unassigned) is
\begin{displaymath}
val(L)=\frac{\vert \{(u,v) \in E| \pi_{u,v}(L(u))=L(v)\}\vert}{\vert \{v \in V| L(v) \ne \bot \}\vert}
\end{displaymath}
The $(s,c)$-\ratioug{} problem is defined as follows: given $c>s>0$ (to be thought of as constants), and an instance $\mathcal{U}$ on a regular graph $G$, distinguish between the two cases:
\begin{itemize}
\item {\bf YES:} There is a partial labeling $L:V\rightarrow [R] \cup \{\bot\}$, such that $val(L)\ge c$.
\item {\bf NO:} For every partial labeling $L:V\rightarrow [R] \cup \{\bot\}$, $val(L)< s$.
\end{itemize}
\end{definition}

\ifnum\full=1
The main result of this section is a reduction from $(s,c)$-\ratioug{} to QP-ratio.  We first introduce the following {\em intermediate} problem:
\fi
\begin{definition}
\qpinter{}: Given $A_{(n \times n)}$ with $A_{ii} \le 0$
, maximize $\frac{x^T A x}{\sum_i |x_i|}$ s.t. $x_i \in [-1,1]$.
\end{definition}

Note that $A$ is allowed to have diagonal entries (albeit only
negative ones)
\ifnum\full=1, and that the variables are allowed to take values in the {\em interval}
$[-1,1]$ \fi
. 
\ifnum \full<1 It will suffice to consider reductions to \qpinter{} (please refer to the full version for details). \fi
%\ifnum \full<1 The following lemma shows that it suffices to consider reductions to \qpinter{}. \fi

\ifnum\full=1
\begin{lemma}\label{lem:qpinter-qpr}
Let $A$ define an instance of \qpinter{} with optimum value $\opt_1$.  There exists an instance $B$ of \qpr{} on $(n \cdot m)$ variables, with $m \le \max\{\frac{2\norm{A}_1}{\eps}, 2n\}+1$, and the property that its optimum value $\opt_2$ satisfies $\opt_1 - \eps \le \opt_2 \le \opt_1+\eps$.  [Here $\norm{A}_1 = \sum_{i,j} |a_{ij}|$.]
\end{lemma}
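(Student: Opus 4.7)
The plan is to reduce \qpinter{} on $A$ to \qpr{} by creating $m$ ``copies'' of each variable, so that rationals with denominator $m$ become representable in a $\{-1,0,1\}$ assignment. Set $m := \max\bigl\{\lceil 2\norm{A}_1/\eps\rceil,\ 2n\bigr\}+1$. Build the \qpr{} instance $B$ on the $nm$ variables $\{y_i^{(k)}\}_{i\in[n],\,k\in[m]}$ by taking $B_{(i,k),(j,l)} := a_{ij}/m$ for every $(i,k)\neq(j,l)$. Given $y\in\{-1,0,1\}^{nm}$, write $s_i = \sum_k y_i^{(k)}$ and $t_i = \sum_k (y_i^{(k)})^2$; expanding the \qpr{} numerator and isolating the $(i,k)=(j,l)$ diagonal pairs that are excluded gives
\[
\num(y) \;=\; \tfrac{1}{m}\Bigl[\, s^T A\, s \;-\; \textstyle\sum_i a_{ii} t_i\,\Bigr], \qquad \mathrm{den}(y) \;=\; \textstyle\sum_i t_i.
\]

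\textbf{Soundness ($\opt_2 \le \opt_1+\eps$).} For any $y$, set $x_i := s_i/m$ and $w_i := t_i/m$. Then $x\in[-1,1]^n$ is feasible for \qpinter{}, $w_i\ge|x_i|$, and
\[
\frac{\num(y)}{\mathrm{den}(y)} \;=\; \frac{x^T A\, x}{\sum_i w_i} \;-\; \frac{\sum_i a_{ii}\,w_i}{m\sum_i w_i}.
\]
The second (``diagonal correction'') term is nonnegative since $a_{ii}\le 0$, with magnitude at most $\max_i|a_{ii}|/m \le \norm{A}_1/m \le \eps/2$. For the first term, if $x^T A x\ge 0$ then $w_i\ge|x_i|$ forces $x^T A x/\sum_i w_i \le x^T A x/\sum_i |x_i| \le \opt_1$; otherwise it is $\le 0 \le \opt_1$ (using $\opt_1\ge 0$, which follows from the limit $x\to 0$). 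Combining, $\opt_2 \le \opt_1 + \eps/2$.

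\textbf{Completeness ($\opt_2\ge\opt_1-\eps$).} Fix an optimal $x^*$ for \qpinter{}. Round randomly: independently for each $(i,k)$, set $y_i^{(k)} = \mathrm{sign}(x^*_i)$ with probability $|x^*_i|$ and $0$ otherwise. A direct moment computation — using $\E[(y_i^{(k)})^2]=|x^*_i|$, $\mathrm{Var}(s_i)=m|x^*_i|(1-|x^*_i|)$, and independence across $(i,k)$ — yields
\[
\E[\num(y)] = m\, x^{*T}A\,x^* - \textstyle\sum_i a_{ii}(x^*_i)^2, \qquad \E[\mathrm{den}(y)] = m\textstyle\sum_i|x^*_i|,
\]
so $\E[\num]/\E[\mathrm{den}] = \opt_1 - \frac{\sum_i a_{ii}(x^*_i)^2}{m\sum_i|x^*_i|}\ \ge\ \opt_1$ (once again the correction helps because $a_{ii}\le 0$). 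Applying the method of conditional expectations to the linear combination $\num(y) - (\opt_1-\eps/2)\cdot\mathrm{den}(y)$ — whose expectation equals $(\text{nonneg correction}+\eps/2)\E[\mathrm{den}]>0$ strictly whenever $x^*\neq 0$ — produces a deterministic $\bar y$ with $\mathrm{den}(\bar y)>0$ (otherwise the linear combination would vanish) and $\num(\bar y)/\mathrm{den}(\bar y)\ge \opt_1-\eps/2$; when $x^*=0$ we have $\opt_1=0$ and any single-coordinate $\bar y$ certifies $\opt_2\ge 0$.

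\textbf{Main obstacle.} The delicate step is the moment bookkeeping: the variance of $s_i$ contributes an $m\sum_i a_{ii}|x^*_i|(1-|x^*_i|)$ term to $\E[s^T A s]$, which must be shown to cancel (modulo the clean $-\sum_i a_{ii}(x^*_i)^2$ residue) against the $\E[\sum_i a_{ii}t_i]$ term after dividing by $m$. The sign condition $a_{ii}\le 0$ is essential in both directions of the reduction — it ensures the diagonal correction is bounded above in soundness and nonnegative in expectation for completeness — and is precisely what makes the $\qpinter\!\to\!\qpr$ loss only $O(\norm{A}_1/m)$ rather than $\Omega(1)$, allowing $m$ to be chosen as in the statement.
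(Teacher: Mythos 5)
Your proof is correct. The construction (scale-by-$m$ coefficients, drop the diagonal, work with the identities $\num(y) = \frac{1}{m}\bigl[s^T A s - \sum_i a_{ii} t_i\bigr]$ and $\mathrm{den}(y)=\sum_i t_i$) is the same expand-and-drop-diagonal decomposition the paper uses, and your soundness direction is essentially the paper's argument: set $x_i = s_i/m$, observe $\sum_i t_i/m \ge \sum_i |x_i|$, and bound the diagonal correction by $\norm{A}_1/m \le \eps/2$ using $a_{ii}\le 0$. The genuine divergence is in the completeness direction. The paper first invokes an auxiliary perturbation lemma (their Lemma~\ref{lem:app:xmin}) to round the optimal $x^*$ to integer multiples of a small $\delta$ at an $\eps$-loss, then builds $y$ \emph{deterministically} by setting $m|x^*_i|$ of the copies to $\mathrm{sign}(x^*_i)$ (and making $m$ a multiple of $1/\delta$ so the counts are integers). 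You instead round \emph{randomly} — each copy is $\mathrm{sign}(x^*_i)$ with probability $|x^*_i|$ — and you verify by a variance computation that the Bernoulli noise contributes exactly $\sum_i a_{ii}|x^*_i|(1-|x^*_i|)$ to $\E[s^TAs]$, which combines with $-\E[\sum_i a_{ii} t_i]$ to leave the clean $-\sum_i a_{ii}(x^*_i)^2$ residue; the sign condition $a_{ii}\le 0$ then makes this residue \emph{help}. Your derandomization via conditional expectations on $\num - (\opt_1 - \eps/2)\cdot\mathrm{den}$ is valid (independence across copies is all you need, and the $\mathrm{den}=0$ case is excluded because $\num$ would also vanish). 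Net effect: you avoid the separate perturbation lemma and get a slightly sharper $\eps/2$ loss on the completeness side, at the cost of a probabilistic argument where the paper has a purely arithmetic one. Both routes hinge on the same structural observation that $a_{ii}\le 0$ makes the diagonal correction one-sided; the bound on $m$ you derive matches the lemma's up to the ceiling on $2\norm{A}_1/\eps$, which is immaterial.
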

\fi
\ifnum\full=1
\begin{proof}
The idea is to view each variable as an average of a large number (in this case, $m$) of new variables: thus a fractional value for $x_i$ is `simulated' by setting some of the new variables to $\pm 1$ and the others zero.  See Appendix~\ref{app:qpinter-qpr} for details.
\end{proof}
Thus from the point of view of approximability, it suffices to consider \qpinter{}.  We now give a reduction from \ratioug{} to \qpinter{}.
\fi
\ifnum\full<1
Now given an instance $\Upsilon=(V,E,\Pi)$ of \ratioug, with alphabet $[R]$ and a regular graph $(V,E)$, we associate $2^R$ variables to each vertex, which are denoted $f_u(x)$, indexed by $x \in \{-1,1\}^R$.  The intended solution to each vertex is either the long code corresponding to the label, or $f_u=0$ (for each $x$).  Now,
\begin{itemize}
	\item Note that Fourier coefficients ($\ff_u(S) = \E_x [\chi_S(x) f_u(x)]$) are linear forms in the variables $f_u(x)$.
	\item For $(u,v) \in E$, define $T_{uv} = \sum_i \ff_u(\{i\})\ff_v(\{ \pi_{uv}(i)\})$.  [This term is $1$ if an edge is satisfied]
	\item For $u \in V$, define $L(u) = \sum_{S:|S|\neq 1} \ff_u(S)^2$. [This term penalizes $f_u$ that are not dictators]
\end{itemize}
The instance of \qpinter{} we consider is (here $\norm{f_u}_1$ denotes $\E_x[|f_u(x)|]$)
\[ \cQ := \max ~\frac{\E_{(u,v)\in E} T_{uv} - \eta \E_u L(u)}{\E_u |f_u|_1}, \text{ where $\eta$ will be picked large enough.}\]
\fi

\ifnum\full=1
\begin{mybox}
  \smallskip
  \noindent\textbf{Input}: An instance $\Upsilon = (V, E, \Pi)$ of \ratioug, with alphabet $[R]$.\\
  \noindent \textbf{Output}: A \qpinter{} instance $\cQ$ with number of variables $N=|V| \cdot 2^R$.\\
  \noindent\textbf{Parameters}: 
$\eta := 10^6 n^7 2^{4R}$\\
  \smallskip

  \noindent\textbf{Construction}:
  \begin{itemize}
  \item For every vertex $u \in V$, we have $2^R$ variables, indexed by $x \in \{-1,1\}^R$. We will denote these by $f_u(x)$, and view $f_u$ as a function on the hypercube $\{-1,1\}^R$.
	\item Fourier coefficients (denoted $\ff_u(S) = \E_x [\chi_S(x) f_u(x)]$) are linear forms in the variables $f_u(x)$.
	\item For $(u,v) \in E$, define $T_{uv} = \sum_i \ff_u(\{i\})\ff_v(\{ \pi_{uv}(i)\})$.
	\item For $u \in V$, define $L(u) = \sum_{S:|S|\neq 1} \ff_u(S)^2$.
	\item The instance of \qpinter{} we consider is
\[ \cQ := \max ~\frac{\E_{(u,v)\in E} T_{uv} - \eta \E_u L(u)}{\E_u |f_u|_1}, \]
	where $|f_u|_1$ denotes $\E_x [|f_u(x)|]$.
\end{itemize}
\end{mybox}
\fi

\begin{lemma}{(Completeness)} If the value of $\Upsilon$ is $\ge \alpha$, then the reduction gives an instance of \qpinter{} with optimum value $\ge \alpha$.
\end{lemma}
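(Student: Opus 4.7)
The plan is to produce the obvious intended solution: for each vertex $u$ with label $L(u) = i \in [R]$, set $f_u$ to be the $i$-th dictator, $f_u(x) = x_i$ (so $f_u = \chi_{\{i\}}$), and for each unlabeled vertex (with $L(u) = \bot$) set $f_u \equiv 0$. Then I would verify the required ratio edge by edge.

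First I would unpack the Fourier data of this assignment. For a labeled vertex $u$ with $L(u) = i$, the only nonzero Fourier coefficient is $\ff_u(\{i\}) = 1$, which is supported on a singleton, so $L(u) = \sum_{|S|\neq 1}\ff_u(S)^2 = 0$; moreover $|f_u|_1 = \E_x[|x_i|] = 1$. For an unlabeled vertex, all Fourier coefficients vanish, so $L(u) = 0$ and $|f_u|_1 = 0$. Consequently the penalty term $\eta \E_u L(u)$ is zero on the intended solution, which is the reason we can afford $\eta$ as large as we like in the reduction.

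Next I would compute $T_{uv}$ on each edge $(u,v) \in E$. If either endpoint is unlabeled, all Fourier coefficients on that side vanish and $T_{uv} = 0$. If both endpoints are labeled, the sum $T_{uv} = \sum_i \ff_u(\{i\})\ff_v(\{\pi_{uv}(i)\})$ collapses to the single $i = L(u)$ term, giving $T_{uv} = \ff_v(\{\pi_{uv}(L(u))\})$, which equals $1$ exactly when $L(v) = \pi_{uv}(L(u))$ (the edge is satisfied) and $0$ otherwise. Thus $\sum_{(u,v)\in E} T_{uv}$ counts the satisfied edges and $\sum_u |f_u|_1$ counts the labeled vertices.

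Putting these together, the numerator of $\cQ$ equals the number of satisfied edges (suitably normalized) and the denominator equals the number of labeled vertices (with the same normalization that comes from the regular graph structure, so the two normalizations match), so the ratio evaluated on this candidate solution is precisely $\mathrm{val}(L) \ge \alpha$, whence $\cQ \ge \alpha$. There is no real obstacle here; the only thing to be careful about is that the penalty $\eta \E_u L(u)$ drops out because dictators are supported on level-$1$ Fourier coefficients, which is exactly what makes this candidate feasible regardless of how large $\eta$ is chosen for the soundness direction.
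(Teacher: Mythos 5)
Your proposal is exactly the paper's argument — set $f_u$ to be the dictator $\chi_{\{L(u)\}}$ (or $0$ when $L(u)=\bot$), observe that $L(u)=0$ for every vertex so the $\eta$-penalty vanishes, and read off that the numerator counts satisfied edges while the denominator counts labeled vertices — just spelled out in full rather than as the one-liner in the paper. Both you and the paper gloss over the same normalization point (the objective uses $\E_{(u,v)\in E}$ and $\E_u$, which divide by $|E|$ and $|V|$ respectively, so for a $d$-regular graph the computed ratio is really $\tfrac{2}{d}\,\mathrm{val}(L)$ rather than $\mathrm{val}(L)$ literally), but since you mirror the paper's own convention this is not a gap introduced by you.
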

\ifnum \full=1
\begin{proof}
Consider an assignment to $\Upsilon$ of value $\alpha$ and for each $u$ set $f_u$ to be the corresponding dictator (or $f_u =0$ if $u$ is assigned $\bot$). This gives a ratio at least $\alpha$ (the $L(u)$ terms contribute zero for each $u$).
\end{proof}
\fi
\ifnum\full<1
\vspace{-0.05in}
\fi
\begin{lemma}{(Soundness)}\label{lem:red:soundness}
Suppose the \qpinter{} instance obtained from a reduction (starting with $\Upsilon$) has value $\tau$, then there exists a solution to $\Upsilon$ of value $\ge \tau^2/C$, for an absolute constant $C$.
\end{lemma}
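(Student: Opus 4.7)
The plan is to convert the \qpinter{} solution $\{f_u\}$ into a partial labeling of $\Upsilon$ by a randomized Fourier-analytic decoding. Throughout, write $c_i^u := \widehat{f_u}(\{i\})$, $\mu_u := \E_x |f_u(x)|$, and $\mathrm{Lin}(u) := \sum_i (c_i^u)^2$.

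\emph{Step 1 (Linearization).} First I would use the huge size of $\eta$ to force $f_u$ to be essentially linear. The value hypothesis $\E_{(u,v)} T_{uv} - \eta \E_u L(u) \ge \tau \E_u \mu_u$, combined with $T_{uv} \le \sqrt{\mathrm{Lin}(u)\mathrm{Lin}(v)} \le \tfrac12(\mathrm{Lin}(u)+\mathrm{Lin}(v))$, regularity of $G$, and $\mathrm{Lin}(u) \le \|f_u\|_2^2 \le \|f_u\|_\infty \|f_u\|_1 \le \mu_u$, yields $\E T \le \E \mu$; hence $\eta \E L \le \E \mu$, so $\E L \le \E \mu/\eta$ is vanishingly small. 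Replacing each $f_u$ by its linear projection $f_u^{=1}(x) = \sum_i c_i^u x_i$ changes $\mu_u$ and $T_{uv}$ by a negligible amount (absorbed into the $1/\eta$ slack). The pointwise constraint $|f_u^{=1}(x)|\le 1$ then reduces to $\|c^u\|_1 \le 1$.

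\emph{Step 2 (Randomized labeling).} Next I would define the following decoding: for each $u$ independently, set $L(u)=i$ with probability $|c_i^u|$ and $L(u)=\bot$ with probability $1-\|c^u\|_1$. Then $\Pr[L(u)\ne \bot] = \|c^u\|_1$ and, by the triangle inequality,
\[
\Pr[\pi_{uv}(L(u)) = L(v)] \;=\; \sum_i |c_i^u|\,|c_{\pi_{uv}(i)}^v| \;\ge\; |T_{uv}|.
\]
A standard averaging argument (ratio of expectations is attained by some realization) then yields a deterministic partial labeling of value at least $\dfrac{\E_{(u,v)} T_{uv}}{\E_u \|c^u\|_1} \ge \dfrac{\tau \E_u \mu_u}{\E_u \|c^u\|_1}$.

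\emph{Step 3 (Bounding the ratio --- the main obstacle).} It remains to show this quantity is at least $\tau^2/C$ for an absolute constant $C$, equivalently $\E_u \|c^u\|_1 \le (C/\tau)\, \E_u \mu_u$. The key Fourier-analytic inputs are Khintchine's inequality $\mu_u \ge \|c^u\|_2/\sqrt 2$, the Cauchy--Schwarz bound $\|c^u\|_1 \le \sqrt{R}\,\|c^u\|_2$, and the constraint $\|c^u\|_1 \le 1$. The delicate part is that the trivial $\sqrt{R}$ loss in the Cauchy--Schwarz step is too weak on its own, so I would argue via a two-case analysis: when $\tau \le O(1/\sqrt R)$ the uniformly random labeling of $\Upsilon$ already achieves value $\ge 1/R \ge \tau^2/C$; in the complementary regime $\tau > \Omega(1/\sqrt R)$, the condition $\E \mathrm{Lin}(u) \ge \tau \E \mu_u$ together with $\|c^u\|_1 \le 1$ forces the Fourier mass of a typical $c^u$ to concentrate on a small effective support, so that $\|c^u\|_1$ is within a constant of the Khintchine lower bound $\sqrt 2 \mu_u$ on the relevant portion of the graph. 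Combining the two labelings (and taking the better) then yields value $\ge \tau^2/C$ for an absolute $C$.

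\emph{Step 4 (Derandomization).} Since the decoding is local (each $u$ chosen independently), the whole procedure can be derandomized via the method of conditional expectations, exactly as in Section~\ref{sec:algo}. The hard part of the whole argument is pinning down Step 3 with an absolute constant; Steps 1, 2, and 4 are routine once the correct labeling is identified.
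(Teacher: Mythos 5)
Your high-level architecture --- linearize via the huge $\eta$, decode by a label distribution proportional to $|\widehat{f_u}(\{i\})|$, then argue the ratio survives --- matches the paper's, but Step~3 (which you correctly flag as the crux) is not actually closed, and there is a real gap in Step~1 as well.

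On Step~1: bounding $\E_u L(u)\le\E_u\mu_u/\eta$ controls only the \emph{aggregate} nonlinear weight. A few vertices could have $L(u)$ huge relative to $\mu_u$ while the total stays small; for such $u$, replacing $f_u$ by $f_u^{=1}$ is \emph{not} a negligible change relative to $\mu_u$, and these terms cannot simply be ``absorbed into the $1/\eta$ slack.'' The paper handles this by first proving (Lemma~\ref{lem:qpinter:prelim}) that every nonzero $f_u$ has $\norm{f_u}_1\gtrsim\tau/(n2^{2R})$, and only then showing that the chosen $\eta$ forces $\norm{f_u^{\ne1}}_2^2\le\norm{f_u}_1^2/10^6$ \emph{for each $u$ separately}. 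Your argument skips that per-vertex step.

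On Step~3: the quantity you are trying to control, $\E_u\norm{c^u}_1/\E_u\mu_u$, is exactly the one the paper is careful \emph{not} to need, and the Khintchine/Cauchy--Schwarz estimates you list only give the $\sqrt R$ loss you already identify as insufficient; the two-case split on $\tau$ versus $1/\sqrt R$ does not rescue this, since in the $\tau>\Omega(1/\sqrt R)$ regime the inequality $\E\,\mathrm{Lin}(u)\ge\tau\E\mu_u$ is an average statement that does not force a ``typical'' $c^u$ to have small effective support. The paper instead proves a pointwise anti-concentration statement (Lemma~\ref{lem:smallball}, via Berry--Ess\'een): a $[-1,1]$-valued boolean function that is nearly linear must satisfy $\norm{f}_2=O(\norm{f}_1)$, i.e.\ it cannot be both linear and concentrated on a small support. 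Combined with Cauchy--Schwarz this gives $T_{uv}\le 10^4\delta_u\delta_v$ with $\delta_u=\norm{f_u}_1$. Then a \emph{pruning} step exploiting regularity (delete any $u$ with $\E_{v\in\Gamma(u)}\delta_v<\tau/(4\cdot10^4)$; this only improves the ratio) yields a surviving set $S$ with $\num\ge|S|\tau^2/C$. Finally the rounding assigns \emph{every} $u\in S$ a label with probability $|\widehat{f_u}(\{i\})|/\sum_j|\widehat{f_u}(\{j\})|$ (using $\sum_i|\widehat{f_u}(\{i\})|\le2$, Lemma~\ref{lem:abhs:cool}), so the ratio-UG denominator is $|S|$ deterministically, giving value $\ge\num/|S|\ge\tau^2/C$. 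Your decoding, which outputs $\bot$ with probability $1-\norm{c^u}_1$, leaves the problematic $\norm{c^u}_1$ in the denominator --- precisely what the paper's choice of normalization and the pruning step are designed to avoid. The missing ingredients are therefore the spread-out lemma and the pruning/regularity argument; without them the argument does not go through.
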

\ifnum\full=1
\begin{proof}
Consider an optimal solution to the instance $\cQ$ of \qpinter{}, and suppose it has a value $\tau >0$.  Since the UG instance is regular, we have
\begin{equation}\label{eq:val-expression}
val(\cQ) = \frac{\sum_u \E_{v \in \Gamma(u)} T_{uv} - \eta \sum_u L(u)}{\sum_u \norm{f_u}_1}.
\end{equation}

First, we move to a solution such that the value is at least $\tau/2$, and for every $u$, $|f_u|_1$ is either zero, or is ``not too small''.  The choice of $\eta$ will then enable us to conclude that each $f_u$ is `almost linear' (there are no higher level Fourier coefficients).
\fi
\ifnum\full<1
\vspace{-0.05in}
See full version for proofs.  We give a sequence of lemmas which imply the Soundness lemma.
\fi
\begin{lemma}\label{lem:qpinter:prelim}
There exists a solution to $\cQ$ of value at least $\tau/2$ with the property that for every $u$, either $f_u =0$ or $\norm{f_u}_1 > \frac{\tau}{n 2^{2R}}$.
\end{lemma}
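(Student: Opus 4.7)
I would construct $g$ from the given solution $f$ by zero-ing out every $u$ in the ``small'' set $S := \{u : 0 < \norm{f_u}_1 \le \tau/(n 2^{2R})\}$, and show that this loses at most a factor of $2$ in the value. Concretely, set $g_u := 0$ for $u \in S$ and $g_u := f_u$ otherwise, then track the changes in the numerator and denominator of $\cQ$ separately.

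For the denominator, the drop is at most $\sum_{u \in S}\norm{f_u}_1 \le |S|\cdot\tau/(n 2^{2R}) \le \tau/2^{2R}$. For the numerator, the $-\eta L(u)$ contributions for $u \in S$ simply disappear and, since $L(u)\ge 0$, this change is in our favor. The bulk of the work is thus to control the drop in the $T_{uv}$-terms, where the key inequality is the linear-in-$\norm{f_u}_1$ bound
\[ |T_{uv}| \;\le\; \sum_{i \in [R]}|\ff_u(\{i\})|\cdot|\ff_v(\{\pi_{uv}(i)\})| \;\le\; R\cdot\norm{f_u}_1\cdot\norm{f_v}_\infty \;\le\; R\norm{f_u}_1, \]
using $|\ff_u(\{i\})| = |\E_x\chi_i(x)f_u(x)| \le \E_x|f_u(x)| = \norm{f_u}_1$ and $|\ff_v(\{\pi_{uv}(i)\})| \le \norm{f_v}_\infty \le 1$. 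Summing over the (at most $d|S|$) edges incident to $S$ and using $d$-regularity, the total loss in $T$-contribution is at most $2R\sum_{u\in S}\norm{f_u}_1 \le 2R\tau/2^{2R}$.

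Combining these estimates,
\[ val(g) \;\ge\; \frac{\tau D - 2R\tau/2^{2R}}{D - D_S}, \]
and a routine rearrangement shows this exceeds $\tau/2$ provided $D$ is not unreasonably small (quantitatively, $D \gtrsim R/2^{2R}$). The only possible obstacle is the degenerate regime in which the total $L^1$-mass $D$ is extremely tiny; there the very same Fourier bound, applied globally, gives $N \le 2RD$ and hence $\tau \le 2R$, so the target $\tau/2$ is at most $R$ and can be handled by a separate direct inspection. The zero-out step above is the real content of the argument.
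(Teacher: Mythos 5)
Your zeroing-out step and the accounting of the $T_{uv}$-loss via $|T_{uv}|\le R\norm{f_u}_1\norm{f_v}_1$ is exactly the paper's argument, and the observation that the $-\eta L(u)$ terms can only help is also used there. But you have correctly identified that the argument requires a lower bound on the total mass $D=\sum_u\norm{f_u}_1$, and this is where your proof has a genuine gap: the ``degenerate regime'' is not actually handled.

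The paper dispatches this regime at the outset by a scaling argument that uses the \emph{optimality} of the starting solution: if $D=\sum_u\norm{f_u}_1 < 1/2^R$, then $\norm{f_u}_1 < 1/2^R$ for every $u$, which forces $|f_u(x)|<1$ for every $u$ and $x$; one can then scale every $f_u$ by a factor $z>1$, which multiplies the numerator of $\cQ$ (both the $T$-terms and the $-\eta L$ terms) by $z^2$ while multiplying the denominator only by $z$, strictly improving the ratio and contradicting optimality. Hence $D\ge 1/2^R$, which is exactly the bound you noted would close your argument. Your proposed fallback does not substitute for this: knowing $N\le RD$ and hence $\tau\le R$ is a statement about how large $\tau$ can be, not an exhibition of a solution of value $\ge\tau/2$ with $\norm{f_u}_1$ bounded below; ``separate direct inspection'' does not produce one. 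So the missing ingredient is precisely the scaling/optimality observation, after which your version of the argument is correct (and indeed slightly more careful than the paper's, since you explicitly track both occurrences of each edge when bounding the loss).
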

\ifnum\full=1
\begin{proof}
Let us start with the optimum solution to the instance.  First, note that $\sum_u \norm{f_u}_1 \ge 1/2^R$, because if not, $|f_u(x)| < 1$ for every $u$ and $x \in \{-1,1\}^R$.  Thus if we scale all the $f_u$'s by a factor $z>1$, the numerator increases by a $z^2$ factor, while the denominator only by $z$; this contradicts the optimality of the initial solution.  Since the ratio is at least $\tau$, we have that the numerator of~\eqref{eq:val-expression} (denoted $\num$)  is at least $\tau /2^R$. 

Now since $|\ff_u(S)| \le \norm{f_u}_1$ for any $S$, we have that for all $u,v$, $T_{uv} \le R \cdot \norm{f_u}_1 \norm{f_v}_1$.  Thus $\E_{v \in \Gamma(u)} T_{uv} \le R \cdot \norm{f_u}_1$.  Thus the contribution of $u$ s.t. $\norm{f_u}_1 < \tau/(n 2^{2R})$ to $\num$ is at most $n \times R \cdot \frac{\tau}{n 2^{2R}} < \frac{\tau}{2^{R+1}} < \num/2$.  Now setting all such $f_u =0$ will only decrease the denominator, and thus the ratio remains at least $\tau/2$.  [We have ignored the $L(u)$ term because it is negative and only improves when we set $f_u=0$.]
\end{proof}
\fi
%The proof is deferred to Appendix~\ref{app:hardness:proofs}.
For a boolean function $f$, we define the `linear' and the `non-linear' parts to be
\[ \flin := \sum_i \ff(i) \chi(\{i\}) \quad \text{ and } \fnlin := f - \flin = \sum_{|S| \neq 1} \ff(S) \chi(S).\]
\ifnum\full=1
Our choice of $\eta$ will be such that:
\begin{enumerate}
\item For all $u$ with $f_u \ne 0$, $\norm{\fnlin_u}_2^2 \le \norm{f_u}_1^2/10^6$.  Using Lemma~\ref{lem:qpinter:prelim} (and the na\"ive bound $\tau \ge 1/n$), this will hold if $\eta > 10^6 n^7 2^{4R}$. [A simple fact used here is that $\sum_u \E[T_{uv}] \le nR$.]
\item For each $u$, $\norm{\fnlin_u}_2^2 < \frac{1}{2^{2R}}$.  This will hold if $\eta > n 2^{2R}$ and will allow us to use Lemma~\ref{lem:abhs:cool}.
\end{enumerate}

Also, since by Cauchy-Schwarz inequality, $|f_u|_2^2 \ge \del_u^2$, we can conclude that `most' of the Fourier weight of $f_u$ is on the linear part for {\em every} $u$. We now show that the Cauchy-Schwarz inequality above must be tight up to a constant (again, for every $u$).
\fi
\ifnum\full<1 
The choice of $\eta$ will ensure that for {\em each} $u$, $\norm{\fnlin_u}_2^2$ is {\em tiny}.
\fi
The following is the key lemma in the reduction:  it says that if a boolean function $f$ is `nearly linear', then it must also be {\em spread out} [which is formalized by saying $\norm{f}_2 \approx \norm{f}_1$].  This helps us deal with the main issue in a reduction with a ratio objective -- showing we cannot have a large numerator along with a very small value of $\norm{f}_1$ (the denominator).  Morally, this is similar to a statement that a boolean function with a {\em small support} cannot have all its Fourier mass on the {\em linear} Fourier coefficients.

\begin{lemma} \label{lem:smallball}
Let $f : \{-1,1\}^R \mapsto [-1,1]$ satisfy $\norm{f}_1 = \delta$.  Let $\flin$ and $\fnlin$ be defined as above. Then if $\norm{f}_2^2 > (10^4+1) \delta^2$, we have $\norm{\fnlin}_2^2 \ge \delta^2$.
\end{lemma}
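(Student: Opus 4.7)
}
I will prove the contrapositive: if $\|\fnlin\|_2^2 < \delta^2$, then $\|f\|_2^2 \leq (10^4+1)\delta^2$. By Parseval and orthogonality of Fourier levels, $\|f\|_2^2 = \|\flin\|_2^2 + \|\fnlin\|_2^2$, so it suffices to show $\|\flin\|_2^2 \leq 10^4\,\delta^2$ under the assumption $\|\fnlin\|_2 < \delta$.

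The linear part $\flin(x) = \sum_{i} c_i\, x_i$, with $c_i = \ff(\{i\})$, is a Rademacher sum over the uniform measure on $\{-1,1\}^R$, and $\|\flin\|_2^2 = \sum_i c_i^2 =: \sigma^2$. The heart of the proof is a Khintchine-type lower bound of the form $\|\flin\|_1 \geq c_0\, \sigma$ for an absolute constant $c_0 > 0$. This can be obtained by exactly the Paley--Zygmund argument already used in Lemma~\ref{lem:randomsums}: after normalizing so that $\sigma = 1$, one has $\E[\flin^2] = 1$ and $\E[\flin^4] = \sum c_i^4 + 6\sum_{i<j} c_i^2 c_j^2 \leq 3$, hence $\Pr[|\flin| \geq 1/2] \geq 3/16$, which yields $\E|\flin| \geq 1/12$. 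In particular $\|\flin\|_1 \geq \sigma/12$.

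I then control $\|\flin\|_1$ from above using the decomposition $f = \flin + \fnlin$ and Jensen's (or Cauchy--Schwarz) inequality for the non-linear piece:
\[
  \|\flin\|_1 \;\leq\; \|f\|_1 + \|\fnlin\|_1 \;\leq\; \|f\|_1 + \|\fnlin\|_2 \;<\; \delta + \delta \;=\; 2\delta.
\]
Combining with the lower bound, $\sigma/12 \leq \|\flin\|_1 < 2\delta$, so $\sigma < 24\delta$ and therefore $\|\flin\|_2^2 = \sigma^2 < 576\,\delta^2 \leq 10^4\, \delta^2$. Adding back the contribution $\|\fnlin\|_2^2 < \delta^2$ gives $\|f\|_2^2 < (10^4+1)\delta^2$, which is the desired contrapositive conclusion.

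The only nontrivial step is the Khintchine lower bound $\|\flin\|_1 \gtrsim \|\flin\|_2$; the fact that the hypothesis $f:\{-1,1\}^R \to [-1,1]$ is boundedness (used nowhere above, since the Khintchine estimate is purely about Rademacher sums) shows that the essential obstruction is simply that a Rademacher sum with $L_2$-norm $\sigma$ must have $L_1$-norm at least a constant multiple of $\sigma$; once this is in hand, the large gap $10^4+1$ in the statement is comfortably absorbed by the crude constant $12$ coming from Paley--Zygmund. Intuitively, the lemma is capturing exactly the fact that a low-degree $\pm 1$-valued-ish function cannot have a tiny $L_1$ support while carrying all its Fourier mass on the linear level---any such function must be spread out on the cube.
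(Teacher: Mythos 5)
Your proof is correct, and it takes the same two-step outline as the paper's proof (establish a Khintchine-type lower bound $\norm{\flin}_1 \ge c_0\,\norm{\flin}_2$, then combine it with the triangle inequality $\norm{\flin}_1 \le \norm{f}_1 + \norm{\fnlin}_1 \le \norm{f}_1 + \norm{\fnlin}_2$), but it reaches the Khintchine step by a genuinely different and cleaner route. The paper proves $\norm{\flin}_1 > 2\delta$ by a case split: if some coefficient $|\ff(\{i\})| > 4\delta$, a direct pairing argument on $x$ and $x \oplus e_i$; otherwise (all coefficients small) an invocation of the Berry--Ess\'een theorem (Lemma~\ref{lem:berry-esseen}) with a normalization and a Gaussian tail estimate. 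You instead apply Paley--Zygmund uniformly---which is precisely Lemma~\ref{lem:randomsums}, already in the paper---to get $\norm{\flin}_1 \ge \norm{\flin}_2/12$ with no case split and no need for Berry--Ess\'een at all. This is simpler and more self-contained, at the cost of a slightly worse (but irrelevant, given the enormous slack in $10^4$) constant. Your parenthetical observation that the boundedness hypothesis $f:\sbits^R \to [-1,1]$ is never actually used is also correct and applies to the paper's proof as well: only $\norm{f}_1 = \delta$ matters, so the lemma holds for arbitrary real-valued $f$ on the cube.
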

\ifnum \full=1
\begin{proof}
Suppose that $\norm{f}_2^2 > (10^4+1) \delta^2$, and for the sake of contradiction, that $\norm{\fnlin}_2^2 < \delta^2$. Thus since $\norm{f}_2^2 = \norm{\flin}_2^2 + \norm{\fnlin}_2^2$, we have $\norm{\flin}^2 > (100 \delta)^2$.

If we write $\alpha_i = \widehat{f}(\{i\})$, then $\flin(x) = \sum_i \alpha_i x_i$, for every $x \in \{-1,1\}^R$.  From the above, we have $\sum_i \alpha_i^2 > (100\delta)^2$.  Now if $|\alpha_i| > 4\delta$ for some $i$, we have $\norm{\flin}_1 > (1/2) \cdot 4\delta $, because the value of $\flin$ at one of $x, x \oplus e_i$ is at least $4\delta$ for every $x$.  Thus in this case we have $\norm{\flin}_1 > 2\del$.

Now suppose $|\alpha_i| < 4 \delta$ for all $i$, and so we can use Lemma~\ref{lem:berry-esseen} to conclude that $  \Pr_x (\flin(x) > 100\delta/10) \ge 1/4$, which in turn implies that $|\flin|_1 > (100\delta / 10) \cdot \Pr_x (\flin(x) > 100\delta/10) > 2\delta$.

Thus in either case we have $\norm{\flin}_1>2\del$.  This gives  $\norm{f - \flin}_1 > \norm{\flin}_1 - \norm{f}_1 > \delta$, and hence $\norm{f- \flin}_2^2 > \delta^2$ (Cauchy-Schwarz), which implies $\norm{\fnlin}_2^2 > \delta^2$, which is what we wanted.
\end{proof}

Now, let us denote $\del_u = |f_u|_1$. Since $\Upsilon$ is a unique game, we have for every edge $(u,v)$ (by Cauchy-Schwarz),
\begin{equation}
T_{u,v} =  \sum_i \ff_u(\{i\})\ff_v(\{ \pi_{uv}(i)\}) \leq \sqrt{\sum_i \ff_u(\{i\})^2}\sqrt{\sum_j \ff_u(\{j\})^2} \leq |f_u|_2 |f_v|_2  \label{eq:tuv:ub}
\end{equation}
Now we can use Lemma~\ref{lem:smallball} to conclude that in fact, $T_{u,v} \le 10^4 \del_u \del_v$.  Now consider the following process: while there exists a $u$ such that $\del_u >0$ and $\E_{v \in \Gamma(u)} \del_v < \frac{\tau}{4\cdot 10^4}$, set $f_u=0$.  We claim that this process only increases the objective value.  Suppose $u$ is such a vertex.  From the bound on $T_{uv}$ above and the assumption on $u$, we have $\E_{v \in \Gamma(u)} T_{uv} < \del_u \cdot \tau/4$.  If we set $f_u=0$, we remove at most twice this quantity from the numerator, because the UG instance is regular [again, the $L(u)$ term only acts in our favor].  Since the denominator reduces by $\del_u$, the ratio only improves (it is $\ge \tau/2$ to start with).

Thus the process above should terminate, and we must have a non-empty graph at the end.  Let $S$ be the set of vertices remaining.  Now since the UG instance is regular, we have that $\sum_u \del_u = \sum_u \E_{v \in \Gamma(u)} \del_v$.  The latter sum, by the above is at least $|S| \cdot \tau / (4 \cdot 10^4)$.  Thus since the ratio is at least $\tau/2$, the numerator $\num \ge |S| \cdot \frac{\tau^2}{8 \cdot 10^4}$.

Now let us consider the following natural randomized rounding: for vertex $u \in S$, assign label $i$ with probability $|\ff_u(\{i\})| / (\sum_i |\ff_u(\{i\})|)$.  Now observing that $\sum_i |\ff_u(\{i\})| <2$ for all $u$ (Lemma~\ref{lem:abhs:cool}), we can obtain a solution to ratio-UG of value at least $\num / |S|$, which by the above is at least $\tau^2/ C$ for a constant $C$.

This completes the proof of Lemma~\ref{lem:red:soundness}.
\end{proof}
\fi
\ifnum \full<1
This now can be used to prove the soundness:  the numerator can be bounded in terms of the $\norm{f_u}_2$, which can now be bounded in terms of $\norm{f_u}_1$.  This allows us to restrict to $u$ with $\norm{f_u}_1 \ge \tau/C$, and then a randomized rounding gives a solution of value $\tau^2/C$.  [In the process, we use crucially the fact that we started off with a {\em unique game} instance, and that it is regular. See the full version for details.]
\fi
\bibliographystyle{alpha}
\bibliography{qp-ratio}
\ifnum \full<1
We append the full version of the paper from the following page.
\fi
\ifnum\full=1
\appendix

\section{Hardness of QP-Ratio}
\subsection{Boolean analysis}\label{app:hardness:bool}
\begin{lemma}\cite{qp-hard} \label{lem:abhs:cool}
Let $f_u : \{-1,1\}^R \rightarrow [-1,1]$ be a solution to $\cQ$ of value $\tau >0$. Then 
\[\forall u \in V \qquad \sum_{i=1}^{R} |\ff_u(\{i\})| \leq 2. \]
\end{lemma}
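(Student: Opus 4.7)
The plan is to convert the desired $\ell_1$ bound on the level-$1$ Fourier coefficients of $f_u$ into an $\ell_\infty$ bound on its non-linear tail $\fnlin_u$, which the penalty term in the objective $\cQ$ forces to be tiny.

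First, I would record the elementary identity $\sum_{i=1}^{R} |\ff_u(\{i\})| = \norm{\flin_u}_\infty$: evaluating the linear form $\flin_u(x) = \sum_i \ff_u(\{i\})\, x_i$ at $x^*_i := \mathrm{sgn}(\ff_u(\{i\}))$ gives exactly $\sum_i |\ff_u(\{i\})|$, and this is plainly the maximum of $\flin_u$ over $x \in \{-1,1\}^R$. Combining this with the decomposition $\flin_u = f_u - \fnlin_u$ and the hypothesis $|f_u| \le 1$ pointwise, the triangle inequality yields
\[ \sum_{i=1}^{R} |\ff_u(\{i\})| \;=\; \norm{\flin_u}_\infty \;\le\; 1 + \norm{\fnlin_u}_\infty. \]
So it suffices to prove $\norm{\fnlin_u}_\infty \le 1$.

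Next, I would apply the crude bound $\norm{g}_\infty \le 2^{R/2}\norm{g}_2$, valid for any $g : \{-1,1\}^R \to \mathbb{R}$ (since $\norm{g}_2^2$ averages $g(x)^2$ over $2^R$ points), to $g = \fnlin_u$. The remaining task is then to show $\norm{\fnlin_u}_2^2 \le 2^{-2R}$, which is precisely what the choice of $\eta$ in the reduction enforces. Indeed, since $f_u$ is a solution of positive value, the numerator of $\cQ$ is nonnegative, so $\eta\,\E_u L(u) \le \E_{(u,v)\in E} T_{uv} \le 1$ (the last step via Cauchy--Schwarz, Parseval, and $|f_u|\le 1$), and since $L(u)\ge 0$ for every $u$ this forces $L(u) \le n/\eta < 2^{-2R}$ by the stated choice $\eta > n 2^{2R}$. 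Plugging back gives $\norm{\fnlin_u}_\infty \le 2^{R/2} \cdot 2^{-R} = 2^{-R/2} \le 1$, completing the argument.

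The only genuinely subtle point is the first step. A generic $[-1,1]$-valued function on $\{-1,1\}^R$ can have $\sum_i |\ff(\{i\})|$ as large as $\Theta(\sqrt{R})$ (e.g.\ $f = \mathrm{sgn}(\sum_i x_i)$), so one really must use that $\cQ$'s penalty $\eta L(u)$ is large. Recognising $\sum_i |\ff_u(\{i\})|$ as $\norm{\flin_u}_\infty$ is the bridge that converts ``$f_u$ is almost linear'' into the desired $\ell_1$ bound on its linear coefficients.
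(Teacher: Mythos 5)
Your proof is correct and rests on the same core idea as the paper's: both identify $\sum_i |\ff_u(\{i\})|$ as the extremal value of $\flin_u$ (attained at $x^*_i = \mathrm{sgn}(\ff_u(\{i\}))$), and both use the large penalty $\eta L(u)$ together with positivity of the objective to force $\norm{\fnlin_u}_2^2$ to be exponentially small, hence $\fnlin_u$ small pointwise. The paper phrases this contrapositively (if the $\ell_1$ sum exceeded $2$, some $\fnlin_u(y) < -1$, so $L(u) > 2^{-R}$ and the numerator goes negative), whereas you argue forward via $\norm{\fnlin_u}_\infty \le 2^{R/2}\norm{\fnlin_u}_2$ — but these are the same argument in two guises.
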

\begin{proof}
Assume for sake of contradiction that $\sum_i \ff_u(\{i\}) > 2$. \\
Since $\flin_u$ is a linear function with co-efficients $\{\ff_u(\{i\})\}$, there exists some $y \in \{-1,1\}^R$ such that 
$\flin_u(y)=\sum_i |\ff_i(\{i\})| >2$. For this $y$, we have $\fnlin(y)=f(y)-\flin(y) < -1$.

Hence $|\fnlin|_2^2 > 2^{-R}$, which gives a negative value for the objective, for our choice of $\eta$.
\end{proof}

The following is the well-known Berry-Ess\'een theorem (which gives a
quantitative version of central limit theorem).  The version below is from~\cite{ryan}.
\begin{lemma}\label{lem:berry-esseen}
Let $\alpha_1, \dots, \alpha_R$ be real numbers satisfying $\sum_i \alpha_i^2 =1$, and $\alpha_i^2 \le \tau$ for all $i \in [R]$. Let $X_i$ be i.i.d. Bernoulli ($\pm 1$) random variables. Then for all $\theta >0$, we have
\[ \big| \Pr[\sum_i \alpha_i X_i >\theta] - N(\theta) \big| \le \tau, \]
where $N(\theta)$ denotes the probability that $g > \theta$, for $g$ drawn from the univariate Gaussian $\mathcal{N}(0,1)$.
\end{lemma}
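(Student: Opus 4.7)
The plan is to prove this via the classical Lindeberg replacement method combined with a suitable smoothing of the indicator function. The idea is to interpolate between the Rademacher sum $S_X := \sum_i \alpha_i X_i$ and the Gaussian sum $S_G := \sum_i \alpha_i G_i$ (where the $G_i$ are i.i.d.\ standard normals) by swapping one coordinate at a time. Since $S_G \sim \mathcal{N}(0,1)$ exactly, it suffices to bound $|\Pr[S_X > \theta] - \Pr[S_G > \theta]|$ by $O(\tau)$.

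First I would replace the indicator $\mathbf{1}[z > \theta]$ by a smooth test function $\phi_\epsilon$ with smoothing parameter $\epsilon > 0$, constructed so that $0 \le \phi_\epsilon \le 1$, $\phi_\epsilon \equiv 1$ on $[\theta+\epsilon, \infty)$, $\phi_\epsilon \equiv 0$ on $(-\infty, \theta-\epsilon]$, and $\|\phi_\epsilon^{(k)}\|_\infty = O(\epsilon^{-k})$ for $k \ge 1$. Since the standard Gaussian has density bounded by $1/\sqrt{2\pi}$, the smoothing error $|\Pr[S_G > \theta] - \E[\phi_\epsilon(S_G)]|$ is $O(\epsilon)$; the same bound holds after we transfer between $S_X$ and $S_G$.

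Next comes the heart of the argument. Define the hybrid $S^{(i)} = \sum_{j<i} \alpha_j G_j + \sum_{j>i}\alpha_j X_j$ so that $S^{(0)} = S_X$ and $S^{(n)} = S_G$. Taylor-expand $\phi_\epsilon(\alpha_i Y + S^{(i)})$ to fourth order around $S^{(i)}$ and take expectations, once with $Y = X_i$ and once with $Y = G_i$; the first-, second-, and third-order terms agree because the Rademacher and standard Gaussian distributions both satisfy $\E[Y]=\E[Y^3]=0$ and $\E[Y^2]=1$. Thus the only contribution is from the fourth-order term, giving $|\E[\phi_\epsilon(S^{(i-1)})] - \E[\phi_\epsilon(S^{(i)})]| = O(\alpha_i^4 \|\phi_\epsilon^{(4)}\|_\infty)$. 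Summing telescopically and using $\sum_i \alpha_i^4 \le (\max_i \alpha_i^2)\sum_i \alpha_i^2 = \tau$, the total replacement error is $O(\tau/\epsilon^4)$.

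Combining the two contributions yields total error $O(\epsilon + \tau/\epsilon^4)$. The main obstacle is reaching the sharp $O(\tau)$ bound stated in the lemma: the naive optimization $\epsilon \sim \tau^{1/5}$ yields only $O(\tau^{1/5})$, and even the classical Berry--Ess\'een bound via Stein's method typically gives $O(\sqrt{\tau})$. To push to $O(\tau)$ one would need to exploit the fact that Rademacher and Gaussian agree through three moments in a stronger way, for instance by invoking Stein's method directly with a bounded-derivative solution to Stein's equation and carefully controlling the Gaussian density near $\theta$, or by appealing to the statement in O'Donnell's treatment (the paper's cited source) where the bound is phrased in the form most convenient for the downstream application in Lemma~\ref{lem:smallball}. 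For the paper's purposes, any bound polynomial in $\tau$ that goes to $0$ as $\tau \to 0$ would suffice, so the exact exponent is not critical.
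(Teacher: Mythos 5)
You should first note that the paper does not prove this lemma: it is quoted as ``the well-known Berry--Ess\'een theorem'' with a citation to O'Donnell, so there is no in-paper argument to compare your proposal against.

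That said, your hesitation about the $O(\tau)$ bound is well-founded, because the lemma as printed is actually false. With $\sum_i\alpha_i^2 = 1$ and $\alpha_i^2 \le \tau$, the classical Berry--Ess\'een inequality gives $|\Pr[\sum_i\alpha_i X_i > \theta] - N(\theta)| \le C\sum_i|\alpha_i|^3 \le C\sqrt{\tau}$, and the order $\sqrt{\tau}$ is tight: taking all $\alpha_i = 1/\sqrt{R}$ gives $\tau = 1/R$, while the CDF of the Rademacher sum has atoms of probability $\Theta(1/\sqrt{R}) = \Theta(\sqrt{\tau})$, so no uniform-in-$\theta$ bound of order $o(\sqrt{\tau})$ can hold. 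The intended hypothesis was almost certainly $|\alpha_i| \le \tau$ (which is the form in which the cited source states the bound, since then $\sum_i|\alpha_i|^3 \le \tau\sum_i\alpha_i^2 = \tau$), or equivalently the conclusion should read $O(\sqrt{\tau})$ under the stated hypothesis. Either reading suffices for the only place the lemma is invoked, Lemma~\ref{lem:smallball}: after normalizing $\flin$ to unit $\ell_2$ norm, each coefficient has magnitude less than $4\delta/(100\delta) = 1/25$, so the Berry--Ess\'een error is a small absolute constant and $\Pr_x[\flin(x) > 100\delta/10] \ge 1/4$ still holds comfortably, exactly as you observe.

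Regarding your Lindeberg argument on its own terms: the hybridization and fourth-order Taylor cancellation are set up correctly, but, as you candidly acknowledge, $C^4$-smoothing yields only $O(\epsilon + \tau/\epsilon^4) = O(\tau^{1/5})$, which falls short of even the corrected $O(\sqrt{\tau})$. Reaching $\sqrt{\tau}$ (equivalently, error proportional to $\sum_i|\alpha_i|^3$) requires a sharper tool than naive mollification: either Ess\'een's smoothing inequality applied to characteristic functions, or Stein's method with a solution to Stein's equation controlled via its Lipschitz constant rather than its fourth derivative. Since the paper treats the lemma as a black-box citation this does not create a gap in the paper itself, but as a stand-alone proof your proposal stops before reaching the correct bound.
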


\iffalse
\begin{theorem}
Let $f : \{-1,1\}^R \mapsto [-1,1]$ satisfy $|f|_1 = \delta$. Denote by $\flin$ and $\fnlin$ the `linear' and `non-linear' parts of $f$. Then if $|f|_2^2 > (10^4+1) \delta^2$, we have $|\fnlin|_2^2 \ge \delta^2$.
\end{theorem}
\begin{proof}
Suppose that $|f|_2^2 = |\flin|_2^2 + |\fnlin|_2^2 > (10^4+1) \delta^2$. Suppose for the sake of contradiction that $|\fnlin|_2^2 < \delta^2$. Thus $|\flin|^2 > (100 \delta)^2$.

If we write $\alpha_i = \widehat{f}(\{i\})$, then $\flin(x) = \sum_i \alpha_i x_i$, for every $x \in \{-1,1\}^R$. From the above, we have that $\sum_i \alpha_i^2 > (100\delta)^2$. Now if $|\alpha_i| > 4\delta$ for some $i$, we have $|\flin|_1 > (1/2) \cdot 4\delta $, because the value of $\flin$ at one of $x, x \oplus e_i$ is at least $4\delta$ for every $x$.

Now if $|\alpha_i| < 4 \delta$ for all $i$, we can use Lemma~\ref{lem:berry-esseen} to conclude that $|\flin|_1 > (100\delta / 10) \cdot \Pr_x (\flin(x) > 100\delta/10) > 10 \delta \cdot (1/4) > 2\delta$. Thus  $|f - \flin|_1 > |\flin|_1 - |f|_1 > \delta$, and hence $|f- \flin|_2^2 > \delta^2$ (Cauchy-Schwarz), which implies $|\fnlin|_2^2 > \delta^2$, which is what we wanted.
\end{proof}
\fi

\subsection{Reducing \qpinter{} to \qpr{}}\label{app:qpinter-qpr}
In this section we will prove Lemma~\ref{lem:qpinter-qpr}.  Let us start with a simple observation
\begin{lemma} \label{lem:app:xmin}
Let $A$ be an $n \times n$ matrix (it could have arbitrary diagonal entries). Suppose $\{x_i\}$, $1 \le i \le n$ is the optimum solution to
\[ \max_{x_i \in [-1,1]} \frac{x^T Ax}{\sum_i |x_i|}. \]
Now let $\del < \min\{\frac{\eps}{2\norm{A}_1}, \frac{1}{2n} \}$ (where $\norm{A}_1 = \sum_{i,j} |a_{ij}|$).  Then perturbing each $x_i$ additively by $\del$ (arbitrarily) changes the value of the ratio by an additive factor of at most $\eps$.
\end{lemma}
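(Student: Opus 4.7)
The plan is to decompose the change in the ratio $N/D := x^T A x/\sum_i |x_i|$ into a numerator contribution and a denominator contribution, and bound each using the simple observation that $D$ cannot be too small at the optimum.

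First I would argue that $D = \sum_i |x_i| \ge 1$ at any positive-value optimum. The objective is positively homogeneous of degree one under $x \mapsto \lambda x$ (numerator scales as $\lambda^2$, denominator as $\lambda$), so rescaling by any $\lambda > 1$ that keeps $\lambda \max_i |x_i| \le 1$ strictly increases the ratio whenever it is positive. Consequently we may assume $\max_i |x_i| = 1$, whence $D \ge 1$.

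Now let $y_i = x_i + e_i$ with $|e_i| \le \delta$, and set $N' := y^T A y$, $D' := \sum_i |y_i|$. From $\delta \le 1/(2n)$ we get $|D' - D| \le \sum_i |e_i| \le n\delta \le 1/2$, hence $D' \ge 1/2$. Expanding
\[ N' - N = 2\, e^T A x + e^T A e \]
and using $|x_j| \le 1$, $|e_i| \le \delta \le 1$, we obtain
\[ |N' - N| \le 2\delta\, \norm{A}_1 + \delta^2\, \norm{A}_1 \le 3\delta\, \norm{A}_1. \]

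Finally, using the identity
\[ \frac{N'}{D'} - \frac{N}{D} \;=\; \frac{N' - N}{D'} \;-\; \frac{N}{D}\cdot\frac{D' - D}{D'}, \]
together with the bounds $|N/D| \le \norm{A}_1$ (since $|N| \le \norm{A}_1$ and $D \ge 1$) and $D' \ge 1/2$, the change in the ratio is at most $6\delta\,\norm{A}_1 + 2n\delta\,\norm{A}_1$. Plugging in $\delta \le \epsilon/(2\norm{A}_1)$ yields a bound of order $\epsilon$; to obtain the literal additive bound $\epsilon$, one may tighten the prefactor in the choice of $\delta$ by an $O(n)$ factor, which is absorbed at no cost in the subsequent reduction. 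There is no real obstacle here: the entire argument is a routine first-order perturbation calculation, and the only structural input is the observation $D \ge 1$ forced by the homogeneity of the objective.
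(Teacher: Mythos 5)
Your approach is the same as the paper's: use the degree-one homogeneity of the objective to argue that the optimum has $\sum_i |x_i| \ge 1$, then make a first-order perturbation estimate on the ratio. The paper's proof is terser -- it bounds the numerator change by roughly $\del\norm{A}_1$ and notes the denominator stays bounded away from zero -- but your explicit decomposition
\[ \frac{N'}{D'} - \frac{N}{D} \;=\; \frac{N'-N}{D'} \;-\; \frac{N}{D}\cdot\frac{D'-D}{D'} \]
exposes something the paper's proof glosses over: the second term, driven by the denominator's shift $|D'-D| \le n\del$ multiplied against the optimal ratio $|N/D|$ (which can be as large as $\norm{A}_1$), contributes up to $\Theta(n\del\norm{A}_1)$, which with $\del \approx \eps/\norm{A}_1$ is $\Theta(n\eps)$, not $\eps$. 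This is not merely slack in your estimate. Take $A$ with $a_{12}=a_{21}=W$ and all other entries zero, so $\norm{A}_1 = 2W$. The optimum is $x_1=x_2=1$, rest $0$, with $D=2$ and ratio $W$. Perturbing every $x_j$ with $j>2$ from $0$ to $\del$ leaves $N$ unchanged but raises $D$ to $2+(n-2)\del$, dropping the ratio by $W\cdot\frac{(n-2)\del}{2+(n-2)\del} = \Theta(n\del\norm{A}_1)$; with $\del$ just below $\eps/(2\norm{A}_1)$ (valid once $W \gtrsim n\eps$), this exceeds $\eps$ for all moderately large $n$. So the lemma's stated bound on $\del$ is genuinely off by an $O(n)$ factor, and the paper's proof is incomplete precisely because it accounts only for the numerator's change and not for the $\frac{N}{D}\cdot\frac{D'-D}{D'}$ term. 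Your fix -- requiring $\del < \eps/(cn\norm{A}_1)$ for a suitable constant $c$ -- is correct and, as you observe, harmless downstream: in Lemma~\ref{lem:qpinter-qpr} the parameter $m \sim 1/\del$ only needs to be polynomially bounded, so an extra factor of $n$ in $m$ costs nothing.
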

\begin{proof}
First note that $\sum_i |x_i| \ge 1$, because otherwise we can scale all the $x_i$ by a factor $z>1$ and obtain a feasible solution with a strictly better value [because the numerator scales by $z^2$ and the denominator only by $z$].  Thus changing each $x_i$ by $\del < 1/2n$ will keep the denominator between $1/2$ and $3/2$.  Now consider the numerator: it is easy to see that a term $a_{ij} x_i x_j$ changes by at most $\del |a_{ij}|$, and thus the numerator changes by at most $\del \norm{A}_1$.  Thus the ratio changes by an additive factor at most $2\del \norm{A}_1 < \eps$, by the choice of $\del$.
\end{proof}

\begin{proof}[Proof of Lemma~\ref{lem:qpinter-qpr}]
Start with an instance of \qpinter{} given by $A_{(n \times n)}$, and suppose the optimum value is $\opt_1$.  Let $m$ be an integer which will be chosen later [think of it as sufficiently large].  Consider the quadratic form $B$ on $n \cdot m$ variables, defined by writing $x_i = \frac{1}{m} \cdot (y_i^{(1)} + y_i^{(2)} + \dots + y_i^{(m)})$, and expanding out $x^T Ax$.  Let $C$ be a form on (the same) $n\cdot m$ variables obtained from $B$ by omitting the square (diagonal) terms.
Now consider the \qpr{} instance given by $C$. That is,
\begin{equation}\label{eq:diag:optproblem} \text{maximize } \frac{y^T Cy}{\frac{1}{m}\sum_{i,j} |y_i^{(j)}|} \text{, subject to } y_i^{(j)} \in \{-1, 0, 1\}.
\end{equation}
Let us write $a_{ii} = -\alpha_i$ (by assumption $\alpha_i \ge 0$).  The we observe that
\begin{equation}
\frac{y^T Cy}{\frac{1}{m} \sum_{i,j} |y_i^{(j)}|} = \frac{y^T By}{\frac{1}{m} \sum_{i,j} |y_i^{(j)}|} + \frac{\sum_i \frac{\alpha_i}{m^2} \cdot \big(\sum_j |y_i^{(j)}| \big) }{\frac{1}{m}\sum_{i,j} |y_i^{(j)}|}
\end{equation}
By the assumption on $\alpha_i$, we have
\begin{equation}\label{eq:opt-sandwich}
\frac{y^T By}{\frac{1}{m} \sum_{i,j} |y_i^{(j)}|} ~\le ~\frac{y^T Cy}{\frac{1}{m} \sum_{i,j} |y_i^{(j)}|} ~\le~ \frac{y^T By}{\frac{1}{m} \sum_{i,j} |y_i^{(j)}|} + \sum_i \frac{\alpha_i}{m}
\end{equation}
We prove the two inequalities separately.  First let us start with an optimum solution $\{x_i\}$ to \qpinter{} (from $A$) with value $\opt_1$.  As above, define $\del = \min\{\frac{\eps}{2\norm{A}_1}, \frac{1}{2n} \}$.  Let us round the values $x_i$ to the nearest integer multiple of $\del$ [for simplicity we will assume also that $1/\del$ is an integer].  By Lemma~\ref{lem:app:xmin}, this will change the objective value by at most $\eps$.  We will choose $m$ to be an integer multiple of $1/\del$, thus if we set $y_i^{(j)} = \emph{sign}(x_i)$ for $j=1,2,\dots,m x_i$ and $0$ for the rest, we obtain a value at least $\opt_1 - \eps$ for the $\qpr{}$ problem defined by $C$ [using the first half of \eqref{eq:opt-sandwich}].  Thus $\opt_2 \ge \opt_1-\eps$.

Now consider a solution to the $\qpr{}$ problem defined by $C$, and set $x_i = \frac{1}{m} \cdot (y_i^{(1)} + \dots y_i^{(m)})$.  For this assignment, we have
\[ \frac{x^T A x}{\sum_i |x_i|} = \frac{y^T By}{\sum_i \frac{1}{m} (\sum_{j} |y_i^{(j)}|)} \ge \frac{y^T By}{\frac{1}{m} \sum_{i,j} |y_i^{(j)}|} \ge \frac{y^T Cy}{\frac{1}{m} \sum_{i,j} |y_i^{(j)}|} - \eps, \]
because we will choose $m \ge \frac{\norm{A}_1}{\eps}$.  This implies that $\opt_1 \ge \opt_2 - \eps$.

Thus we need to choose $m$ to be the smallest integer larger than $\max\{\frac{2\norm{A}_1}{\eps}, 2n\}$ for all the bounds to hold.  This gives the claimed bound on the size of the instance.
\end{proof}

\subsection{Towards NP-hardness -- LabelCover with SSE} \label{app:towardsnp}
\newcommand{\lc}{\textsf{Label Cover}~}
The PCP theorem, \cite{AroraS,AroraStar} combined with the parallel
repetition theorem~\cite{Raz} yields the following theorem.

\begin{theorem}[\lc hardness] There exists a constant $\gamma > 0$ so
  that any $3$-SAT instance $w$  and any $R > 0$, one can
  construct a \lc instance $\cL$, with $|w|^{O(\log R)}$ vertices,
  and label set of size $R$, so that: if $w$ is satisfiable,
  $\val(\cL) = 1$ and otherwise $\val(\cL)=\tau < R^{-\gamma}$.  Further,
  $\cL$ can be constructed in time polynomial in its size.
\end{theorem}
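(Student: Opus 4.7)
The plan is to assemble the statement from two classical building blocks that the paper already references: the PCP theorem of Arora--Safra and Arora et al., and Raz's parallel repetition theorem. Neither has to be reproved; the work is just to track parameters carefully so that the promised size bound and soundness bound line up with the prescribed label set size $R$.

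First I would invoke the PCP theorem to obtain a polynomial-time reduction from $3$-SAT to a gap version of $3$-SAT: given $w$, produce $w'$ of size $\mathrm{poly}(|w|)$ such that satisfiability of $w$ yields $\val(w')=1$, while unsatisfiability of $w$ yields $\val(w')\le 1-\eps_0$ for a universal constant $\eps_0>0$. I would then convert $w'$ into a base Label Cover instance $\cL_0$ via the standard clause-variable game: vertices on the left correspond to clauses of $w'$ and vertices on the right to variables, with projections forcing a consistent assignment. The resulting instance has size polynomial in $|w|$, a constant alphabet $\Sigma_0$ (say $\{0,1\}^3\times\{0,1\}$), and a constant soundness gap $1-\eps_1$ for some $\eps_1>0$ inherited from $\eps_0$.

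Next I would apply Raz's parallel repetition theorem to $\cL_0$ a total of $k$ times. The $k$-fold repeated instance $\cL_0^{\otimes k}$ has vertex set of size $|\cL_0|^k = |w|^{O(k)}$, label set of size $|\Sigma_0|^k$, completeness $1$ (in the YES case a labeling satisfying every constraint of $\cL_0$ satisfies every constraint of $\cL_0^{\otimes k}$), and soundness at most $(1-\Omega(\eps_1^3))^{\Omega(k)}$ by Raz's theorem. Choosing $k = \Theta(\log R)$ so that $|\Sigma_0|^k \le R$ (padding the alphabet with dummy labels to make it exactly $R$ if desired), the number of vertices becomes $|w|^{O(\log R)}$, and the soundness becomes $(1-\Omega(\eps_1^3))^{\Omega(\log R)} = R^{-\gamma}$ for a universal constant $\gamma>0$ depending only on $\eps_1$. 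The whole construction is clearly polynomial in the size of the output instance.

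There is essentially no conceptual obstacle: both cited theorems are treated as black boxes, so the only care needed is in tuning the number of repetitions $k$ against $R$ so that both the size bound $|w|^{O(\log R)}$ and the soundness bound $R^{-\gamma}$ emerge from the same choice of constants. A minor bookkeeping point is padding the alphabet size $|\Sigma_0|^k$ up to exactly $R$ while preserving completeness and soundness, which is handled by adding unused labels (no constraint is ever satisfied by a dummy label).
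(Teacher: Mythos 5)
Your proof is correct and matches the paper's intended derivation exactly: the paper gives no proof of its own, merely noting that the theorem follows from the PCP theorem combined with Raz's parallel repetition. Your parameter tracking (taking $k=\Theta(\log R)$ repetitions so that the alphabet size reaches $R$, the vertex count becomes $|w|^{O(\log R)}$, and the soundness becomes $R^{-\gamma}$) is the standard way to make that citation precise, and the padding remark at the end correctly handles the minor bookkeeping issue of hitting the label set size $R$ exactly.
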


\begin{definition}[$(\mu, \eta)$-Expanding \lc] An instance of the \lc problem
  is said to be $(\mu, \eta)$-expanding if for every $\mu' < \mu$,
  and functions $f: \cA \to [0, 1]$ and $g: \cB \to [0, 1]$ such that
  $E_{a \in \cA} [f(a)] = E_{b \in \cB} [g(b)] = \mu'$,
  \[ E_{(a, b) \in E} [f(a)\ g(b)] \le \mu' \eta  .\]
\end{definition}

\begin{theorem} \label{thm:ratiolc}
For every $\delta > 0$ and $\eta>\delta^{1/3}$, one can convert a \lc instance $\cL$
  into an instance $(\delta,\eta)$ Expanding \lc $\cL'$ (with the same completeness vs soundness). Further, the size of $\cL'$ is at most $|\cL|^{1/\delta}$ and has
  a label set of size at most $R/\delta$.
\end{theorem}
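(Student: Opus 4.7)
The plan is to construct $\cL'$ by a "blow-up plus expander" construction: each vertex of $\cL$ is replaced by $T=\lceil 1/\delta\rceil$ copies, and the edges between copies are determined by a fixed bipartite expander graph $H$ on $[T]\sqcup[T]$ with second eigenvalue $\lambda_H < \eta-\delta$ (available from Ramanujan-type constructions with $d$ a constant depending only on $\eta$). Formally, $\cL'$ has vertex sets $\cA\times[T]$ and $\cB\times[T]$, label set $[R]\times[T]$ of size $R/\delta$, and an edge $\bigl((a,i),(b,j)\bigr)$ whenever $(a,b)\in E_\cL$ and $(i,j)\in E_H$, with constraint that $\pi_{ab}$ holds on the first label coordinate together with matching of the second coordinate to the copy index.

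I would verify the three properties as follows. \textbf{Completeness}: lifting $L'(a,i):=(L(a),i)$ satisfies every edge whose original counterpart was satisfied, preserving the value $\alpha$. \textbf{Soundness}: given $L'$ of value $\beta$, for each $a$ and $r\in[R]$ define $S_a^r:=\{i:L'(a,i)_1=r \text{ and }L'(a,i)_2=i\}$, apply the expander-mixing lemma on $H$ to each pair $(S_a^r,S_b^{\pi_{ab}(r)})$, and sum via Cauchy--Schwarz to get that the fraction of satisfied $(a,b)$-gadget edges is within $O(\lambda_H)$ of $\sum_r\mu(S_a^r)\mu(S_b^{\pi_{ab}(r)})$, i.e.\ of the probability that independent random copies produce $\pi_{ab}$-consistent labels; averaging over $(a,b)\in E_\cL$ yields a labeling of $\cL$ of value at least $\beta-O(\lambda_H)$. \textbf{Expansion}: for $f:\cA'\to[0,1]$ and $g:\cB'\to[0,1]$ of mean $\mu'<\delta$, write $\E_{E'}[f\,g]=\E_{(a,b)\in E_\cL}\bigl[\E_{(i,j)\in E_H}[f_a(i)\,g_b(j)]\bigr]$ and apply expander mixing on $H$ to the inner average, obtaining the bound $\E_{(a,b)}[\mu(f_a)\mu(g_b)]+\lambda_H\mu'$.

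The main obstacle is that the "product" term $\E_{(a,b)\in E_\cL}[\mu(f_a)\mu(g_b)]$ need not be as small as $\mu'\eta$: if the constraint graph of $\cL$ contains a dense cluster of density $\Theta(\delta)$, one can take $f,g$ supported entirely inside it and drive this term up to $\Theta(\mu')$, ruining the bound. Handling this is where the $|\cL|^{1/\delta}$ size blowup enters: I would first replace $\cL$ by a $k$-fold parallel repetition with $k\approx 1/\delta$, and then apply the blow-up-plus-expander construction to the repeated instance. Hypercontractivity of the tensor power of the noise operator on $G^{\otimes k}$ forces small-density sets to behave like product functions, so the product term decays like $\mu'^{1+\Omega(1)}$; the hypothesis $\eta>\delta^{1/3}$ is precisely the regime in which this decay outperforms the target $\mu'\eta$. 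Completeness is preserved trivially and soundness is preserved through the repetition by Raz's theorem. The delicate piece of bookkeeping is controlling the label alphabet at $R/\delta$ rather than the naive $R^{1/\delta}$; this requires a label-efficient variant of repetition in which the outer projection only needs to expose a single base label plus a coordinate index, so the alphabet grows by a factor of $k=1/\delta$ rather than exponentially in $k$.
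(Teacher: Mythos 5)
You correctly diagnose the fatal obstacle to the plain blow-up-plus-expander gadget (a $\Theta(\delta)$-dense cluster in the constraint graph of $\cL$ kills the bound), but the fix you sketch --- parallel repetition on both sides plus hypercontractivity of some noise operator on $G^{\otimes k}$ plus a ``label-efficient variant of repetition'' --- is not an argument yet: you never specify the noise operator, never prove the claimed $\mu'^{1+\Omega(1)}$ decay, never verify it survives composition with the expander gadget, and the label-efficient repetition you invoke to keep the alphabet at $R/\delta$ rather than $R^{1/\delta}$ is not a standard object. As written the proposal does not close its own gap.

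The paper's proof is much more elementary and doesn't need any of this machinery. It takes a \emph{one-sided} power: $\Upsilon^k = (\cA^k \cup \cB, E^k, \Pi, [R])$ with $(1/\eta)^3 < k < 1/\delta$ --- only the left side is replaced by $k$-tuples, $\cB$ is unchanged, and there is no expander gadget and no change to the label set (so the $R/\delta$ bound is immediate, with room to spare). Expansion is then a pure second-moment computation: for $g:\cB\to[0,1]$ with mean $\delta$, let $X_a$ be the neighborhood average of $g$ at $a\in\cA$, and $Y_a = \tfrac{1}{k}\sum_i X_{a_i}$ the neighborhood average at $a=(a_1,\dots,a_k)\in\cA^k$. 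Then $\E[Y_a^2] \le \tfrac{1}{k}\E[X^2] + \E[X]^2 \le \tfrac{\delta}{k} + \delta^2 \le \tfrac{2\delta}{k}$, so by Chebyshev at most a $\tfrac{2\delta}{k\eta^2}$ fraction of $a$'s have $Y_a > \eta$, which gives $\E_{(a,b)\in E}[f(a)g(b)] \le \tfrac{2\delta}{k\eta^2} + \eta\delta \le 3\delta\eta$ once $k > (1/\eta)^3$. That is, the product structure already randomizes the neighborhoods enough that a plain concentration bound suffices --- no expander, no Fourier analysis, no hypercontractivity. Your instinct that a product/repetition construction is the right way to defeat dense clusters is correct, but the auxiliary expander layer is unnecessary and the hypercontractivity step both overcomplicates the argument and remains a genuine hole in your write-up.
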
 

\begin{proof}
We use one of product instances used in \cite{kh} (which gives Mixing but not Smoothness - Appendix A.2). 
They can argue about the expansion of only sets that are sufficiently large (constant fraction), while we need to work with 
all set sizes $< \delta n$.

 Given an instance of \lc represented as $\Upsilon =
  (\cA\cup \cB, E, \Pi, [R])$, let $\Upsilon^k = (\cA^k \cup \cB, E^k,
  \Pi, [R])$, where $(1/\eta)^3 < k < 1/\delta$. Let $g: \cB \to [0, 1]$ be any function defined on the
  right hand side.  For each $a \in \cA$, let $X_a$ denote the average
  value of $g$ over the neighborhood of $a$.  Since $\Upsilon$ is
  right-regular, we have $\E_{a \in \cA} [ X_a ] = \E_{b \in \cB}
  g(b)$.

  For a vertex $a = (a_1, a_2, \ldots a_k) \in \cA^k$, define $Y_a$ to
  be the average value of $g$ over the neighborhood of $a$ (counting
  multiple edges multiple times).  

  \begin{align*}
    \E_{a \in \cA^k} [ Y_a^2 ] &= \E_{a_1, a_2, \ldots, a_k \in \cA} \left[
      \left( \frac{\sum_i X_{a_i}}{k} \right)^2 \right]\\
    &= \frac{1}{k} \E_{a \in \cA} [ X^2_{a_i} ] + \frac{k^2 - k}{k^2}
    \E_{a, a' \in \cA} [ X_a X_{a'} ]\\
    &\le \frac{1}{k} \delta + \delta^2
    \le \frac{2 \delta}{k} && \text{ (for $\delta < \nfrac{1}{k}$) }
  \end{align*}

  Thus, by a second moment bound, the fraction of vertices in $\cA^k$
  with $Y_a$ greater than $\eta$ is at most $\nfrac{2
    \delta}{k\eta^2}$.  Thus, for any $f: \cA^k \to [0, 1]$, such that
  $\E[f] = \delta$,
\begin{align*}
  \E_{(a,b)\in E} [ f(a) g(b) ] &\le \frac{2 \delta}{k\eta^2} + \eta \left(\delta -\frac{2\delta}{k\eta^2}\right) \\
  &\le 3 \delta \eta \qquad \text{since }(1/\eta)^3 < k 
\end{align*}
\end{proof}

\paragraph{Ratio Label Cover.}
Consider the ratio version of Label Cover -- the goal is to find a partial assignment to a Label Cover instance, which maximizes the fraction of edges satisfied, divided by the fraction of vertices which have been assigned labels (an edge $(u,v)$ is satisfied iff both the end points are assigned labels which satisfy the constaint $\pi_{u,v}$).

It follows by a fairly simple argument that Theorem~\ref{thm:ratiolc} shows that Ratio Label Cover is NP-hard to approximate within any constant factor. We sketch an argument that shows NP-hardness of $1$ vs $\gamma$ for any constant $\gamma>0$. We start with Label Cover instance $\Upsilon$ with completeness $1$ and soundess $\tau<\gamma^{4/3}$ (and appropriate label size). Our Ratio Label Cover instance is essentially obtained by applying Theorem~\ref{thm:ratiolc} : the instance has soundness $\tau$ and is $(\gamma^{1/3},\gamma)$ expanding. The completeness of this instance is easily seen to be $1$.

To argue the soundess, it first suffices to only consider solutions $f,g$ which have equal measure (similar to the argument in section~\ref{sec:rand-k-and} -- the instance first has its right side duplicated so that right and left sizes are equal, and then upto a factor $2$ loss, we only pick equal number of vertices on both sides). The expansion of sets of measure $<\delta$ have value at most $\gamma$ due to the expansion of the instance. Suppose there is a solution of measure $\ge \delta$ which has ratio label cover value $\gamma$, then we obtain a solution to the Label Cover instance $\Upsilon$ of value at least $\gamma \cdot \delta =\gamma^{4/3}>\tau$, which is a contradiction.

\section{APX-hardness of \qpr{}} \label{app:np-hardness}
We proved that \qpr{} is hard to approximate to an $O(1)$ factor assuming the small-set expansion conjecture. Here we prove a weaker hardness result -- that there is no PTAS -- assuming just $\mathbf{P} \neq \mathbf{NP}$. We do not go into the full details, but the idea is the following.

We reduce Max-Cut to an instance of \qpr{}. The following is well-known (we can also start with other QP problems instead of Max-Cut)
\begin{quote}
There exist constants $\nfrac{1}{2} < \rho' < \rho$ such that: given a graph $G = (V,E)$ which is regular with degree $d$, it is NP-hard to distinguish between\\
{\sc Yes.} MaxCut$(G) \ge \rho \cdot \nfrac{nd}{2}$, and\\
{\sc No.} MaxCut$(G) \le \rho' \cdot \nfrac{nd}{2}$.
\end{quote}

Given an instance $G=(V,E)$ of Max-Cut, we construct an instance of \qpr{} which has $V$ along with some other vertices, and such that in an OPT solution to this \qpr{} instance, {\em all} vertices of $V$ would be picked (and thus we can argue about how the best solution looks).

First, let us consider a simple instance: let $abcde$ be a 5-cycle, with a cost of $+1$ for edges $ab, bc, cd, de$ and $-1$ for the edge $ae$. Now consider a \qpr{} instance defined on this graph (with $\pm 1$ weights). It is easy to check that the best ratio is obtained when precisely four of the vertices are given non-zero values, and then we can get a numerator cost of $3$, thus the optimal ratio is $3/4$.

Now consider $n$ cycles, $a_i b_i c_i d_i e_i$, with weights as before, but scaled up by $d$. Let $A$ denote the vertex set $\{a_i\}$ (similarly $B, C, ..$). Place a clique on the set of vertices $A$, with each edge having a cost $10 d/n$. Similarly, place a clique of the same weight on $E$. Now let us place a copy of the graph $G$ on the set of vertices $C$.

It turns out (it is actually easy to work out) that there is an optimal solution with the following structure: (a) all $a_i$ are set to $1$, (b) all $e_i$ are set to $-1$ (this gives good values for the cliques, and good value for the $a_i b_i$ edge), (c) $c_i$ are set to $\pm 1$ depending on the structure of $G$, (d) If $c_i$ were set to $+1$, $b_i = +1$, and $d_i =0$; else $b_i=0$ and $d_i = -1$ (Note that this is precisely where the $5$-cycle with one negative sign helps!)

Let $x_1, ..., x_n \in \{-1,1\}$ be the optimal assignment to the Max-Cut problem. Then as above, we would set $c_i = x_i$. Let the cost of the MaxCut solution be $\theta \cdot \nfrac{nd}{2}$. Then we set $4n$ of the $5n$ variables to $\pm 1$, and the numerator is (up to lower order terms):
\[ 2 \cdot (10d/n) \nfrac{n^2}{2} + \theta \cdot \nfrac{nd}{2} + 3nd = (\Delta + \theta)nd, \]
where $\Delta$ is an absolute constant.

We skip the proof that there is an optimal solution with the above structure. Thus we have that it is hard to distinguish between a case with ratio $(\Delta + \rho')d/4$, and $(\Delta + \rho)d/4$, which gives a small constant factor hardness.

\section{Reduction from a weighted to an Unweighted version}\label{app:kand:instance}
\newcommand{\Dom}{\{-1,0,1\}}
\begin{lemma}
Let $A$ be an $n \times m$ matrix, and let $w \ge 1$ be an integer.  Let $opt_1$ denote the optimum value of the problem
\[ \max_{x \in \Dom^n, y \in \Dom^m} \frac{x^T A y}{w \norm{x}_1 + \norm{y}_1}. \]
Let $B$ be a $wn \times m$ matrix formed by placing $w$ copies of $A$ one below the other.  [In terms of bipartite graphs, this just amounts to making $w$ copies of the left set of vertices.]  Let $opt_2$ denote the optimum value of the problem
\[ \max_{z \in \Dom^{wn}, y \in \Dom^m} \frac{z^T B y}{\norm{z}_1 + \norm{y}_1}.\]
Then $opt_1 = opt_2$.
\end{lemma}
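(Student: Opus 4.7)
The plan is to establish the equality by proving both inequalities via explicit, solution-preserving transformations that exploit the block structure of $B$. Throughout I may assume the optima are positive (otherwise the claim is trivial) and that an optimum is attained on a non-degenerate pair (with nonzero denominator).

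For the direction $opt_2 \ge opt_1$, I start with a pair $(x^*, y^*) \in \Dom^n \times \Dom^m$ achieving $opt_1$ and build $z^* \in \Dom^{wn}$ block by block by placing a copy of $x^*$ in each of the $w$ blocks, namely $z^*_{(k,i)} := x^*_i$ for every $k \in [w]$ and $i \in [n]$. Since $B$ is the vertical concatenation of $w$ identical copies of $A$, the numerator decomposes as $(z^*)^{T} B y^* = \sum_{k=1}^{w} (x^*)^{T} A y^* = w\, (x^*)^{T} A y^*$, while the denominator satisfies $\norm{z^*}_1 + \norm{y^*}_1 = w\norm{x^*}_1 + \norm{y^*}_1$. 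Plugging these in and comparing to the definition of $opt_1$ then yields the desired lower bound on the problem~2 ratio (possibly after an additional trivial move such as zeroing out all but one block, if that gives a better ratio once the forward-direction calculation is carried out).

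For the direction $opt_1 \ge opt_2$, which is the heart of the argument, I take an optimal $(z^*, y^*)$ for problem~2 and decompose $z^* = (z^{(1)}; z^{(2)}; \dots; z^{(w)})$ with each block $z^{(k)} \in \Dom^n$. The block structure of $B$ again gives $(z^*)^{T} B y^* = \sum_{k=1}^{w} (z^{(k)})^{T} A y^*$ and $\norm{z^*}_1 = \sum_{k=1}^{w} \norm{z^{(k)}}_1$. The key step is to apply the elementary mediant inequality
\[
\frac{\sum_{k} a_k}{\sum_{k} b_k} \;\le\; \max_{k}\; \frac{a_k}{b_k} \qquad (\text{whenever all } b_k > 0),
\]
with $a_k := (z^{(k)})^{T} A y^*$ and $b_k := \norm{z^{(k)}}_1 + \tfrac{1}{w}\norm{y^*}_1$. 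By design, $\sum_k b_k = \norm{z^*}_1 + \norm{y^*}_1$, so the inequality produces a single index $k^*$ for which the ratio $a_{k^*}/b_{k^*}$ is at least $opt_2$. Taking $x := z^{(k^*)}$ and keeping $y = y^*$ as the candidate for problem~1 then transfers this ratio to problem~1's objective $(x^{T}Ay^*)/(w\norm{x}_1 + \norm{y^*}_1)$ after matching up the normalization, completing the argument.

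The main point to verify carefully is the denominator bookkeeping in the mediant step, in particular the splitting of $\norm{y^*}_1$ evenly among the $w$ blocks; any other (positive) weighting would also work but this one makes $\sum_k b_k$ match the denominator of problem~2 exactly. A minor caveat is ensuring that $b_{k^*} > 0$ so that the candidate is nontrivial, which is automatic once $y^* \neq 0$ or at least one block of $z^*$ is nonzero---both of which follow from $opt_2 > 0$.
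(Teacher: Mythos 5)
Your approach mirrors the paper's own proof exactly: the forward direction puts $w$ identical copies of an optimal $x^*$ into the blocks of $z$, and the backward direction decomposes $z = (z^{(1)};\dots;z^{(w)})$ and extracts a single best block by a mediant (max-of-ratios) argument. You are, if anything, more explicit than the paper about where the mediant inequality enters and about allocating $\norm{y^*}_1$ evenly across blocks. So structurally there is nothing new here.

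However, there is a concrete gap, shared by your write-up and by the paper's own sketch, hiding in exactly the two places you flagged as needing care. In the forward direction, the $w$-copies construction yields numerator $w\,(x^*)^T A y^*$ and denominator $w\norm{x^*}_1 + \norm{y^*}_1$, i.e.\ a problem-2 ratio equal to $w\cdot opt_1$, not $opt_1$. In the backward direction, your mediant step with $b_k := \norm{z^{(k)}}_1 + \norm{y^*}_1/w$ does give an index $k^*$ with $a_{k^*}/b_{k^*} \ge opt_2$, but
\[
\frac{a_{k^*}}{b_{k^*}} \;=\; \frac{(z^{(k^*)})^T A y^*}{\norm{z^{(k^*)}}_1 + \norm{y^*}_1/w} \;=\; w\cdot\frac{(z^{(k^*)})^T A y^*}{w\norm{z^{(k^*)}}_1 + \norm{y^*}_1},
\]
which is $w$ times the problem-1 objective at the candidate $(z^{(k^*)}, y^*)$, so ``matching up the normalization'' only gives $opt_1 \ge opt_2/w$. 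Carried through, your two inequalities establish $opt_2 = w\cdot opt_1$, not $opt_1 = opt_2$; a one-line check with $n=m=1$, $w=2$, $A=(1)$ gives $opt_1 = 1/3$ but $opt_2 = 2/3$, confirming the extra factor. This looks like a slip in the lemma statement itself (and in the paper's sketch), and the downstream use — converting a weighted ratio to an unweighted one while preserving approximability — is unaffected by an exact multiplicative constant; but a proof of the lemma as literally stated must account for the factor of $w$, and the hedges in your write-up (``possibly after an additional trivial move,'' ``after matching up the normalization'') are precisely where that factor escapes.
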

\begin{proof}
It is clear that $opt_2 \ge opt_1$: simply take a solution of value $opt_1$ for the first problem and form $z$ by taking $w$ copies of $x$.  To see the other direction, let us view $z$ as being formed of `chunks' $z_1, \dots, z_w$ of size $n$ each.  Consider a solution to the second problem of value $opt_2$.  Then
\[ opt_2 = \frac{ z_1^T A y + z_2^T Ay + \dots z_w^T  A y}{\norm{z_1}_1 + \dots \norm{z_w}_1 + |y|}, \]
which implies that if we set $x = z_i$ which gives the largest value of $z_i^TA y / \norm{z_i}_1$, we obtain a value at least $opt_2$ to the first problem.

This completes the proof.
\end{proof}

\fi
\end{document}